\newcommand\reallywidehat[1]{%
\savestack{\tmpbox}{\stretchto{%
  \scaleto{%
    \scalerel*[\widthof{\ensuremath{#1}}]{\kern-.6pt\bigwedge\kern-.6pt}%
    {\rule[-\textheight/2]{1ex}{\textheight}}
  }{\textheight}%
}{0.5ex}}%
\stackon[1pt]{#1}{\tmpbox}%
}
\numberwithin{equation}{section}
\numberwithin{table}{section}
\theoremstyle{plain}
\theoremstyle{definition}
\numberwithin{equation}{section}  
\newtheorem{defn}{Definition}[section]
\newtheorem{exa}[defn]{Example}
\newtheorem{rem}[defn]{Remark}
\newtheorem{thm}[defn]{Theorem}
\newtheorem{prop}[defn]{Proposition}
\newtheorem{lem}[defn]{Lemma}
\newcommand{\G}{\mathbb{G}}
\newcommand{\h}{\mathbb{K}}
\newcommand{\K}{\mathbb{K}}
\newcommand{\R}{\mathbb{R}}
\newcommand{\Q}{\mathbb{Q}}
\newcommand{\N}{\mathbb{N}}
\newcommand{\PP}{\mathbb{P}}
\newcommand{\E}{\mathbb{E}}
\newcommand{\cB}{\mathcal{B}}
\newcommand{\cC}{\mathcal{C}}
\newcommand{\cD}{\mathcal{D}}
\newcommand{\cF}{\mathcal{F}}
\newcommand{\cH}{\mathcal{H}}
\newcommand{\cI}{\mathcal{I}}
\newcommand{\cK}{\mathcal{K}}
\newcommand{\cM}{\mathcal{M}}
\newcommand{\cQ}{\mathcal{Q}}
\newcommand{\cS}{\mathcal{S}}
\newcommand{\one}{ 1 \hspace{-3pt} \mathrm{l}} 
\newcommand{\eqd}{\stackrel{\mathrm{d}}=}
\newcommand{\de}{\mathsf{\,d}}
\newcommand{\diag}{\mathop{\mathrm{diag}}\nolimits}
\DeclareFontFamily{U}{mathx}{\hyphenchar\font45}
\DeclareFontShape{U}{mathx}{m}{n}{
      <5> <6> <7> <8> <9> <10>
      <10.95> <12> <14.4> <17.28> <20.74> <24.88>
      mathx10
      }{}
\DeclareSymbolFont{mathx}{U}{mathx}{m}{n}
\DeclareMathSymbol{\bigtimes}{1}{mathx}{"91}
\title[Improved Multi-Asset Price Bounds]{Improved Robust Price Bounds for Multi-Asset Derivatives under Market-Implied Dependence Information}
\author[Jonathan Ansari, Eva L\"{u}tkebohmert, Ariel Neufeld, Julian Sester]{Jonathan Ansari$^{1}$, Eva L\"{u}tkebohmert$^{1}$
, Ariel Neufeld$^{2}$, Julian Sester$^{3}$
}
\begin{document}
\maketitle

\begin{center}
\normalsize{\today} \\ \vspace{0.2cm}
\small\textit{$^{1}$Department of Quantitative Finance,\\ Institute for Economic Research, University of Freiburg,\\ Rempartstr. 16,
79098 Freiburg, Germany.\\[1mm]
$^{2}$NTU Singapore, Division of Mathematical Sciences,\\ 21 Nanyang Link, Singapore 637371. \\ [1mm]
$^{3}$ National University of Singapore, Department of Mathematics,\\ 21 Lower Kent Ridge Road, 119077.                                                                                                                             }
\end{center}

\begin{abstract}
We show how inter-asset dependence information derived from market prices of options can lead to improved model-free price bounds for multi-asset derivatives.  Depending on the type of the traded option,  we either extract correlation information or we derive restrictions on the set of admissible copulas that capture the inter-asset dependencies. To compute the resultant price bounds for some multi-asset options of interest, we apply a modified martingale optimal transport approach. Several examples based on simulated and real market data illustrate the improvement of the obtained price bounds and thus provide evidence for the relevance and tractability of our approach.\\[0mm]


\noindent
{\bf Keywords:} 
multi-asset options, model-free pricing, quasi-copulas, correlation, dependence information
\end{abstract}

\section{Introduction}

%

In recent years model-free valuation approaches for exotic derivatives attracted enormous attention. In such approaches the aim is to determine arbitrage-free price bounds for an exotic, and therefore not liquidly traded, option $\Phi$ while imposing no assumptions on the dynamics or probability distributions of a potential underlying stochastic model of the financial market. Put differently, to price $\Phi$ one allows for all arbitrage-free pricing models and associated pricing measures $\mathbb{Q}$, and computes the extreme prices for $\Phi$ as minimial and maximal expectations $\E_\Q[\Phi]$ among these models, resulting in a range of arbitrage-free prices (see e.g. \cite{acciaio2016model, beiglbock2013model,burzoni2017model, cheridito2021martingale, cheridito2017duality,davishobson2007}).  
In this way, the model-free pricing approach respects the not-quantifiable \emph{Knightian uncertainty} \cite{knight1921risk} of having chosen a wrong financial model for option valuation, which is particularly important in periods in which financial models calibrated to historical data do not depict the real behaviour of the market appropriately, for instance due to unforeseen financial crises.

However,
a major drawback of the model-free approach evidently is that the resultant range of possible arbitrage-free prices for $\Phi$ turns out to be too large and therefore the usefulness of the original model-free pricing approach in practice is limited, see also \cite{dolinsky2018super, neufeld2018buy}.
To decrease the range of possible arbitrage-free prices, one follows an inverse approach by inferring information from the market and then reducing the set of admissible models to those models that are consistent with the considered information. This market information usually is related to market beliefs (\cite{bartl2019duality,cheridito2017duality,hou2018robust,neufeld2021model}), or to the prices of liquidly traded options either directly (\cite{ acciaio2016model, brown2001robust, burzoni2017model, cheridito2017duality, hobson1998robust, hobson2012model,neufeld2020model}) or indirectly through the marginal distributions that are derived from liquidly traded options by using the Breeden--Litzenberger result (\cite{breeden1978prices, neufeld2022numerical, talponen2014note}). The latter case relates to the so called martingale optimal transport problem (\cite{beiglbock2013model,beiglbock2016problem, beiglbock2017complete,cheridito2021martingale, dolinsky2014martingale, guo2019computational,guo2021path,guo2019local, henry2017model,liu2019compactness,tan2013optimal} to name but a few).

In this paper, we combine copula theory and martingale optimal transport to construct improved price bounds for multi-asset derivatives. To this end, we first utilize the well-known relationship between prices of certain derivatives depending on multiple underlying assets at a single time point and expectation operators defined in dependence of copulas and quasi-copulas (see \cite{Ansari-2021, Bernard-2019, Lux-2017,papapantoleon2020detection, Tankov-2011}). If the payoff function of the derivative fulfils certain monotonicity properties ($\Delta$-monotonicity, $\Delta$-antitonicity, or supermodularity), then the extreme expectations can be associated to so-called Fr\'{e}chet-bounds (see, for instance, \cite{Nelsen-2006,Rueschendorf-1980,Rueschendorf-1981}) which then represent model-free price bounds for these derivatives. By following the approaches pursued in  \cite{bartl2017marginal,Lux-2017,papapantoleon2020detection,Tankov-2011} one can further restrict the class of admissible (quasi-) copulas through the inclusion of additional market information which then leads to improved price bounds for the financial derivative of interest in a single period framework.  We extend this single-period model-free pricing approach relying on copula theory to a multi-period setting by connecting it with model-free pricing approaches relying on martingale optimal transport theory.

Our paper contributes to the literature on model-independent pricing in various aspects. First, we show how price information on certain derivatives, that depend on multiple assets, leads to restrictions on possible inter-asset dependencies expressed either in terms of correlations if prices of \emph{basket options} are observable, or through restrictions on the set of admissible copulas if prices of options with \(\Delta\)-monotone or \(\Delta\)-antitone payoff function are observable. These market implied dependence restrictions can be translated into linear equality and inequality constraints specifying the set of admissible pricing measures. To that end, we prove in Proposition \ref{theaddcon} a model-independent super-hedging duality result which allows to include these additional constraints and thereby adapts results in \cite{acciaio2016model, beiglbock2013model,burzoni2017model,cheridito2017duality} among others to our setting.
This contribution can be seen in line with the various approaches that were recently established to improve model-free price bounds, see~\cite{eckstein2019martingale, lutkebohmert2019tightening, neufeld2021model, sester2019robust}. For improvements in the multi-asset case we refer to the recent contributions \cite{aquino2019bounds, eckstein2019robust,neufeld2020model, papapantoleon2020detection}. In \cite{neufeld2020model}  algorithms were developed to exactly compute price bounds using market implied information in a single-period model.  While most of the mentioned approaches yield tighter price bounds mainly through restrictions on admissible pricing measures based on the distributions of single underlying securities,  our approach includes restrictions imposed on the inter-asset dependencies.

 Second, we utilize the common component dependence model approach, in which one assumes that the inter-asset dependencies of all assets with respect to a specific reference asset are known or can be derived from price information. In this situation the maximal inter-asset dependencies can no longer be described by copulas and we therefore use the concept of quasi-copulas. In Theorem \ref{theqcub}, as a main result of our paper, we characterize the supermodular ordering of upper products in common component dependence models which enables us to derive price bounds of a broad class of multi-asset derivatives with supermodular payoff functions as analytical expressions in dependence of the limiting quasi-copulas. Our result is based on an application of a multivariate integration by parts formula (see \cite{Ansari-2021}) and generalizes \cite[Theorem 1]{Ansari-Rueschendorf-2020} to quasi-copulas.
 
Finally, we provide several numerical examples based on simulated and real data to illustrate the significant improvement of price bounds when inter-asset dependencies are taken into account. More specifically,  we show in many relevant cases how upper and lower price bounds can be substantially tightened when the set of admissible pricing measures is reduced due to market-implied dependencies. Note that in the common component dependence model approach, knowledge of only a few prices written on pairs of assets can already considerably improve the price bounds for options written on several underlyings. \\

The remainder of the paper is as follows. In Section~\ref{section_main_results} we present the underlying setting and derive an adjusted model-free pricing-hedging duality. Section~\ref{sec dependence modelling} introduces the concept of copulas and quasi-copulas and the most important associated results. In Section~\ref{sec improved price bounds} we explain how we can use price information of traded derivatives to derive restrictions on the set of resulting pricing measures and compatible copulas. In Section~\ref{section_numerics} we provide several examples illustrating how model-free price bounds can be computed within our approach and how existing conventional price bounds can be improved. The proofs of all mathematical statements are provided in Section~\ref{section_proofs}. 


\section{Setting and Duality Result}\label{section_main_results}

The underlying problem of the present article is a model-independent approach to the pricing of financial derivatives depending on several assets.
At time $t_0\in \R$, we consider a financial market with $d\in \N \cap [2,\infty)$ securities with non-negative values $S_0^1,\dots,S_0^d\in \R_+$, and we denote by  $S:=(S_{t_i}^k)^{k=1,\dots,d}_{i=1,\dots,n}$ their future values at times $t_1<t_2<\dots<t_n$ for $n\in \N$. We model $S$ by the canonical process on $(\R_+^{n d},\mathcal{B}(\R_+^{n d}))$, where $\mathcal{B}(\R_+^{n d})$ denotes the Borel-$\sigma$-algebra on $\R_+^{n d}$, i.e., the components of $S$ are defined via
$
S_{t_i}^k:(x_{1}^1,\dots,x_{n}^d) \mapsto x_i^k.
$
For simplicity, we normalize interest rates to zero and assume absence of dividends. This means, $S_{t_i}^k$ denotes the price of the $k$-th security at time $t_i$.
Further, we fix some payoff function $c:\R_+^{n d} \rightarrow \R$ of a financial derivative depending on $S$. 
Our goal is to calculate an arbitrage-free price interval for $c$ in a model-independent way, i.e., by using only information that is implied by market prices without imposing any assumptions on the dynamics or joint distributions of $S$. Therefore, we proceed as follows to define our set of pricing measures.
\begin{enumerate}
\item[(i)] First, we observe for all $k=1,\dots,d$, $i=1,\dots,n$ prices of European call options written on the $k$-th security maturing at $t_i$ for a continuum of strikes. We refer to such options as \textit{liquidly traded} options. According to \cite{breeden1978prices} we can then infer the one-dimensional risk-neutral marginal distributions \(\mu_i^k\) of \(S_{t_i}^k\) from this data for all $i,k$. This means for all $i,k$ and for any pricing measure $\Q$ that we have $\Q \circ {S_{t_i}^k}^{-1}=\mu_i^k$, where  $\mu_i^k$ has mean $S_0^k \in \R_+$.
Denote for each such $\mu=(\mu_i^k)_{1\leq i \leq n}^{1\leq k \leq d}$ by 
$$
\Pi(\mu):=\left\{\Q\in \mathcal{P}(\R_+^{nd})~\middle|~\Q \circ {S_{t_i}^k}^{-1}=\mu_i^k \text{ for all } i,k \right\}
$$
the set of transport plans that consists of all Borel probability measures on \(\R_+^{nd}\), denoted by $\mathcal{P}(\R_+^{nd})$, with univariate marginals \(\mu_1^1,\ldots,\mu_n^d\) having finite first moments equal to $S_0^1,\dots,S_0^d$. Further, we denote by $F_i^k(\cdot) =  \int_{-\infty}^ \cdot  \,\de \mu_i^k$ the cumulative distribution function of $\mu_i^k$.
\item[(ii)] Moreover, to ensure absence of model-independent-arbitrage\footnote{in the sense of \cite[Definition 1.2.]{acciaio2016model}}, we assume that for every pricing measure $\Q$ the martingale property
\begin{equation}\label{eq_martingaleprop}
\E_\Q[S_{t_i}|S_{t_j},\dots,S_{t_1}]=S_{t_j} ~~~\Q\text{-a.s. and for all } t_j \leq t_i
\end{equation}
holds true, where \((S_{t_i})_{0\leq i\leq n}\) is the \(d\)-variate process with components \(S_{t_i}=(S_{t_i}^{k})^{k=1,\ldots,d}\,\) for $i=0,1,\dots,n$.
%
According to a straightforward extension of \cite[Lemma 2.3]{beiglbock2013model}, the equality in \eqref{eq_martingaleprop} may be rewritten as 
$
\int_{\R^{nd}_+} \delta(x_1^1,\dots,x_j^d)(x_{j+1}^k-x_j^k) \,\de\Q(x_1^1,\dots,x_n^d)=0 
$
for all $k=1,\dots,d$, $j=0,\dots,n-1$ and $\delta \in C_b(\R^{jd}_+)$, which is the class of continuous and bounded functions on \(\R_+^{jd}\,.\)
We denote by $\mathcal{M}(\mu)\subset \Pi(\mu) \subset \mathcal{P}(\R_+^{nd})$ the set of martingale measures on $\R^{nd}_+$ with fixed univariate marginal distributions $\mu = (\mu_i^k)_{1\leq i \leq n}^{1\leq k \leq d}$.
\item[(iii)] Besides the marginal distributions and the martingale property, we impose additional linear constraints that are implied by observations on the market. These constraints additionally restrict the dependence structure of the underlying assets $S$. More precisely, we consider linear equality constraints of the form 
\begin{equation}\label{eq_eq_constraints}
\E_\Q[f_i^{\operatorname{eq}}(S)]=K_i^{\operatorname{eq}}
\end{equation}
for problem-tailored Borel-measurable functions $f_i^{\operatorname{eq}} :\R_+^{n d} \rightarrow \R$ and $K_i^{\operatorname{eq}} \in \R$ with $i$ in some index set $\mathcal{I^{\operatorname{eq}}}$. We will adjust the choices of $f_i^{\operatorname{eq}}$ to the specific problems. Additionally, we implement inequality constraints of the form 
\begin{equation}\label{eq_ineq_constraints}
\E_\Q[f_i^{\operatorname{ineq}}(S)]\leq K_i^{\operatorname{ineq}}
\end{equation}
for Borel-measurable $f_i^{\operatorname{ineq}} :\R_+^{n d} \rightarrow \R$, $K_i^{\operatorname{ineq}} \in \R$, and $i \in \mathcal{I}^{\operatorname{ineq}}$.
\end{enumerate}
The set of measures which fulfil these additional constraints is denoted by 
\begin{equation}\label{eq_restriction}
\begin{aligned}
\mathcal{M}^{\mathrm{lin}}_{f_i^{\operatorname{eq}},K_i^{\operatorname{eq}},\mathcal{I^{\operatorname{eq}}},\atop f_i^{\operatorname{ineq}},K_i^{\operatorname{ineq}},\mathcal{I^{\operatorname{ineq}}}}(\mu):=\mathcal{M}(\mu) &\cap \left\{\Q\in \mathcal{P}(\R_+^{nd})~\middle|~\E_\Q[f_i^{\operatorname{eq}}(S)]=K_i^{\operatorname{eq}} \text{ for all } i\in \mathcal{I}^{\operatorname{eq}}\right\}\\
&\cap \left\{\Q \in \mathcal{P}(\R_+^{nd})~\middle|~\E_\Q[f_i^{\operatorname{ineq}}(S)]\leq K_i^{\operatorname{ineq}} \text{ for all } i\in \mathcal{I}^{\operatorname{ineq}}\right\}\,.
\end{aligned}
\end{equation}
For the sake of readability we abbreviate this set by $\mathcal{M}^{\mathrm{lin}}$ and consider it as our set of pricing measures.

An arbitrage-free and model-independent upper price bound for a payoff $c$ under the above mentioned equality and inequality constraints can then be obtained by pursuing two different approaches. First, in the primal approach, we consider the supremum of the expected values among all martingale measures consistent with available price information on options and the imposed equality and inequality constraints given by
\begin{equation}\label{equ primal problem M_lin}
\overline{P}_{\mathcal{M}^{\mathrm{lin}}}:=\sup_{\Q\in \mathcal{M}^{\mathrm{lin}}} \E_\Q [c(S)].
\end{equation}
A corresponding lower price bound \(\underline{P}_{\mathcal{M}^{\mathrm{lin}}}\) can be obtained as the infimum over all measures $\Q\in\mathcal{M}^{\mathrm{lin}}$.

Second, in the dual approach, the price bounds of the derivative $c$ can be calculated by using trading strategies instead of pricing measures. For the upper bound we consider the problem of finding the cheapest super-replication price of $c$. More precisely, we consider strategies $\Psi_{(u_i^k),(\delta_i^k),(\alpha_i),(\beta_i)}:\R_{+}^{nd} \rightarrow \R$ of the form
\begin{equation}\label{eq_dual_strats1}
\begin{aligned}
\Psi_{(u_i^k),(\delta_i^k),(\alpha_i),(\beta_i)}(x_1^1,\dots,x_n^d):=\sum_{k=1}^d\sum_{i=1}^n u_i^k(x_i^k)&+\sum_{i=1}^{n-1}\sum_{k=1}^d\delta_i^k(x_1^1,\dots,x_i^d) (x_{i+1}^k-x_i^k)\\
&\hspace{-5cm}+\sum_{i\in \mathcal{I}^{\operatorname{eq}}}\alpha_i\left(f_i^{\operatorname{eq}}(x_1^1,\dots,x_n^d)-K_i^{\operatorname{eq}}\right)+\sum_{i\in \mathcal{I}^{\operatorname{ineq}}}\beta_i\left(f_i^{\operatorname{ineq}}(x_1^1,\dots,x_n^d)-K_i^{\operatorname{ineq}}\right)
\end{aligned}
\end{equation}
with 
$
u_i^{k} \in \mathfrak{C}:=\bigg\{u\colon \R_+ \to \R~\bigg|~ u(x)=a+bx+\sum_{i=1}^m \lambda_i(x-d_i)_+\,, a,b,\lambda_i,d_i\in \R\,, m\in \N \bigg\}
$
and with each $\delta_{i}^k \in C_b(\R_+^{id})$, $\alpha_i \in \R$, $\beta_i \in \R_+$ such that $\alpha_i=0$, $\beta_j=0$ for all but finitely many $i\in \mathcal{I}^{\operatorname{eq}}$, $j \in \mathcal{I}^{\operatorname{ineq}}$.
This means, we consider trading strategies allowing for static positions in the European options $u_i^k$, in derivatives with payoffs $f_i^{\operatorname{eq}}$ traded for a price $K_i^{\operatorname{eq}}$ and long positions in $f_i^{\operatorname{ineq}}$ traded for a price not higher than $K_i^{\operatorname{ineq}}$. Moreover, we consider dynamic self-financing trading positions $\delta_i^k$ in the underlying securities. In line with our model-free approach, we are interested in strategies which super-replicate the payoff of the derivative pointwise, i.e., for every possible path, independent of any associated probability. We use the notation $f\geq c$ to express pointwise inequalities, i.e., $f(x) \geq c(x)$ for all $x \in \R^{nd}_+$.  The following result shows that - under mild assumptions - minimizing the prices of such super-replication strategies yields the same value as maximizing expectations w.r.t.\,measures from $\mathcal{M}^{\mathrm{lin}}$.
To this end, we set
\begin{align*}
\mathcal{S}:=\bigg\{\Psi_{(u_i^k),(\delta_i^k),(\alpha_i),(\beta_i)} ~\bigg|~ &\exists~u_i^{k} \in \mathfrak{C},\delta_{i}^k \in C_b(\R_+^{id}),\alpha_i \in \R,\beta_i \in \R_+ \text{with } \alpha_i=0, \beta_j=0\\
&\text{ for all but finitely many }i\in \mathcal{I}^{\operatorname{eq}},~ j \in \mathcal{I}^{\operatorname{ineq}}\text{s.t. } \Psi_{(u_i^k),(\delta_i^k),(\alpha_i),(\beta_i)} \geq c \bigg\},
\end{align*}
and denote by 
$
\underline{\mathcal{D}}_{\mathcal{S}}(c):=\inf_{\Psi\in \mathcal{S}}\left\{\sum_{k=1}^d \sum_{i=1}^n\E_{\mu_i^k}[u_i^k]\right\}
$
the minimal price among all super-replicating strategies \(\Psi:=\Psi_{(u_i^k),(\delta_i^k),(\alpha_i),(\beta_i)}\) for the payoff $c$. When there is no ambiguity about the payoff $c$, we abbreviate the notation as $\underline{\mathcal{D}}_{\mathcal{S}}$.
We define  for $m \in \N$ by
\begin{align*}
C_{\operatorname{lin}}(\R_+^m)&:= \left\{f\colon \R_+^m \to \R~\middle|~ f\text{ continuous},~ \sup_{(x_1,\dots,x_m)\in \R_+^m} \frac{|f(x_1,\dots,x_m)|}{1+\sum_{i=1}^m x_i}  < \infty \right\}\,,\\
L_{\operatorname{lin}}(\R_+^m)&:=\left\{f\colon \R_+^m \to \R~\middle|~ f\text{ lower semicontinuous},~\sup_{(x_1,\dots,x_m)\in \R_+^m} \frac{|f(x_1,\dots,x_m)|}{1+\sum_{i=1}^m x_i}  < \infty \right\}\,,\\
U_{\operatorname{lin}}(\R_+^m)&:=\left\{f\colon \R_+^m \to \R~\middle|~ f\text{ upper semicontinuous},~\sup_{(x_1,\dots,x_m)\in \R_+^m} \frac{|f(x_1,\dots,x_m)|}{1+\sum_{i=1}^m x_i}  < \infty \right\}
\end{align*}
the set of continuous, lower semicontinuous, and upper semicontinuous functions, respectively, with at most linear growth. We then adapt the model-independent super-hedging duality results from, e.g., \cite{acciaio2016model}, \cite{beiglbock2013model}, \cite{burzoni2017model}, \cite{cheridito2021martingale}, \cite{dolinsky2014martingale}, \cite{Rueschendorf-2019}, and \cite{zaev2015monge}, to our situation by formulating the following proposition.

\begin{prop}[Duality with additional constraints]\label{theaddcon}
\label{thm_duality_linear_constraints} ~\\ Assume that $c \in U_{\operatorname{lin}}(\R^{nd}_+)$, $f_i^{\operatorname{eq}}\in C_{\operatorname{lin}}(\R_+^{nd}),\) and \(f_j^{\operatorname{ineq}} \in L_{\operatorname{lin}}(\R_+^{nd})$ for all $i \in \mathcal{I}^{\operatorname{eq}}$, $j \in \mathcal{I}^{\operatorname{ineq}}$, and assume that $\mathcal{M}^{\mathrm{lin}}\neq \emptyset$. Then it holds 
\begin{align}\label{gendures}
\overline{P}_{\mathcal{M}^{\mathrm{lin}}}= \underline{\mathcal{D}}_{\mathcal{S}}\,.
\end{align}
Moreover, there exists \(\Q\in \mathcal{M}^{\mathrm{lin}}\) such that
\begin{align}\label{eqmaxele}
\overline{P}_{\mathcal{M}^{\mathrm{lin}}} = \E_\Q [c(S)]\,.
\end{align}
\end{prop}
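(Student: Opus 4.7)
The proof proceeds in three steps: weak duality, strong duality via Lagrangian reduction to the classical martingale optimal transport (MOT) duality, and primal attainment via weak compactness.

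\textbf{Step 1 (Weak duality, $\overline{P}_{\mathcal{M}^{\mathrm{lin}}} \leq \underline{\mathcal{D}}_{\mathcal{S}}$).} For any $\Q \in \mathcal{M}^{\mathrm{lin}}$ and any super-replicating $\Psi = \Psi_{(u_i^k),(\delta_i^k),(\alpha_i),(\beta_i)} \in \mathcal{S}$, I would integrate $\Psi \geq c$ against $\Q$ term by term. The static pieces contribute $\sum_{k,i}\E_{\mu_i^k}[u_i^k]$ by the fixed marginals from (i); each martingale-increment term $\E_\Q[\delta_i^k(S_{t_1},\dots,S_{t_i})(S_{t_{i+1}}^k - S_{t_i}^k)]$ vanishes by \eqref{eq_martingaleprop}; the equality-constraint contributions $\alpha_i(\E_\Q[f_i^{\operatorname{eq}}] - K_i^{\operatorname{eq}})$ vanish by \eqref{eq_eq_constraints}; and the inequality-constraint contributions $\beta_i(\E_\Q[f_i^{\operatorname{ineq}}] - K_i^{\operatorname{ineq}})$ are non-positive since $\beta_i \geq 0$. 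All expectations are well-defined thanks to the linear growth of $c, f_i^{\operatorname{eq}}, f_j^{\operatorname{ineq}}$, the boundedness of $\delta_i^k$, and the finite first moments of the $\mu_i^k$. Taking sup over $\Q$ and inf over $\Psi$ delivers the inequality.

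\textbf{Step 2 (Strong duality, $\overline{P}_{\mathcal{M}^{\mathrm{lin}}} \geq \underline{\mathcal{D}}_{\mathcal{S}}$).} The plan is to dualize the additional constraints by Lagrange multipliers and to reduce the problem to the classical MOT duality of \cite{beiglbock2013model,cheridito2021martingale,Rueschendorf-2019,zaev2015monge}. Introduce the Lagrangian
\begin{align*}
L(\Q,\alpha,\beta) := \E_\Q[c(S)] &- \sum_{i\in \mathcal{I}^{\operatorname{eq}}}\alpha_i\bigl(\E_\Q[f_i^{\operatorname{eq}}(S)]-K_i^{\operatorname{eq}}\bigr) \\
&- \sum_{j\in \mathcal{I}^{\operatorname{ineq}}}\beta_j\bigl(\E_\Q[f_j^{\operatorname{ineq}}(S)]-K_j^{\operatorname{ineq}}\bigr),
\end{align*}
restricted to finitely supported $\alpha \in \R^{\mathcal{I}^{\operatorname{eq}}}$ and $\beta \in \R_+^{\mathcal{I}^{\operatorname{ineq}}}$. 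Direct inspection shows $\inf_{\alpha,\beta\geq 0} L(\Q,\alpha,\beta) = \E_\Q[c]$ if $\Q \in \mathcal{M}^{\mathrm{lin}}$ and $-\infty$ otherwise, so $\overline{P}_{\mathcal{M}^{\mathrm{lin}}} = \sup_{\Q \in \mathcal{M}(\mu)} \inf_{\alpha,\beta \geq 0} L(\Q,\alpha,\beta)$. A Sion-type minimax swap --- justified by the convexity of $\mathcal{M}(\mu)$ and of the $(\alpha,\beta)$-domain, the bilinearity of $L$, and the compactness of $\mathcal{M}(\mu)$ in the topology induced by integration against functions in $C_{\operatorname{lin}}$ --- yields
\begin{equation*}
\overline{P}_{\mathcal{M}^{\mathrm{lin}}} = \inf_{\alpha,\beta \geq 0}\!\left[\sum_i \alpha_i K_i^{\operatorname{eq}} + \sum_j \beta_j K_j^{\operatorname{ineq}} + \sup_{\Q \in \mathcal{M}(\mu)}\!\E_\Q\!\Bigl[c - \sum_i \alpha_i f_i^{\operatorname{eq}} - \sum_j \beta_j f_j^{\operatorname{ineq}}\Bigr]\right].
\end{equation*}
For each fixed $(\alpha,\beta)$ with $\beta \geq 0$, the integrand is upper semicontinuous with linear growth (since $c \in U_{\operatorname{lin}}$, $f_i^{\operatorname{eq}} \in C_{\operatorname{lin}}$, and $-\beta_j f_j^{\operatorname{ineq}} \in U_{\operatorname{lin}}$). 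Invoking the classical MOT duality for such payoffs supplies $u_i^k \in \mathfrak{C}$ and $\delta_i^k \in C_b(\R_+^{id})$ satisfying $\sum_{k,i} u_i^k(x_i^k) + \sum_{k,i} \delta_i^k(\cdot)(x_{i+1}^k - x_i^k) \geq c - \sum_i \alpha_i f_i^{\operatorname{eq}} - \sum_j \beta_j f_j^{\operatorname{ineq}}$ up to $\varepsilon$, with static cost $\sum_{k,i} \E_{\mu_i^k}[u_i^k]$ approximating the inner supremum. Rearranging gives $\Psi_{u,\delta,\alpha,\beta} \geq c$ with cost $\sum_{k,i} \E_{\mu_i^k}[u_i^k]$, so $\underline{\mathcal{D}}_{\mathcal{S}} \leq \overline{P}_{\mathcal{M}^{\mathrm{lin}}} + \varepsilon$, closing the gap as $\varepsilon \downarrow 0$.

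\textbf{Step 3 (Attainment of \eqref{eqmaxele}).} Since each $\mu_i^k$ has finite first moment, $\mathcal{M}(\mu)$ is tight (Prokhorov) and every weakly convergent sequence in $\mathcal{M}(\mu)$ is uniformly integrable against any function of at most linear growth. Pick a maximizing sequence $\Q_n \in \mathcal{M}^{\mathrm{lin}}$ and extract a weak subsequential limit $\Q^* \in \mathcal{P}(\R_+^{nd})$. The martingale property passes to weak limits by a standard argument (test against products of bounded continuous functions); continuity of $f_i^{\operatorname{eq}}$ with linear growth and UI yield $\E_{\Q^*}[f_i^{\operatorname{eq}}] = \lim_n \E_{\Q_n}[f_i^{\operatorname{eq}}] = K_i^{\operatorname{eq}}$; lower semicontinuity of $f_j^{\operatorname{ineq}}$ with linear growth and UI yields $\E_{\Q^*}[f_j^{\operatorname{ineq}}] \leq \liminf_n \E_{\Q_n}[f_j^{\operatorname{ineq}}] \leq K_j^{\operatorname{ineq}}$; upper semicontinuity of $c$ with linear growth and UI yields $\E_{\Q^*}[c] \geq \limsup_n \E_{\Q_n}[c] = \overline{P}_{\mathcal{M}^{\mathrm{lin}}}$. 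Hence $\Q^* \in \mathcal{M}^{\mathrm{lin}}$ and realizes the supremum.

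The main obstacle is Step 2: one must verify the hypotheses of a minimax theorem on the unbounded set of finitely supported multipliers (with $\beta \geq 0$) and on the non-locally-compact set $\mathcal{M}(\mu)$ in the right topology, and then confirm that the classical MOT duality applies to upper semicontinuous payoffs with linear growth. The latter is where the richness of the call-spread class $\mathfrak{C}$ becomes essential, since one needs the static-hedging side of the classical duality to be realized within $\mathfrak{C}$ rather than within the much larger class of all $\mu_i^k$-integrable concave or convex functions.
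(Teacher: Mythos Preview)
Your proposal is correct and the overall architecture (weak duality, minimax swap, compactness for attainment) matches the paper, but Step~2 is organized differently. You keep the martingale constraint on the primal side, apply a Sion-type minimax over $\mathcal{M}(\mu)$ to dualize only the extra linear constraints, and then invoke a black-box multi-asset MOT duality for the inner problem. The paper instead dualizes \emph{all} constraints at once --- including the martingale property via the $\delta_i^k$ terms --- and works over the larger transport polytope $\Pi(\mu)$: it first applies plain Monge--Kantorovich duality (\cite[Proposition~2.1]{beiglbock2013model}) to replace the infimum over $u_i^k\in\mathfrak{C}$ by $\sup_{\pi\in\Pi(\mu)}\int(c-\Phi_{(\delta,\alpha,\beta)})\,d\pi$, then applies the Ky--Fan minimax theorem on $\Pi(\mu)\times\{(\delta,\alpha,\beta)\}$, and finally identifies the resulting $\sup_{\pi}\inf_{(\delta,\alpha,\beta)}$ with $\overline{P}_{\mathcal{M}^{\mathrm{lin}}}$ by the usual $-\infty$ penalty argument. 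The payoff of the paper's route is precisely the point you flag at the end: Kantorovich duality with potentials in the call-spread class $\mathfrak{C}$ is a primitive ingredient, so one never needs a separate multi-asset, multi-period MOT duality theorem with hedging in $\mathfrak{C}$ as a hypothesis. Your two-layer decomposition is equally valid once such a theorem is in hand, and is perhaps conceptually cleaner; the paper's one-layer version is more self-contained. Step~3 is essentially identical in both: the paper also shows $\mathcal{M}^{\mathrm{lin}}$ is a closed subset of the weakly compact set $\mathcal{M}(\mu)$ using Lemma~\ref{lemcont} for the semicontinuity of $\Q\mapsto\E_\Q[f]$ under linear growth.
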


\begin{rem}
\label{rem_differences_martingale_property}
\begin{enumerate}~
\item[(a)] The set $\mathcal{M}(\mu)$ of martingale measures with fixed univariate marginals is non-empty if and only if the $d$-dimensional marginals $(\mu_i^1,
\dots,\mu_i^d)_{i=1,\dots,n}$ increase in convex order\footnote{A finite set of probability measures $\{\PP_1,\cdots,\PP_n\}$ on $\R^d$ is said to increase in convex order if $\int_{\R^d} f(x) \de \PP_i(x) \leq \int_{\R^d} f(x) \de \PP_{i+1}(x)$ for all convex functions $f:\R^d \rightarrow \R$ and for all $i=1,\dots,n-1$ such that the integrals are finite.}, see \cite{strassen1965existence}. Thus, the latter condition is necessary for the non-emptiness of $\mathcal{M}^{\mathrm{lin}}(\mu)$.
To derive sufficient conditions for the non-emptiness of the set $\mathcal{M}^{\mathrm{lin}}$ we proceed as follows in the case that both $\mathcal{I}^{\operatorname{ineq}}$ and $\mathcal{I}^{\operatorname{eq}}$ are at most countable. 
Assume $\mathcal{M}(\mu) \neq \emptyset$ and w.l.o.g. $\mathcal{I}^{\operatorname{eq}}=\N$, $\mathcal{I}^{\operatorname{ineq}}=\N$. We observe that if 
$
\inf_{\Q \in \mathcal{M}(\mu)}\E_\Q[f_1^{\operatorname{eq}}]\leq K_1^{\operatorname{eq}}\leq\sup_{\Q \in \mathcal{M}(\mu)} \E_\Q[f_1^{\operatorname{eq}}],
$
then we get as a convex combination of the minimal and maximal measure the existence of a measure $\Q \in \mathcal{M}(\mu)$ with $\E_\Q[f_1^{\operatorname{eq}}]=K_1^{\operatorname{eq}}$. We proceed inductively, and see that if for all $i=2,3,\dots,$ we have
\[
\inf_{\Q \in \mathcal{M}^{\mathrm{lin}}_{f_i^{\operatorname{eq}},K_i^{\operatorname{eq}},\{1,\dots,i-1\},\atop f_i^{\operatorname{ineq}},K_i^{\operatorname{ineq}},\emptyset}(\mu)} \E_\Q[f_i^{\operatorname{eq}}] 
 \leq K_i^{\operatorname{eq}} \leq \sup_{\Q \in \mathcal{M}^{\mathrm{lin}}_{f_i^{\operatorname{eq}},K_i^{\operatorname{eq}},\{1,\dots,i-1\},\atop f_i^{\operatorname{ineq}},K_i^{\operatorname{ineq}},\emptyset}(\mu)} \E_\Q[f_i^{\operatorname{eq}}],
\]
then it holds that$\mathcal{M}^{\mathrm{lin}}_{f_i^{\operatorname{eq}},K_i^{\operatorname{eq}},
\mathcal{I}^{\operatorname{eq}},\atop f_i^{\operatorname{ineq}},K_i^{\operatorname{ineq}},\emptyset}(\mu) \neq \emptyset$.
In the same way we can check consistency of the inequality constraints. Thus, if we have 
$
\sup_{\Q \in \mathcal{M}^{\mathrm{lin}}_{f_i^{\operatorname{eq}},K_i^{\operatorname{eq}},\mathcal{I}^{\operatorname{eq}},\atop f_i^{\operatorname{ineq}},K_i^{\operatorname{ineq}},\emptyset}(\mu)} \E_\Q[f_1^{\operatorname{ineq}}]\leq K_1^{\operatorname{ineq}},
$
and for all $i=2,3,\dots,$ that
$
\sup_{\Q \in \mathcal{M}^{\mathrm{lin}}_{f_i^{\operatorname{eq}},K_i^{\operatorname{eq}},\mathcal{I}^{\operatorname{eq}},\atop f_i^{\operatorname{ineq}},K_i^{\operatorname{ineq}},\{1,\dots,i-1\}}(\mu)} \E_\Q[f_i^{\operatorname{ineq}}]\leq  K_i^{\operatorname{ineq}},
$
then $\mathcal{M}^{\operatorname{lin}}\neq \emptyset$ holds.
\item[(b)] \label{rem_differences_martingale_propertyB} In the one-period case, i.e., if \(n=1\,,\) the martingale property \eqref{eq_martingaleprop} only constrains the marginal distributions of \(S_{t_1}^k\) for \(k=1,\ldots,d\,\) but not the dependence structure of \((S_{t_1}^1,\ldots,S_{t_1}^d)\,,\) because \eqref{eq_martingaleprop} simplifies to the mean constraint $\E_\Q[S_{t_1}^k]=S_0^k$ for some deterministic values $S_0^k \in \R_+$ for $k =1,\dots,d$ representing today's spot values of the respective securities.
\end{enumerate}
\end{rem}
A major contribution of this paper is the specification of situations, where the (countable) choices of $f_i^{\operatorname{eq}},~K_i^{\operatorname{eq}},~\mathcal{I}^{\operatorname{eq}}$ and $f_i^{\operatorname{ineq}},~K_i^{\operatorname{ineq}},~\mathcal{I}^{\operatorname{ineq}}$, respectively, can explicitly be inferred from market data. These specifications are given in detail in Section~\ref{sec improved price bounds} and require the concept of copulas and quasi-copulas which will be introduced in the following Section~\ref{sec dependence modelling}.

 As a main result of our paper, we characterize for a finite number of bivariate copulas their pointwise maximum (which is in general only a quasi-copula) by the supermodular comparison of copula products where we extend the supermodular ordering to the class of quasi-copulas; we refer to Theorem~\ref{theqcub}. This result is meaningful because dependence information that improves the upper Fr\'{e}chet bound yields in general only a quasi-copula as improved pointwise upper bound for the dependence structure, see, e.g., \cite{Bernard-2017,Lux-2017,Rueschendorf-2019,Nelsen-2001,Tankov-2011}. As an application, we determine in Section \ref{secadddep} improved price bounds for European basket options under dependence information related to the structure of a common component dependence model.


\section{Dependence Modelling}\label{sec dependence modelling}

In this section, we introduce the basic notions for copulas, quasi-copulas, and dependence orderings which we use in Section \ref{sec improved price bounds} to restrict the inter-asset dependencies when prices of specific financial derivatives are given.
Further, as a main result we extend with Theorem \ref{theqcub} the characterization of the supermodular ordering of upper products in common component dependence models from \cite[Theorem 1]{Ansari-Rueschendorf-2020} to the case of a quasi-copula as upper bound. We build on this result in Section \ref{sec improved price bounds} to derive closed-form expressions for improved price bounds of supermodular payoff functions like basket options, when dependence information related to the setting of a common component dependence model is available.

\subsection{Basic Notions}\label{section_copula_notions}\label{section_copula_main_results}

For the analysis of dependence structures, we consider some well-known function classes. We often focus on functions with non-negative domain because we are only interested in those functions which can be interpreted as payoff functions depending on the non-negative underlying assets. Denote by \(\overline{\R}_+:=\R_+\cup\{\infty\}\) the extended real non-negative numbers.

An \(m\)-variate \emph{copula} is a distribution function on \([0,1]^m\) with uniform univariate marginal distributions. 
Due to Sklar's theorem, every 
\(m\)-variate distribution function \(F\) can be decomposed into an \(m\)-copula \(C\) and its univariate marginal distribution functions \(F_1,\ldots,F_m\) by
\begin{equation}\label{eq_sklar}
F(x)=C(F_1(x_1),\ldots,F_m(x_m)),~~~x=(x_1,\ldots,x_m)\in \R^m\,.
\end{equation}
The copula \(C\) is uniquely determined on \(\mathfrak{Ran}(F_1)\times \cdots \times \mathfrak{Ran}(F_m)\,,\) where \(\mathfrak{Ran}(F_i)\) denotes the range of \(F_i\,.\) Conversely, for every copula \(C\) and for arbitrary univariate distribution functions \(F_1,\ldots,F_m\,,\) the function \(F\) defined by the right-hand side of \eqref{eq_sklar} is an \(m\)-variate distribution function, see, e.g., \cite{Durante-2016} and \cite{Nelsen-2006}. Denote by \(\cC_m\) the class of \(m\)-copulas, and by \(W^m\) and \(M^m\) the lower  and upper Fr\'{e}chet bound which are defined by \(W^m(u):=\max\left\{\sum_{i=1}^m u_i -m+1,0\right\}\) and \(M^m(u):=\min_{1\leq i\leq m} \{u_i\}\,,\) respectively,
for \(u=(u_1,\ldots,u_m)\in [0,1]^m\,.\) It holds that 
\begin{equation}\label{deffrebounds}
W^m(u)\leq C(u)\leq M^m(u)~~~ \text{for all } u\in [0,1]^m
\end{equation}
and for all copulas \(C\in \cC_m\). The bounds are sharp, where \(M^m\) is a copula for all \(m\in \N\) and where \(W^m\) is a copula only for \(m\leq 2\,,\) see \cite{Nelsen-2006} and \cite{Rueschendorf-1981}.

Knowledge of values or bounds of some functionals, for example, knowledge of prices or price bounds as given by the linear functionals in \eqref{eq_eq_constraints} and \eqref{eq_ineq_constraints}, may restrict the class of copulas to a subclass of dependencies. If the functionals are consistent w.r.t. the lower orthant ordering (i.e., w.r.t. the pointwise ordering of distribution functions), then the Fr\'{e}chet bounds \(W^m\) and \(M^m\) may be improved w.r.t. the lower orthant ordering, see \cite{Lux-2017}, where, however, the improved bounds are generally no longer copulas but quasi-copulas defined as follows.
Denote by \(\cF^m\) the class of \(m\)-variate distribution functions, and by \(\cF^m_+\) the subclass of \(m\)-variate distribution functions \(F\) defined by \eqref{eq_sklar} such that \(F_i(0)=0\) for all \(i\in \{1,\ldots,m\}\,.\)

\begin{defn}[Quasi-copula, quasi-distribution function]\label{def_quasicop}
Let \(m\in \N\,\).\footnote{We also allow the trivial case \(m=1\) which is considered in the definition of the quasi-expectation operator, see \eqref{defquexpop}.}
\begin{itemize}
\item[(a)]
A function \(Q\colon [0,1]^m\to [0,1]\) is called \(m\)-variate \emph{quasi-copula} if
\begin{enumerate}[(i)]
\item \label{def_quasicop1} \(Q\) is \emph{grounded}, i.e., \(Q(u)=0\) if at least one coordinate of \(u\) is \(0\,,\)
\item \label{def_quasicop1b}\(Q\) has uniform marginals, i.e., \(Q(1,\ldots,1,u_i,1,\ldots,1)=u_i\) for all \(u_i\in[0,1]\), \(1\leq i\leq m\,,\)
\item \label{def_quasicop2}\(Q\) is non-decreasing in each component, and
\item \label{def_quasicop3}\(Q\) fulfils the Lipschitz condition \(|Q(v)-Q(u)|\leq \sum_{i=1}^m |v_i-u_i|\) for all \(u=(u_1,\ldots,u_m), \) \(v=(v_1,\ldots,v_m)\in [0,1]^m\,.\)
\end{enumerate}
\item[(b)] We call a function \(H\colon \R^m\to [0,1]\) \emph{quasi-distribution function} if there exist \(F_1,\ldots,F_m\in \cF^1\) and an \(m\)-variate quasi-copula \(Q\) such that
\begin{align*}
H(x)=Q(F_1(x_1),\ldots,F_m(x_m))~~~\text{for all } x=(x_1,\ldots,x_m)\in \R^m\,.
\end{align*}
 We also say $H$ is the quasi-distribution function of $Q$ w.r.t.\,$F_1,\dots,F_m$.
\end{itemize}
We denote by \(\cQ_m\) the set of \(m\)-variate quasi-copulas, by \(\cH^m\) the class of \(m\)-variate quasi-distribution functions, and by \(\cH_+^m\subset \cH^m\) the subclass with \(F_i\in \cF_+^1\) for all \(i\in \{1,\ldots,m\}\,.\) 
\end{defn}
Note that \(\cC_m\subset \cQ_m\) and thus \(\cF^m\subset \cH^m\) as well as \(\cF^m_+\subset \cH^m_+\,.\) Moreover, the lower Fr\'{e}chet bound \(W^m\) is a quasi-copula for all \(m\,;\) as in the case of copulas, all quasi-copulas satisfy the bounds given in \eqref{deffrebounds}.
An important property of quasi-copulas is that for any subset \(\mathcal{N}\subset \cQ_m\,,\) the pointwise supremum $Q_{\mathcal{N}}$ defined by
\begin{align}\label{eqmxqc}
Q_{\mathcal{N}}(u):=\sup\{Q(u)\,|\, Q\in \mathcal{N}\},~\text{ for } u \in [0,1]^m,
\end{align}
is again a quasi-copula, see \cite[Theorem 2.2]{Nelsen-2004} (where the proof for \(m=2\) can be extended to arbitrary dimension \(m\geq 2\)). In contrast, the pointwise supremum of copulas is generally not a copula, which motivates using quasi-copulas.

For improving price bounds under dependence information, we make use of some dependence orderings defined on the class \(\cQ_m\) of quasi-copulas. To avoid technical difficulties, we will often consider functions \(f\colon \Theta^m\to \R\) that are continuous at the boundary of \(\Theta\,,\) where $\Theta\in \{\R_+,[0,1),[0,1]\}$, i.e., whenever \(a:=\inf(\Theta)\in \Theta\) and/or \(b:=\sup(\Theta)\in \Theta\,,\) respectively, then we will often require \(f\) to satisfy
\begin{align}\label{contboun1}
&\lim_{h\downarrow 0} f(x_1,\ldots,x_{i-1},a+h,x_{i+1},\ldots,x_m) = f(x_1,\ldots,x_{i-1},a,x_{i+1},\ldots,x_m)\\
&\nonumber \text{and/or} \\
\label{contboun2} &\lim_{h\downarrow 0} f(x_1,\ldots,x_{i-1},b-h,x_{i+1},\ldots,x_m) = f(x_1,\ldots,x_{i-1},b,x_{i+1},\ldots,x_m)\,,
\end{align}
respectively, for all \(i\in \{1,\ldots,m\}\) and for all \(x_j\in \Theta\,,\)  \(j\ne i\,.\)
 We define for a bounded function $f\colon \Theta^m \to \R$ which is in each component either increasing or decreasing, which fulfils \eqref{contboun2}, and which is defined on \(\Theta\in \{[0,1),[0,1],\R_+,\R\}\,,\) its \emph{survival function} \(\widehat{f}\colon \Theta^m\to \R\) by\footnote{We stress that the sum in the definition of the survival function in \eqref{defsurvfun} also takes into account the summand \(J=\emptyset\).} 
\begin{align}\label{defsurvfun}
\widehat{f}(x):=\sum_{J\subseteq \{1,\ldots,m\}} (-1)^{m-|J|} f(y)\,,
\end{align}
where \(y=(y_1,\ldots,y_m)\) satisfies for all \(i\in \{1,\ldots,m\}\) that \(y_i=\sup(\Theta)\) if \(i\in J\) and \(y_i=x_i\) if \(i\notin J\,\), and where we set
\begin{equation}\label{eq_def_limsup}
f(w_1,\ldots,w_{l-1}, \infty,w_{l+1},\ldots,w_m):=\lim_{z\to \infty} f(w_1,\ldots,w_{l-1},z,w_{l+1},\ldots,w_m).
\end{equation}
for all $(w_1,\dots,w_m) \in \Theta^m$, $l\in \{1,\dots,m\}$. 
We say \(\widehat{Q}\) is a \emph{quasi-survival function} if \(\widehat{Q}\) is the survival function of a quasi-copula \(Q\,\).
The orthant orders on \(\cQ_m\) are defined as follows, see \cite{Lux-2017}.

\begin{defn}[\(\leq_{\operatorname{lo}}\,,\) \(\leq_{\operatorname{uo}}\)]\label{def_orthant_orders}
Let $m \in \N$, and let $Q$, $Q'\in \cQ_m$ be quasi-copulas.
\begin{itemize}
\item[(a)] The quasi-copula \(Q\) is smaller than \(Q'\) in the \emph{lower orthant order}, written \(Q\leq_{\operatorname{lo}} Q'\,,\) if \(Q(u)\leq Q'(u)\) for all \(u\in [0,1]^m\,.\)
\item[(b)] The quasi-copula \(Q\) is smaller than \(Q'\) in the \emph{upper orthant order}, written \(Q\leq_{\operatorname{uo}} Q'\,,\) if \(\widehat{Q}(u)\leq \widehat{Q}'(u)\) for all \(u\in [0,1]^m\,.\)
\end{itemize}
\end{defn}

For a copula \(C \in \mathcal{C}_m\) and some measurable function $f:[0,1]^m \rightarrow \R$, we define the expectation operator \(\psi_f(C)\) as the Lebesgue-Stieltjes integral
\begin{align}\label{defexpop_0}
\psi_f(C):=&\int_{[0,1]^m} f(u) \de C(u).
\end{align}
Then, for a copula \(C \in \mathcal{C}_m\) associated by \eqref{eq_sklar} with the distribution function $F(\cdot)=\Q(X\leq \cdot)$, for some measure \(\Q\in \mathcal{P}(\R^{m}_+)\,,\) we obtain
\begin{align}\label{defexpop}
\begin{split}
\psi_c^{(F_1,\ldots,F_m)}(C) := &\psi_{c\,\circ (F_1^{-1},\ldots,F_m^{-1})}(C) 
= \int_{\R_+^m} c(x)\de F(x) = \E_\Q [c(X)]\,,
\end{split}
\end{align}
applying the transformation formula for Stieltjes integrals, see, e.g., \cite[Theorem (2)]{Winter-1997}. Hence, in financial contexts, the expectation operator $\psi_c^{(F_1,\ldots,F_m)}(C) $ can be interpreted as the price of \(c\) under some pricing measure \(\Q\,\).
 
For copulas \(C_1,C_2\in \cC_m\,,\) it is well-known that the orthant orders are characterized by\footnote{For a function \(f\colon {\overline{\R}}_{+}^m\supset \bigtimes_{i=1}^m[a_i,b_i] \to \R\,,\) the difference operator \(\triangle_\varepsilon^i\,,\) \(\varepsilon>0\,,\) \(1\leq i \leq m\,,\) is defined by \(\triangle_\varepsilon^if(x):=f((x+\varepsilon e_i)\wedge b)-f(x),\)
where $x\in \bigtimes_{i=1}^m[a_i,b_i]$, $b=(b_1,\dots,b_m)$,  \(e_i\) denotes the \(i\)-th unit vector, and \(\wedge\) denotes the componentwise minimum.
Then \(f\) is \emph{\(\Delta\)-monotone} / \emph{\(\Delta\)-antitone} if
\(\triangle^{i_1}_{\varepsilon_1}\cdots \triangle_{\varepsilon_k}^{i_k} f(x)\geq 0\,,\) resp., \((-1)^k\triangle^{i_1}_{\varepsilon_1}\cdots \triangle_{\varepsilon_k}^{i_k} f(x)\geq 0\,\) for all $x \in \bigtimes_{i=1}^m[a_i,b_i]$, $k \in \{1,\dots, m\}$, any subset \(J=\{i_1,\ldots,i_k\}\subseteq \{1,\ldots,m\}\,,\) and all \(\varepsilon_1,\ldots,\varepsilon_k>0\,.\) Note that, by definition, a \(\Delta\)-monotone function is componentwise increasing and a \(\Delta\)-antitone function is componentwise decreasing.}
\begin{align}\label{chaloord}
\begin{split}
C_1\leq_{\operatorname{lo}} C_2 ~~~ &\Longleftrightarrow ~~~ \psi_f(C_1)\leq \psi_f(C_2)~~~\text{for all }\Delta\text{-antitone functions }f\colon [0,1]^m\to \R\,,\\
C_1\leq_{\operatorname{uo}} C_2 ~~~ &\Longleftrightarrow ~~~ \psi_f(C_1)\leq \psi_f(C_2)~~~\text{for all }\Delta\text{-monotone functions }f\colon [0,1]^m\to \R\,,
\end{split}
\end{align}
such that the expectations exist, see \cite{Rueschendorf-1980,Rueschendorf-2013} and \cite{Mueller-Stoyan-2002}.
An important extension of the orthant orders on \(\cC_m\) is the \emph{supermodular ordering} defined by\footnote{\(f\colon {\overline{\R}}_{+}^m\supset \bigtimes_{i=1}^m[a_i,b_i] \to \R\) is \emph{supermodular} if \(\triangle^{i}_{\varepsilon_i} \triangle_{\varepsilon_j}^{j} f(x)\geq 0\) for all \(x\in \bigtimes_{i=1}^m[a_i,b_i]\,,\) \(i\ne j\) and \(\varepsilon_i,\varepsilon_j > 0\,.\)}
\begin{align}\label{supmodcop}
C_1\leq_{\operatorname{sm}} C_2 ~~~\colon \Longleftrightarrow ~~~ \psi_f(C_1)\leq \psi_f(C_2) ~~~\text{for all supermodular functions } f\colon [0,1]^m \to \R
\end{align}
such that the expectations exist. Note that every \(\Delta\)-monotone or \(\Delta\)-antitone function is supermodular and, thus, \(C_1\leq_{sm} C_2\) implies \(C_1\leq_{lo} C_2\) and \(C_1\leq_{uo} C_2\,.\) Denote by \(\cF_{\Delta}\,\)  the set of \(\Delta\)-monotone functions, by \(\cF_{\Delta}^-\) the set of \(\Delta\)-antitone functions, and by  \(\cF_{\operatorname{sm}}\) the set of supermodular functions.

An extension of the right-hand side of \eqref{chaloord} and of \eqref{supmodcop} to quasi-copulas is not immediate since a quasi-copula does not in general induce a signed measure (see \cite{Nelsen-2010}) and therefore the integral in \eqref{defexpop_0} may not be defined. However, by an application of an integration by parts formula for measure-inducing functions, an extension to quasi-copulas may be achieved, which enables to determine bounds for the price in \eqref{defexpop} under dependence information as discussed before.

Therefore, denote by $\cB(\Theta^m)$ the Borel \(\sigma\)-algebra on \(\Theta^m\,.\) A left-continuous, resp., right-continuous function\footnote{We call a multivariate function left-continuous/right-continuous if the function is componentwise left-continuous/componentwise right-continuous at every point.} \(g\colon \Theta^m \to \R\), \(\Theta\in \{[0,1),[0,1],\R_+\}\), is said to be \emph{measure-inducing} if there exists a signed measure \(\eta_g\) on $\cB(\Theta^m)$ such that
\begin{align}\label{eqindmeas}
\eta_g([x_1,x_1+\varepsilon_1)\times \cdots \times [x_m,x_m+\varepsilon_m)) &= \triangle_{\varepsilon_1}^1 \cdots \triangle_{\varepsilon_m}^m g(x)\,, ~~~\text{resp.,}\\
\label{eqindmeasu}
\eta_g((x_1,x_1+\varepsilon_1]\times \cdots \times (x_m,x_m+\varepsilon_m]) &= \triangle_{\varepsilon_1}^1 \cdots \triangle_{\varepsilon_m}^m g(x)
\end{align}
for all \(x=(x_1,\ldots,x_m)\in \Theta^m\) and \(\varepsilon_1,\ldots,\varepsilon_m\in \R_+\,.\)

Further, for \(I=\{i_1,\ldots,i_k\}\subseteq \{1,\ldots,m\}\,,\) the \emph{\(I\)-marginal} \(g_I\) of \(g\) is defined by
\begin{align}\label{deffmarg}
g_I\colon \Theta^{k} \ni (u_{i_1},\ldots,u_{i_k}) \to g(u_1,\ldots,u_m)\,, ~~~\text{where }u_j=0 ~\text{for all }j\notin I\,.
\end{align}

In particular, by \eqref{eqindmeas} and \eqref{eqindmeasu}, for all \(I\subseteq\{1,\ldots,m\}\,,\) \(I\ne \emptyset\,,\) the \(I\)-marginal of every left-/ right-continuous \(\Delta\)-monotone function induces a non-negative measure, and the \(I\)-marginal of every left-/right-continuous \(\Delta\)-antitone function induces a non-negative measure on \(\cB([0,1]^{|I|})\) if \(|I|\) is even and a non-positive measure if \(|I|\) is odd, see, e.g., \cite[Corollary 2.24]{Ansari-2021}, cf. \cite[Proposition 2.1]{papapantoleon2020detection}.
In general, a left-/right-continuous measure-inducing function defined on a compact domain like \([0,1]^m\) always induces a finite signed measure, and if all \(I\)-marginals are measure-inducing, this equivalently means that the function has bounded Hardy-Krause variation, see \cite[Theorem 2.12]{Ansari-2021}, see also \cite[Theorem 3 and, for a precise definition of the Hardy-Krause variation, Section 2.1]{Aistleitner-2015}\footnote{In this reference, the authors show the statement for a right-continuous function \(f\colon [0,1]^m\to \R\,.\) However, it also applies for a left-continuous function \(g\colon [0,1]^m \to \R\,\) by setting \(g(x)=f(1-x)\,.\)}.

We will often assume  that for $\Theta\in \{\R_+,[0,1),[0,1]\}$ a measure-inducing function \(f\colon \Theta^m\to \R\) is continuous at the boundary of \(\Theta\). To this end, we denote by
\begin{align}\begin{split}
\cF_{\operatorname{mi}}^{\operatorname{c},\operatorname{l}}(\Theta):=&\{f\colon \Theta^m \to \R\mid f \text{ satisfies } \eqref{contboun1} \text{ and } \eqref{contboun2},\\ 
&~~~~~~~~~~~~~~~~~~~~~~~~  f_I \text{ is measure-inducing for all } I\subseteq \{1,\ldots,m\}\,, I\ne \emptyset\}
\end{split}
\end{align}
the class of measure-inducing functions that satisfy the continuity conditions at the boundary of the domain and for which all \(I\)-marginal functions are measure-inducing.

\begin{defn}[Quasi-expectation]~\label{def_quasi_ex}\\
Let \(H=Q\circ (F_1,\ldots,F_m):\R_+^m \rightarrow [0,1]\) be a quasi-distribution function of $Q \in \mathcal{Q}_m$ w.r.t.\,$F_1,\dots,F_m \in \mathcal{F}_+^1$ and let \(c\in \cF_{\operatorname{mi}}^{\operatorname{c},\operatorname{l}}(\R_+^m)\) be left-continuous. Then, the \emph{quasi-expectation} of \(c\) w.r.t.\,\(H\) is defined by
\begin{align}\label{defquexpop}
\int_{\R_+^m} c(x) \de H(x):=~ & \pi_c(\widehat{H}) :=\sum_{I\subseteq \{1,\ldots,m\} \atop I\ne \emptyset} \int_{\R_+^{|I|}} \widehat{H}_I(x)\de \eta_{c_I}(x) + c(0,\ldots,0)\\
\nonumber=& \sum_{k=1}^m\sum_{I\subseteq \{1,\ldots,m\} \atop I=\{i_1,\ldots,i_k\}} \int_{\R_+^{|I|}} \widehat{Q}_I(F_{i_1}(x_1),\ldots,F_{i_k}(x_{i_k}))\de \eta_{c_I}(x_1,\ldots,x_k) + c(0,\ldots,0)\,,
\end{align}
whenever the integrals exist.
We also write
\begin{align}\label{defquexpop1}
\pi_c^{\mu}(\widehat{Q}):= \pi_c^{(F_1,\ldots,F_m)}(\widehat{Q}):=\pi_c(\widehat{H}) = \pi_c(\widehat{Q}\circ (F_1,\ldots,F_m))
\end{align}
where, for \(\mu=(\mu_1,\ldots,\mu_m)\,,\) \(F_i\) is the distribution function of the marginal distribution \(\mu_i \in \mathcal{P}(\R_+)\,,\) \(1\leq i \leq m\,.\)
\end{defn}
Note that for \(1\leq i \leq m\,,\) every \(F_i\in \cF_+^1\) is, by definition of \(\cF_+^1\), increasing and fulfils \(F_i(0)=0\) which implies that \(\widehat{H}_I=\widehat{Q}_I\circ (F_1,\ldots,F_m)\) and thus proves the last equality in \eqref{defquexpop1}.

\begin{rem}\label{propconncomdom}
\begin{enumerate}[(a)]
\item By a multivariate integration by parts formula, for every copula \(C\in \cC_m\) and thus by \eqref{eq_sklar} for every distribution function \(F\in \cF_+^m\) it holds true that
\begin{align}\label{qeopexopeq}
\psi_f(C) = \pi_f(\widehat{C}) && \text{and} && \int_{\R_+^m} c(x) \de F(x) = \psi_c(F)= \pi_c(\widehat{F}) \,,
\end{align}
whenever \(f\in \cF_{\operatorname{mi}}^{\operatorname{c},\operatorname{l}}([0,1]^m)\) and \(c\in \cF_{\operatorname{mi}}^{\operatorname{c},\operatorname{l}}(\R_+^m)\) are left-continuous. More generally, for every left-continuous function \(\xi\in \cF_{\operatorname{mi}}^{\operatorname{c},\operatorname{l}}(\Theta^m)\) and for every bounded, grounded, right-continuous, and measure-inducing function \(h\colon \Theta^m\to \R\) satisfying continuity conditions \eqref{contboun1} and \eqref{contboun2}, it holds true that
\begin{align}\label{eqqop}
\psi_\xi(h)=\pi_\xi(\widehat{h})
\end{align}
whenever the integrals exist,
see \cite[Theorem 3.1]{Ansari-2021}, with \(\psi_\xi\) and \(\pi_\xi\) defined as in \eqref{defexpop_0} and \eqref{defquexpop}.
\item \label{propconncomdom2} For every left-continuous, measure-inducing function \(c\colon \R_+^m \to \R\) and for all distribution functions {\(F_1,\ldots,F_m\in \cF_+^1\)} such that \(g:=c\circ (F_1^{-1},\ldots,F_m^{-1})\in \cF_{\operatorname{mi}}^{\operatorname{c},\operatorname{l}}([0,1)^m)\,,\) a sufficient condition for the existence of 
\begin{align}\label{reptraf}
\pi_c(\widehat{Q}\circ \left(F_1,\ldots,F_m)\right)= \pi_c\left(\reallywidehat{Q\circ(F_1,\ldots,F_m)}\right)=\pi_g(\widehat{Q}) 
\end{align}
is that \(u\mapsto g_I(u,\ldots,u)\) is Lebesgue-integrable on \([0,1)\) for all \(I\subseteq\{1,\ldots,m\}\,,\) \(I\ne \emptyset\,,\) see  \cite[Corollary 3.12]{Ansari-2021}. Note that the last equality in \eqref{reptraf} holds true by the transformation formula for the integration by parts operator which is similar to \eqref{defexpop} given by
\begin{align}\label{margtrafgs}
\pi_c(h\circ (F_1,\ldots,F_m))=\pi_{c\,\circ(F_1^{-1},\ldots,F_m^{-1})}(h)
\end{align}
for all 
functions \(g:=c\circ(F_1^{-1},\ldots,F_m^{-1})\in \cF_{\operatorname{mi}}^{\operatorname{c},\operatorname{l}}([0,1)^m)\) 
and for all distribution functions \(F_1,\ldots,F_m\in \cF_+^1\,,\) whenever the integrals exist, see \cite[Proposition 3.6]{Ansari-2021}.
\end{enumerate}
\end{rem}

When lower and upper orthant dependence restrictions are imposed by quasi-copulas, analytic expressions for improved price bounds w.r.t. lower and upper orthant payoff functions are given by the characterizations
\begin{align}\label{charortord}
Q\leq_{\operatorname{lo}} Q'~~~\Longleftrightarrow~~~ \pi_f(\widehat{Q})\leq \pi_f(\widehat{Q'})~\text{for all left-continuous }f\in \cF_\Delta^-\,,\\
\nonumber Q\leq_{\operatorname{uo}} Q'~~~\Longleftrightarrow~~~ \pi_f(\widehat{Q})\leq \pi_f(\widehat{Q'})~\text{for all left-continuous }f\in \cF_\Delta\,,
\end{align}
such that the expectations exist, see \cite[Theorem 5.5]{Lux-2017}. This, in particular, implies for a copula \(C\) with \(C\leq_{\operatorname{lo}} Q'\) resp. \(C\leq_{\operatorname{uo}} Q'\) that \(\psi_f(C)\leq \pi_f(\widehat{Q})\) for \(f\) \(\Delta\)-antitone/-monotone.
In Theorem \ref{corsmifm}, we also derive analytic expressions for improved price bounds w.r.t.\,supermodular payoff functions, which requires an extension of the supermodular ordering to the class \(\cQ_m\,.\) To that end, denote by \(C^k([0,1]^m)\equiv C^k\,,\) \(k\in \N\cup\{\infty\}\,,\) the class of functions \(f\colon [0,1]^m\to \R\) such that all (mixed) partial derivatives of order \(k\) exist and are continuous.
We make use of the following lemma.

\begin{lem}[\cite{Ansari-2021}, Corollary 2.19]\label{lemmind}
Let \(f\colon [0,1]^m\to \R\) be a \(C^m\)-function. Then \(f\in \cF_{\operatorname{mi}}^{\operatorname{c},\operatorname{l}}([0,1]^m)\,.\)
\end{lem}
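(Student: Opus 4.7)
The plan is to unpack the definition of $\cF_{\operatorname{mi}}^{\operatorname{c},\operatorname{l}}([0,1]^m)$ and verify its two defining properties in turn: the boundary continuity conditions \eqref{contboun1} and \eqref{contboun2}, and that every $I$-marginal $f_I$, for $\emptyset\ne I\subseteq\{1,\ldots,m\}$, is measure-inducing in the sense of \eqref{eqindmeas}. Since $f\in C^m([0,1]^m)$ is in particular continuous on the compact cube $[0,1]^m$, both boundary conditions hold automatically, so the first point requires no further work.

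For the measure-inducing property, the idea is to invoke the iterated fundamental theorem of calculus. Fix a nonempty $I=\{i_1,\ldots,i_k\}\subseteq\{1,\ldots,m\}$; by \eqref{deffmarg}, the marginal $f_I$ is the restriction of $f$ to the section where the non-$I$ components equal $0$, so $f_I\in C^k([0,1]^k)$. Arguing by induction on $k$ and applying FTC once in each direction $i_j$ (which is legitimate because the relevant partial derivatives are continuous), I would establish
\begin{align*}
\triangle_{\varepsilon_1}^{i_1}\cdots\triangle_{\varepsilon_k}^{i_k} f_I(u)
= \int_{u_{i_1}}^{u_{i_1}+\varepsilon_1}\cdots\int_{u_{i_k}}^{u_{i_k}+\varepsilon_k}
\frac{\partial^k f_I}{\partial u_{i_1}\cdots \partial u_{i_k}}(v)\,\de v_{i_k}\cdots\de v_{i_1}.
\end{align*}
The natural candidate for the signed measure $\eta_{f_I}$ on $\cB([0,1]^k)$ is then the one with Lebesgue density $\partial^k f_I/\partial u_{i_1}\cdots \partial u_{i_k}$; this density is continuous on $[0,1]^k$, hence bounded, so $\eta_{f_I}$ is a well-defined finite signed measure. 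The displayed identity delivers \eqref{eqindmeas} on all half-open boxes, which generate $\cB([0,1]^k)$, so $f_I$ is measure-inducing. Repeating this for each nonempty $I$ completes the verification.

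There is no substantive obstacle, as the argument is a direct combination of FTC and the fact that a continuous function on a compact cube induces an absolutely continuous finite signed measure. The only point that merits attention is the interaction of the capping convention $\wedge b$ in the definition of $\triangle_\varepsilon^i$ with the case $u_{i_j}+\varepsilon_j>1$: the cap effectively replaces $\varepsilon_j$ by $1-u_{i_j}$, which also truncates both the half-open box and the domain of integration, so the identity remains consistent on all of $[0,1]^m$. Finally, because the density is continuous, the choice of the half-open-left convention \eqref{eqindmeas} versus \eqref{eqindmeasu} is immaterial for the values of $\eta_{f_I}$ on such boxes, so the left-continuous convention required by $\cF_{\operatorname{mi}}^{\operatorname{c},\operatorname{l}}$ is satisfied.
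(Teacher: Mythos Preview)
Your argument is correct and is precisely the standard route: boundary continuity is immediate from continuity of $f$, and for each nonempty $I$ the mixed partial $\partial^{|I|} f_I/\partial u_{i_1}\cdots\partial u_{i_{|I|}}$ (which exists and is continuous because $|I|\le m$ and $f\in C^m$) serves as a bounded Lebesgue density for the finite signed measure $\eta_{f_I}$, with the iterated fundamental theorem of calculus delivering \eqref{eqindmeas}. The paper does not supply its own proof of this lemma---it is quoted directly from \cite{Ansari-2021}, Corollary~2.19---but your argument is exactly the one that underlies that reference, so there is nothing to compare.

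One small remark: you write that ``the left-continuous convention required by $\cF_{\operatorname{mi}}^{\operatorname{c},\operatorname{l}}$ is satisfied.'' Note that the definition of $\cF_{\operatorname{mi}}^{\operatorname{c},\operatorname{l}}(\Theta)$ in the paper does not itself impose left-continuity; it only asks for the boundary conditions \eqref{contboun1}--\eqref{contboun2} and that each $f_I$ be measure-inducing (which is defined for either left- or right-continuous functions). Since $f$ is continuous, both conventions \eqref{eqindmeas} and \eqref{eqindmeasu} are available and yield the same $\eta_{f_I}$, so this point is harmless, but the phrasing slightly overstates what the class demands.
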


Since the supermodular ordering on \(\cC^m\) is generated by the class \(C^\infty\cap \cF_{\operatorname{sm}}\) of smooth supermodular functions, see \cite[Theorem 3.2]{Denuit-2002}, and since all \(I\)-marginals of a smooth function induce a signed measure due to Lemma \ref{lemmind}, we can extend the supermodular ordering to \(\cQ_m\) as follows.

\begin{defn}[Supermodular ordering for quasi-copulas]\label{def_sm_qc}~\\
Let \(Q,Q'\in \cQ_m\,.\) Then \(Q\) is said to be smaller than \(Q'\) in the supermodular ordering, written \(Q\leq_{\operatorname{sm}} Q'\,,\) if \(\pi_f(\widehat{Q})\leq \pi_f(\widehat{Q'})\) for all supermodular and left-continuous functions $f\in \cF_{\operatorname{mi}}^{\operatorname{c},\operatorname{l}}([0,1]^m)$.
\end{defn}

Note that \(f\) in the above definition is defined on a compact domain and thus induces a finite measure. Hence, \(\pi_f(\widehat{Q})\) is finite for every quasi-copula \(Q\in \cQ_m\,,\) compare \cite[Theorem 3.7]{Ansari-2021}.

\subsection{Common component dependence models}\label{sec IFM}


In this section we give a brief overview on the notion of the upper product of bivariate copulas which describes the worst case dependence structure (w.r.t.\,\(\leq_{\operatorname{sm}}\)) in partially specified factor models (PSFMs). This allows to establish Theorem \ref{theqcub} which yields improved price bounds for supermodular payoff functions. 
In Section \ref{sec improved price bounds}, we build on Theorem \ref{theqcub}  which extends the upper product ordering result in \cite[Theorem 1]{Ansari-Rueschendorf-2020} to quasi-copulas, to derive an
upper price bound for a supermodular payoff function which improves the comonotone standard bound\footnote{For \(\R\)-valued random variables \(X_1,\ldots,X_m\) with distribution functions \(F_1,\ldots,F_m\,,\) the vector \(X^c:=(X_1^c,\ldots,X_m^c):=(F_1^{-1}(U),\ldots,F_m^{-1}(U))\,,\) with \(U\) uniformly distributed on \((0,1)\,,\) is comonotone. It holds that \(X_i^c\eqd X_i\) for all \(i\) and the copula of \(X^c\) is the upper Fr{\'e}chet copula \(M^m\,.\) Hence, \(X^c\) has the maximal distribution (w.r.t.\,the supermodular ordering) in the class of all distributions with fixed marginals \(F_1,\ldots,F_m\,,\) and thus it is referred to as (comonotone) standard upper bound.} based on knowledge of the marginal distributions.

In a PSFM, a random vector \(X=(X_i)_{1\leq i\leq m}:=(f_i(Z,\varepsilon_i))_{1\leq i\leq m}\,,\) is expressed through Borel-measurable functions $f_i \colon \R^{2}\to\R$
of an \(\R\)-valued random variable \(Z\) and \(\R\)-valued random variables \(\varepsilon_i\) for $i=1,\dots,m$,  where the common (risk) factor \(Z\) is w.l.o.g.\ assumed to be independent of \((\varepsilon_i)_{1\leq i\leq m}\). Moreover, the bivariate distributions \((X_i,Z)_{1\leq i\leq m}\) are specified, i.e., the univariate distributions of \(X_i\) and \(Z\) as well as the common copula \(D^i\) are known for all \(1\leq i \leq m\,.\) However, in contrast to the usual independence assumption, the dependence structure among the vector of idiosyncratic risks \((\varepsilon_i)_{1\leq i \leq m}\) is not specified, see \cite{Bernard-2017}.
The maximal random vector (w.r.t. \(\leq_{\operatorname{sm}}\)) in the PSFM is given by the conditionally comonotone vector \(X_Z^c:=\left(F_{X_i|Z}^{-1}(U)\right)_{1\leq i \leq m}\), where $F_{X_i|Z}^{-1}$ denotes the generalized inverse of the conditional distribution function of $X_i$ given $Z$ and where \(U\) is uniformly distributed on \((0,1)\) and independent of \(Z\,.\) If \(Z\) has a continuous distribution function, the copula of \(X_Z^c\), in the sense of \eqref{eq_sklar}, is given by the upper product \(\bigvee_{i=1}^m D^i \equiv D^1\vee \cdots \vee D^m\) of the bivariate copulas \(D^1,\ldots,D^m\) which is an \(m\)-copula defined by
$
\bigvee_{i=1}^m D^i(u) := \int_0^1 \min_{1\leq i\leq m}\{\partial_2 D^i(u_i,t)\} \de t~~~ \text{for } u=(u_1,\ldots,u_m)\in [0,1]^m\,,
$
see \cite{Ansari-Rueschendorf-2018},
where \(\partial_2\) denotes the partial derivative w.r.t.\,the second component.

In a (partially specified) \emph{common component dependence model}\footnote{This model is in \cite{Ansari-Rueschendorf-2020} denoted as \emph{internal factor model}.} (CCD model), it is assumed that the factor \(Z=X_1\) is a component of the vector \((X_1,\ldots,X_m)\) and, thus, the first bivariate dependence constraint \(D^1\) is imposed by the upper Fr\'{e}chet copula, i.e., \(D^1=M^2\,.\) For a bivariate copula \(E\in \cC_2\,,\) the class of CCD models with dependence specifications \(D^k\leq_{\operatorname{lo}} E\,,\) \(k=2,\ldots,m\,,\) has a greatest element w.r.t. the supermodular ordering given by the \(m\)-variate upper product
\begin{align}\label{eqcococo}
M^2\vee E\vee \cdots \vee E \,(u)=E\left(\min_{2\leq i \leq m}\{u_i\},u_1\right)\,,
\end{align}
for \(u=(u_1,\ldots,u_m)\in [0,1]^m\,,\)
see \cite[Theorem 1]{Ansari-Rueschendorf-2020}.

As a main result, Theorem \ref{theqcub} extends this result to dependence specifications \(D^k\leq_{\operatorname{lo}} Q_2\,,\) \(k=2,\ldots,m\,,\) for a fixed given bivariate quasi-copula \(Q_2\in \cQ_2\) which serves as an upper bound for each $D^2\,,\ldots,D^m\,$. Thus, if $Q_2$ is implied by market price information, this allows us to incorporate bivariate dependence information inferred from market prices. This is of particular relevance as,  according to \cite[Theorem 3.1]{Lux-2017}, such price information often only corresponds to a quasi-copula that serves as an upper bound for the dependence structure.

We use that whenever a left-continuous function \(f\colon [0,1]^m\to \R\) is supermodular and componentwise increasing/componentwise decreasing and \(m\geq 2\,,\) the function \(\phi_f\colon [0,1]^2\to \R\) defined by 
\begin{align}\label{defphif}
\phi_f(x_1,x_2):=f(x_2,x_1,\ldots,x_1)
\end{align}
is \(\Delta\)-monotone/-antitone and, thus, for all \(I\subseteq\{1,\ldots,m\}\,,\) \(I\ne \emptyset\,,\) the \(I\)-marginal induces a positive measure by \eqref{eqindmeas}, see, e.g. \cite[Corollary 2.24]{Ansari-2021}. This enables us to establish an \(m\)-variate quasi-copula \(Q^*\) as an upper bound (w.r.t.\,the supermodular ordering) for the upper products \(M^2\vee D^2\vee \cdots \vee D^m\,,\) \(D^k\leq_{\operatorname{lo}} Q_2\,,\) which describe the worst case dependence structures in CCD models. The quasi-copula \(Q^* \in \mathcal{Q}_m\), 
\begin{align}\label{deqcqs}
Q^*(u):=Q_2\big(\min_{2\leq i \leq m}\{u_i\},u_1\big)\,,~~~ u=(u_1,\ldots,u_m)\in [0,1]^m\,,
\end{align}
relates to the conditionally comonotone structure in \eqref{eqcococo} and can be associated with the two-dimensional case. The non-intuitive arrangement of the arguments in \eqref{defphif} and \eqref{deqcqs} is due to the definition of the upper product where the copulas in the integrand are differentiated w.r.t.\,the second component.

The main result of this section is the characterization provided in the following theorem. 



\begin{thm}\label{theqcub}
Let \(D^2,\ldots,D^m\in \cC_2\) be bivariate copulas and \(Q_2\in \cQ_2\) a bivariate quasi-copula. Then, \(Q^*\) defined by \eqref{deqcqs}
is a quasi-copula, and the following statements are equivalent.
\begin{itemize}
\item[(a)] \label{theqcub1}\(D^i\leq_{\operatorname{lo}} Q_2\) for all \(2\leq i\leq m\,,\)
\item[(b)] \label{theqcub2}\(M^2\vee D^2\vee \ldots\vee D^m \leq_{\operatorname{lo}} Q^*\,,\)
\item[(c)] \label{theqcub2a}\(M^2\vee D^2\vee \ldots\vee D^m \leq_{\operatorname{uo}} Q^*\,,\)
\item[(d)] \label{theqcub2b}\(M^2\vee D^2\vee \ldots\vee D^m \leq_{\operatorname{c}} Q^*\,,\)
\item[(e)] \label{theqcub3}\(M^2\vee D^2\vee \ldots\vee D^m \leq_{\operatorname{sm}} Q^*\,.\)
\item[(f)] \label{theqcub6}\(\psi_f({M^2\vee D^2\vee \cdots \vee D^m})\leq \pi_{\phi_f}(\widehat{Q_2})\) \\
for all 
left-continuous, supermodular functions \(f\colon [0,1)^m\to \R\) which are componentwise increasing/componentwise decreasing such that \((\phi_f)_I\) is Lebesgue integrable on \([0,1)^{|I|}\) for \(I\subseteq\{1,2\}\,,\) \(I\ne \emptyset\,,\) where \(\phi_f\) is defined by \eqref{defphif}, and such that $f$ is lower bounded by some function which is integrable w.r.t.\,$M^2\vee D^2\vee \ldots\vee D^m$.
\end{itemize}
\end{thm}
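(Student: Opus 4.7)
The plan is to verify first that $Q^*$ is a quasi-copula, then close the loop of equivalences via $(e) \Rightarrow (d) \Rightarrow \{(b),(c)\} \Rightarrow (a) \Rightarrow (f) \Rightarrow (e)$. The quasi-copula properties of $Q^*$ follow directly from Definition \ref{def_quasicop}: groundedness reduces to $Q_2(\cdot,0)=Q_2(0,\cdot)=0$; the uniform marginal at coordinate $i\geq 2$ gives $Q^*(1,\ldots,1,u_i,1,\ldots,1) = Q_2(u_i,1)=u_i$, and at coordinate $1$ it gives $Q_2(1,u_1)=u_1$; componentwise monotonicity is inherited from $Q_2$ and the $\min$ operation; the Lipschitz bound combines the Lipschitz estimate for $Q_2$ with $|\min_i a_i -\min_i b_i| \leq \max_i|a_i-b_i| \leq \sum_i|a_i-b_i|$.

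The implications $(e)\Rightarrow(d)\Rightarrow\{(b),(c)\}$ are immediate: every $\Delta$-monotone or $\Delta$-antitone function is supermodular, so Definition \ref{def_sm_qc} applied to functions from $\cF_\Delta$ and $\cF_\Delta^-$ yields $\leq_{\operatorname{c}}$, and $\leq_{\operatorname{c}}$ is by definition the conjunction of $\leq_{\operatorname{lo}}$ and $\leq_{\operatorname{uo}}$. For $(b)\Rightarrow(a)$ I would fix $i\in\{2,\ldots,m\}$ and set $u_k=1$ for $k\notin\{1,i\}$ in the inequality $M^2\vee D^2\vee\cdots\vee D^m(u)\leq Q^*(u)$. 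Using the explicit representation of the upper product and $\partial_2 D^k(1,t)=1$ for $k\neq i$, the minimum collapses, so the upper product reduces to $D^i(u_i,u_1)$; on the same slice $Q^*$ equals $Q_2(u_i,u_1)$, proving $D^i\leq_{\operatorname{lo}} Q_2$.

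The heart of the argument is $(a)\Rightarrow(f)$. For a left-continuous, componentwise monotone, supermodular $f$, the function $\phi_f(x_1,x_2):=f(x_2,x_1,\ldots,x_1)$ is $\Delta$-monotone or $\Delta$-antitone by the discussion preceding \eqref{defphif}. I would proceed in two steps. First, using the representation $M^2\vee D^2\vee\cdots\vee D^m(u)=\int_0^{u_1}\min_{i\geq 2}\partial_2 D^i(u_i,t)\,dt$ together with the integration-by-parts formula \eqref{eqqop} and the marginal transformation \eqref{margtrafgs}, establish an identity of the form $\psi_f(M^2\vee D^2\vee\cdots\vee D^m)=\pi_{\phi_f}(\widehat{Q_{\min}})$, where $Q_{\min}(x_1,x_2):=\min_{i\geq 2}D^i(x_1,x_2)$ is the pointwise minimum of the copulas $D^i$; this mirrors the reduction in the copula case of \cite[Theorem 1]{Ansari-Rueschendorf-2020}. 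Second, the hypothesis $D^i\leq_{\operatorname{lo}} Q_2$ immediately gives $Q_{\min}\leq_{\operatorname{lo}} Q_2$ as quasi-copulas, so the characterization \eqref{charortord} applied to the $\Delta$-antitone/-monotone function $\phi_f$ yields $\pi_{\phi_f}(\widehat{Q_{\min}}) \leq \pi_{\phi_f}(\widehat{Q_2})$, which is the claim in (f).

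Finally, $(f)\Rightarrow(e)$ follows from a generating-class argument: by Definition \ref{def_sm_qc} and Lemma \ref{lemmind}, it suffices to compare $\pi_f(\widehat{M^2\vee D^2\vee\cdots\vee D^m})$ and $\pi_f(\widehat{Q^*})$ for $C^m$-supermodular $f$, and adding a componentwise linear function (which is supermodular and cancels on both sides since both quasi-copulas have matching univariate marginals) reduces the comparison to componentwise monotone $f$; for such $f$ the product structure of $Q^*$ yields $\pi_f(\widehat{Q^*})=\pi_{\phi_f}(\widehat{Q_2})$, so (f) directly delivers (e). The main obstacle lies in the first step of $(a)\Rightarrow(f)$: because $Q_{\min}$ is generically a proper quasi-copula inducing no signed measure, the $m$-to-$2$ variate reduction cannot be carried out by elementary Fubini; it must be done at the level of the integration-by-parts operator $\pi$ from Definition \ref{def_quasi_ex}, and the existence of all $I$-marginal integrals must be tracked via the integrability hypothesis in (f) together with Remark \ref{propconncomdom}\eqref{propconncomdom2}.
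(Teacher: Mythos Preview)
Your loop \((e)\Rightarrow(d)\Rightarrow(b)\Rightarrow(a)\) and your reduction in \((f)\Rightarrow(e)\) via a linear correction and the diagonal identity \(\pi_f(\widehat{Q^*})=\pi_{\phi_f}(\widehat{Q_2})\) are sound in spirit (the latter is exactly what the paper establishes through Lemma~\ref{lemmadia} and the grid approximation \eqref{azw}). The genuine problem is Step~1 of your \((a)\Rightarrow(f)\): the claimed identity
\[
\psi_f\bigl(M^2\vee D^2\vee\cdots\vee D^m\bigr)=\pi_{\phi_f}\bigl(\widehat{Q_{\min}}\bigr),\qquad Q_{\min}:=\min_{2\leq i\leq m}D^i,
\]
is false whenever the \(D^i\) are not all equal. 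The diagonal reduction \(\pi_f(\widehat{Q^*})=\pi_{\phi_f}(\widehat{Q_2})\) works because \(Q^*(u)=Q_2(\min_{i\geq 2}u_i,u_1)\) depends on \((u_2,\ldots,u_m)\) only through their minimum, so the induced (signed) mass concentrates on the diagonal (this is precisely Lemma~\ref{lemmadia}). The upper product \(M^2\vee D^2\vee\cdots\vee D^m\) has no such structure: its survival function, computed in the paper's \((c)\Rightarrow(a)\) step, equals \(1-u_1-\int_{u_1}^1\max_{j\geq 2}\partial_2 D^j(u_j,t)\,dt\), which depends on each \(u_j\) separately. Already for indicator test functions \(f=\one_{\{\cdot>a\}}\) your identity would force \(\reallywidehat{M^2\vee D^2\vee\cdots\vee D^m}(a)\) to depend on \((a_2,\ldots,a_m)\) only through \(\max_{i\geq 2}a_i\), which fails for generic distinct \(D^i\). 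So the obstacle you flag at the end is not a technicality about \(Q_{\min}\) failing to induce a measure; the target equation itself is wrong.

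This gap is what forces the paper's detour: the route \((a)\Rightarrow(e)\) is proved not by any two-variate reduction of the upper product but by discretizing all objects to a finite grid, invoking the supermodular mass-transfer machinery of \cite{Mueller-2013} (lifted from the copula case in \cite{Ansari-Rueschendorf-2020}) to build a chain \(Q_{n,0}^*\leq_{\operatorname{sm}}\cdots\leq_{\operatorname{sm}}Q_{n,n}^*\) of signed grid copulas interpolating between the discretized upper product and the discretized \(Q^*\), and then passing to the limit via \eqref{eqqop} and \cite[Theorem~3.7]{Ansari-2021}. Only after \((e)\) is in hand does the diagonal structure of \(Q^*\) (not of the upper product) yield \((e)\Rightarrow(f)\). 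A separate minor point: your cycle never shows that \((c)\) alone implies any of the others, so \((c)\) is left dangling as a one-way consequence of \((d)\); the paper closes this with an explicit \((c)\Rightarrow(a)\) computation using the survival function formula above.
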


\begin{rem}\label{remmaithean1}
\begin{enumerate}[(a)]
\item[(a)] \label{remmaithean1a} In the special case that \(Q_2=E\in \cC_2\) is a copula, the upper bound \(Q^*\) in Theorem \ref{theqcub} simplifies by \eqref{deqcqs} and \eqref{eqcococo} to \(M^2\vee E\vee \cdots \vee E\,,\) which implies the result from
\cite[Theorem 1]{Ansari-Rueschendorf-2020}.
\item[(b)] \label{remmaithean1b} Let \(D^2,\ldots,D^m\in \cC_2\) be copulas and \(Q^2,\ldots,Q^m\in \cQ_2\) be quasi-copulas with \(D^i\leq_{\operatorname{lo}} Q^i\) for all \(i=2,\ldots, m\,.\) Then, \(Q_2(u):=\max_{i=2,\dots,m}\{Q^i(u)\}\,,\) \(u\in [0,1]^2\,,\) is by \eqref{eqmxqc} a quasi-copula. Hence, Theorem \ref{theqcub}~(e) implies that
$
M^2\vee D^2\vee \cdots \vee D^m \leq_{\operatorname{sm}} Q^*
$
with \(Q^*\) defined by \eqref{deqcqs}.
\item[(c)] Note that Theorem~\ref{theqcub}~(f) improves the standard bound \(\E[c(X_1^c,\ldots,X_m^c)]\) for the expectation \(\E[c(X_1,\ldots,X_m)]\) of a random vector \((X_1,\ldots,X_m)\) w.r.t.\,a continuous and supermodular payoff function \(c\colon [0,1]^m \to \R\) 
if for all $k$ the copula of $(X_1,X_k)$ is upper bounded by $Q_2\in \mathcal{Q}_2$ in the lower orthant order, even if \(c\) is not measure-inducing. 
An example of such a payoff function is the continuous and supermodular function \(c(u):=\left(\sum_{i=1}^m u_i-K\right)_+\,,\) \(u=(u_1,\ldots,u_m)\in [0,1]^m\,,\) for $K>0$. But \(c\) is measure-inducing only if \(m\leq 2\), which can be seen from the fact that the lower Fr\'{e}chet bound \(W^m\) induces a signed measure if and only if \(m\leq 2\,,\) see \cite[Theorem 2.4]{Nelsen-2010}. However, since \(\phi_c\) given by \(\phi_c(u_1,u_2)=\left((m-1)u_1+u_2-K\right)_+\) is measure-inducing, we obtain by Theorem \ref{theqcub}~(f) that \(\pi_{\phi_c}(\widehat{Q_2})\) is an upper bound for \(\psi_c({M^2\vee D^2\vee \cdots \vee D^m})\) if \(D^i\leq_{\operatorname{lo}} Q_2\) for all \(i\in \{2,\ldots,m\}\,,\) see also Lemma~\ref{lembasopt}. In Example~\ref{exasmpf} we apply the characterization in Theorem~\ref{theqcub}~(f) and determine an improved upper price bound for a basket call option under dependence information related to an common component dependence model.
\end{enumerate}
\end{rem}

\section{Improved Price Bounds under Dependence Information}\label{sec improved price bounds}

In this section, we make use of the notions and results of Section~\ref{sec dependence modelling} to derive price bounds on financial derivatives under dependence restrictions.
First, we consider the case of \(\Delta\)-monotone or \(\Delta\)-antitone payoff functions and derive price bounds that take into account upper and lower quasi-copula bounds which can be inferred from market prices of multi-asset derivatives. These bounds are derived analogously to the approach from \cite{Lux-2017}. Through an application of the duality in Proposition \ref{theaddcon}, we show how these price bounds can be significantly improved when the martingale property is incorporated as a linear constraint.
Further, we state an upper price bound for supermodular payoff functions, where we assume a common component dependence model with additional dependence information derived from multi-asset options which restrict the bivariate \((1,k)\)-marginal copula of \(S_{t_i}=(S_{t_i}^1,\ldots,S_{t_i}^d)\) for all $k=2,\ldots, d$, i.e. the copula \(C_{\Q_i^{1,k}}\) of the risk-neutral distribution \(\Q_i^{1,k}\) of \((S_{t_i}^1,S_{t_i}^k)\,,\) \(k=2,\ldots,d\,,\) under some \(\Q \in \cM(\mu)\).

 We demonstrate that already the information on prices on a small amount of multi-asset derivatives traded in the market reduces the arbitrage price bounds of the financial derivative under consideration. This is crucial, as multi-asset derivatives are typically only traded OTC and hence only limited information on their prices is available to a financial trader.



Finally, in Section~\ref{secadddepinreltocor}, we present various scenarios where dependence information related to the risk-neutral correlation is inferred from option prices and incorporated by linear constraints restricting the class \(\mathcal{M}(\mu)\) of martingale measures with fixed marginal distributions.

\subsection{Dependence Information Through Copulas}\label{secadddep}

For any $i=1,\ldots,n,$ $k=1,\ldots, d$, and for any probability measure $\Q\in\mathcal{M}(\mu)$, we recall that we denote by $F^k_i$ the univariate marginal distribution function of the component $S_{t_i}^k$ under \(\Q\,.\) By Sklar's Theorem, the multivariate distribution function $F_\Q=\Q(S\leq \cdot)$ is decomposed by 
\begin{align}\label{theSklar}
F_\Q(x)=C_{\Q}(F_1^1(x_{1}^1),\ldots,F_n^d(x_n^d)),\quad \mbox{for all}\quad x=(x_1^1,\ldots,x_n^d)\in\mathbb{R}^{nd}_+,
\end{align}
into the univariate marginal distribution functions $F_i^k$ and a copula $C_\Q\in \cC_{nd}$ which describes the dependence structure among \(S\) under \(\Q\,.\)
As we will show in Section \ref{section_numerics}, when traded market prices of appropriate multi-asset options can be observed in the market, this allows to infer restrictions on the dependence structure of the underlying assets through (pointwise) upper and lower quasi-copula bounds \(\underline{Q}\) and \(\overline{Q}\) on the copula $C_\Q$.
In the sequel we explain how these bounds can be included as inequality constraints in the Problem \eqref{equ primal problem M_lin} to derive improved price bounds for a given payoff $c$.

First, we discuss some relevant classes of models where the copulas of the models, with associated risk-neutral distributions in \(\cM(\mu)\), are restricted w.r.t.\,the lower and upper orthant ordering, respectively.

\begin{enumerate}[1)]
\item \textbf{Copula bounds w.r.t.\,the orthant orders}\\ 
Let \(\underline{Q},\overline{Q}\in \cQ_{nd}\) be \(nd\)-variate quasi-copulas such that \(\underline{Q}\leq_{\operatorname{lo}}  \overline{Q}\,.\) 
Then, we consider the class
\begin{align}\label{eqdefclloor}
\cM^{\operatorname{lo}}_{\underline{Q},\overline{Q}} := \left\{\Q\in \cM(\mu)\,\middle|\, \underline{Q} \leq_{\operatorname{lo}} C_{\Q}\leq_{\operatorname{lo}} \overline{Q}\right\}.
\end{align}

If the quasi-copula $\underline{Q}$ coincides with the lower  Fr\'{e}chet bound \(W^{nd}\), or if $\overline{Q}$ coincides with the upper  Fr\'{e}chet bound  \(M^{nd}\), then the dependence structure in \eqref{eqdefclloor} is only restricted from one-side. Moreover, if both quasi-copulas $\underline{Q}$ and $\overline{Q}$ coincide with the respective Fr\'{e}chet bounds, then no additional dependence restriction on the class \(\cM(\mu)\) is imposed and we have that \(\cM^{\operatorname{lo}}_{W^{nd},M^{nd}}=\cM(\mu)\,.\)
The following observation turns out to be crucial for the implementation of copula constraints via trading strategies in Theorem~\ref{theloob}.
\begin{lem}\label{lem_indicator_constraints}
Let \(\underline{Q},\overline{Q}\in \cQ_{nd}\) and define for $x \in \R^{nd}$ the functions $f_x:=\one_{\{\cdot \leq x\}},~g_x:=\one_{\{\cdot < x\}}$, \(\widetilde{f}_x:=\one_{\{\cdot\, > x\}},~\widetilde{g}_x:=\one_{\{\cdot\, \geq x\}}\) where the inequalities in the indicator functions are meant componentwise. Let $F_i^k(\cdot) =  \int_{-\infty}^ \cdot  \,\de \mu_i^k$, $1\leq k \leq d$, $1\leq i \leq n$, and let \(\mathds{Q_+}\) denote the set of non-negative rational numbers. Then the following holds.
\begin{itemize}
\item[(a)]
We have that  \(\underline{Q} \leq_{\operatorname{lo}} C_{\Q}\leq_{\operatorname{lo}} \overline{Q}\) for  $\Q \in \cM(\mu)$ is equivalent to a countable number of inequality constraints of the form
\begin{align}\label{eqineqconstr}
\begin{array}{l}
\phantom{-}\E_{\Q} \left[g_x(S)\right]\leq \phantom{-}\overline{Q}(F_1^1(x_1^1),\ldots,F_n^d(x_n^d))\\[2mm]
\E_{\Q} \left[-f_x(S)\right]\leq -\underline{Q}(F_1^1(x_1^1),\ldots,F_n^d(x_n^d))
\end{array}
 ~\forall\, x=\left(x_1^1,\ldots,x_n^d\right)\in \mathds{Q}_+^{nd}\,.
\end{align}
\item[(b)]
We have that \(\underline{Q} \leq_{\operatorname{uo}} C_{\Q}\leq_{\operatorname{uo}} \overline{Q}\) for  $\Q \in \cM(\mu)$ is equivalent to a countable number of inequality constraints of the form
\begin{align}\label{eqineqconstr_uo}
\begin{array}{l}
\phantom{-}\E_{\Q} \left[\widetilde{f}_x(S)\right]\leq \phantom{-}\overline{Q}(F_1^1(x_1^1),\ldots,F_n^d(x_n^d))\\[2mm]
\E_{\Q} \left[-\widetilde{g}_x(S)\right]\leq -\underline{Q}(F_1^1(x_1^1),\ldots,F_n^d(x_n^d))
\end{array}
 ~\forall\, x=\left(x_1^1,\ldots,x_n^d\right)\in \mathds{Q}_+^{nd}\,.
\end{align}
\end{itemize}
\end{lem}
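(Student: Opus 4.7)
The plan is to translate the copula-level pointwise inequalities into inequalities on the joint CDF (resp.\ survival function) of $\Q$ via Sklar's representation \eqref{theSklar}, then to use right-continuity of the marginals $F_i^k$ together with the Lipschitz property built into every quasi-copula to reduce the continuum of constraints indexed by $x\in\R_+^{nd}$ to a countable family indexed by $x\in\mathds{Q}_+^{nd}$. The two governing identities are
\begin{align*}
F_\Q(x) &= \Q(S\le x) = C_\Q(F_1^1(x_1^1),\ldots,F_n^d(x_n^d)) = \E_\Q[f_x(S)], \\
\widehat{F_\Q}(x) &= \Q(S>x) = \widehat{C_\Q}(F_1^1(x_1^1),\ldots,F_n^d(x_n^d)) = \E_\Q[\widetilde f_x(S)],
\end{align*}
used for parts (a) and (b) respectively.

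For the forward direction of (a), the pointwise inequality $\underline{Q}(u)\le C_\Q(u)\le \overline{Q}(u)$ evaluated at $u=(F_1^1(x_1^1),\ldots,F_n^d(x_n^d))$ together with Sklar gives $\underline{Q}(u)\le F_\Q(x)\le \overline{Q}(u)$; the first of these rearranges directly to the second inequality of \eqref{eqineqconstr}, while the pointwise bound $g_x\le f_x$ together with $\E_\Q[f_x(S)]=F_\Q(x)\le \overline{Q}(u)$ produces the first inequality of \eqref{eqineqconstr}. Restricting to $x\in\mathds{Q}_+^{nd}$ supplies the countable family.

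For the converse of (a), fix $x\in\R_+^{nd}$ and pick $x^{(m)}\in\mathds{Q}_+^{nd}$ with $x^{(m)}\downarrow x$ componentwise. Since $\{S<x^{(m)}\}\downarrow\{S\le x\}$, continuity from above yields $\E_\Q[g_{x^{(m)}}(S)]\to F_\Q(x)$; simultaneously, right-continuity of each $F_i^k$ and Lipschitz continuity of $\overline{Q}$ provide $\overline{Q}(F_1^1(x_1^{1,(m)}),\ldots)\to \overline{Q}(F_1^1(x_1^1),\ldots)$. Passing to the limit in the first constraint of \eqref{eqineqconstr} delivers $F_\Q(x)\le \overline{Q}(F_1^1(x_1^1),\ldots)$ for every $x\in\R_+^{nd}$, and an analogous limit of the second constraint yields $\underline{Q}(F_1^1(x_1^1),\ldots)\le F_\Q(x)$. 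By Sklar, both inequalities hold on the product of marginal ranges, and the Lipschitz continuity shared by $C_\Q,\underline{Q},\overline{Q}$ propagates them to all of $[0,1]^{nd}$, using the uniqueness of $C_\Q$ on the closure of that range. Part (b) follows by the same scheme with $\widetilde f_x,\widetilde g_x$ replacing $f_x,g_x$ and the survival identity replacing the CDF identity; here one takes $x^{(m)}\downarrow x$ with $x^{(m)}>x$ strictly so that $\{S\ge x^{(m)}\}\uparrow\{S>x\}$, and continuity from below supplies $\E_\Q[\widetilde g_{x^{(m)}}(S)]\to \widehat{F_\Q}(x)$. The principal technical obstacle throughout is the final extension of the pointwise inequalities from the product of marginal ranges to all of $[0,1]^{nd}$ when some $F_i^k$ has atoms; this is handled cleanly by the uniform Lipschitz continuity of quasi-copulas.
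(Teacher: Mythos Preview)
Your argument is correct and follows the same route as the paper: both rely on approximating $x$ from above by rationals $x^{(m)}\downarrow x$, using continuity of measure for $\Q(S<x^{(m)})\to\Q(S\le x)$ on the left and right-continuity of the $F_i^k$ together with Lipschitz continuity of quasi-copulas on the right. The paper is in fact terser---it simply asserts that the lower orthant ordering is ``by definition'' equivalent to the $f_x$-inequalities over $\mathds{Q}_+^{nd}$ and does not discuss the passage from the product of marginal ranges to all of $[0,1]^{nd}$ that you flag at the end. On that last point your claim that Lipschitz continuity alone ``propagates'' the pointwise inequality is not a complete argument (two Lipschitz functions agreeing in order on a subset need not compare on the complement), but since the membership $\Q\in\cM^{\operatorname{lo}}_{\underline{Q},\overline{Q}}$ is a condition on the \emph{measure} and depends only on the values of $C_\Q$ on $\prod_{i,k}\mathfrak{Ran}(F_i^k)$, this gap is harmless for the lemma's purpose and is one the paper itself leaves unaddressed.
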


On account of Lemma~\ref{lem_indicator_constraints} the inequality constraints \eqref{eqineqconstr} specify $(f_i^{\operatorname{ineq}})_{i\in I^{\operatorname{ineq}}}$ and $(K_i^{\operatorname{ineq}})_{i\in I^{\operatorname{ineq}}}$ in \eqref{eq_restriction}.
In the following Theorem \ref{theloob}, which is partly a consequence of Proposition~\ref{thm_duality_linear_constraints}, we identify the upper price bound \(\overline{P}_{\cM^{\operatorname{lo}}_{\underline{Q},\overline{Q}}}\) of a payoff $c\in U_{\operatorname{lin}}(\R_+^{nd})$ with the infimal price over all super-replicating strategies which involve trading in digital options with payoffs $g_x$, $-f_x$, \(x\in \mathds{Q}_+^{nd}\,.\)  As a consequence of Definition~\ref{def_sm_qc}, if \(c\in \cF_\Delta^-\) is a \(\Delta\)-antitone, left-continuous, and measure-inducing payoff function, an upper bound for \(\overline{P}_{\cM^{\operatorname{lo}}_{\underline{Q},\overline{Q}}}\) is given by a quasi-expectation w.r.t.\,\(\overline{Q}\,.\)

Similar to the model in \eqref{eqdefclloor}, by means of Lemma~\ref{lem_indicator_constraints}~(b) we also derive improved price bounds for the class
$
\cM^{\operatorname{uo}}_{\underline{Q},\overline{Q}} := \{\Q\in \cM(\mu)\,|\, \underline{Q} \leq_{\operatorname{uo}} C_{\Q}\leq_{\operatorname{uo}} \overline{Q}\}
$
of risk-neutral distributions with dependence restrictions w.r.t. the upper orthant order. In this case, we obtain for every $\Delta$-monotone payoff function \(c\in \cF_\Delta\) which is left-continuous and measure-inducing, an upper bound for \(\overline{P}_{\cM^{\operatorname{uo}}_{\underline{Q},\overline{Q}}}\) which is given by a quasi-expectation w.r.t.\,\(\overline{Q}\,.\) Examples of typical \(\Delta\)-antitone and \(\Delta\)-montone payoff functions are provided in  \cite[Table 1]{Lux-2017,Roncalli-2001,Tankov-2011}.


\begin{thm}[Upper price bounds with \(\leq_{\operatorname{lo}}\)- and \(\leq_{\operatorname{uo}}\)-contraints]\label{theloob}~\\
\begin{itemize}
\item[(a)] \label{theloob1} Let \(c\in U_{\operatorname{lin}}(\R_+^{nd})\,.\) If \({\cM}^{\operatorname{lo}}_{\underline{Q},\overline{Q}}\ne \emptyset\,,\) then 
\begin{align}\label{theloob1a}
\overline{P}_{{\cM}^{\operatorname{lo}}_{\underline{Q},\overline{Q}}} &= \max_{\Q\in {\cM}^{\operatorname{lo}}_{\underline{Q},\overline{Q}}} \E_\Q \left[c(S)\right] = \underline{\mathcal{D}}_{\mathcal{S}}\,.
\end{align}
In particular, if \(c\in \cF_\Delta^-\cap U_{\operatorname{lin}}(\R_+^{nd})\) is left-continuous, then
\begin{align}\label{eqlosb}
\overline{P}_{{\cM}^{\operatorname{lo}}_{\underline{Q},\overline{Q}}} \leq \pi_c^{\mu} (\widehat{\overline{Q}})\,.
\end{align}
\item[(b)] \label{theloob2} Let \(c\in U_{\operatorname{lin}}(\R_+^{nd})\,.\) If \({\cM}^{\operatorname{uo}}_{\underline{Q},\overline{Q}}\ne \emptyset\,,\) then
\begin{align*}
\overline{P}_{{\cM}^{\operatorname{uo}}_{\underline{Q},\overline{Q}}} &= \max_{\Q\in {\cM}^{\operatorname{uo}}_{\underline{Q},\overline{Q}}} \E_\Q \left[c(S)\right] = \underline{\mathcal{D}}_{\mathcal{S}}\,.
\end{align*}
In particular, if \(c\in \cF_\Delta\cap U_{\operatorname{lin}}(\R_+^{nd})\) is left-continuous, then
\begin{align}\label{equosb}
\overline{P}_{{\cM}^{\operatorname{uo}}_{\underline{Q},\overline{Q}}} \leq \pi_c^{\mu} (\widehat{\overline{Q}})\,.
\end{align}
\end{itemize}
\end{thm}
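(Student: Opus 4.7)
The plan is to deduce both assertions from the duality in Proposition~\ref{theaddcon} combined with Lemma~\ref{lem_indicator_constraints} and the integral-characterization \eqref{charortord} of the orthant orders on quasi-copulas.

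First I would verify that the hypotheses of Proposition~\ref{theaddcon} are met after reformulating the copula constraints as linear inequality constraints. By Lemma~\ref{lem_indicator_constraints}(a), the pointwise inequalities $\underline{Q}\leq_{\operatorname{lo}} C_{\Q}\leq_{\operatorname{lo}}\overline{Q}$ are equivalent to countably many constraints of the form $\E_\Q[g_x(S)]\leq \overline{Q}(F_1^1(x_1^1),\ldots,F_n^d(x_n^d))$ and $\E_\Q[-f_x(S)]\leq -\underline{Q}(F_1^1(x_1^1),\ldots,F_n^d(x_n^d))$ indexed by $x\in\mathds{Q}_+^{nd}$. The function $g_x=\one_{\{\cdot<x\}}$ is lower semicontinuous (its sub-level set is open) and bounded, and $-f_x=-\one_{\{\cdot\leq x\}}$ is lower semicontinuous and bounded as well; hence both lie in $L_{\operatorname{lin}}(\R_+^{nd})$. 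Taking $\cI^{\operatorname{eq}}=\emptyset$ and $\cI^{\operatorname{ineq}}$ as this countable index set, Proposition~\ref{theaddcon} yields the superhedging duality $\overline{P}_{\cM^{\operatorname{lo}}_{\underline{Q},\overline{Q}}}=\underline{\mathcal{D}}_{\mathcal{S}}$ and the existence of a maximizing $\Q\in \cM^{\operatorname{lo}}_{\underline{Q},\overline{Q}}$, provided the set is non-empty. This gives the first line of \eqref{theloob1a}.

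Next I would prove the quasi-expectation bound \eqref{eqlosb} for left-continuous $\Delta$-antitone $c\in U_{\operatorname{lin}}(\R_+^{nd})$. For any $\Q\in\cM^{\operatorname{lo}}_{\underline{Q},\overline{Q}}$, writing $F_{\Q}=C_{\Q}\circ(F_1^1,\ldots,F_n^d)$ via Sklar's theorem, the integration-by-parts identity from Remark~\ref{propconncomdom} gives
\begin{align*}
\E_\Q[c(S)] \;=\; \psi_c(F_\Q) \;=\; \pi_c(\widehat{F_\Q}) \;=\; \pi_c(\widehat{C_\Q}\circ(F_1^1,\ldots,F_n^d)) \;=\; \pi_{c\circ(F_1^{1,-1},\ldots,F_n^{d,-1})}(\widehat{C_\Q})
\end{align*}
by the transformation formula \eqref{margtrafgs}, where $g:=c\circ(F_1^{1,-1},\ldots,F_n^{d,-1})$ is left-continuous and $\Delta$-antitone on $[0,1)^{nd}$ since each $F_i^{k,-1}$ is non-decreasing and left-continuous. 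From $C_{\Q}\leq_{\operatorname{lo}}\overline{Q}$ and the characterization \eqref{charortord}, we obtain $\pi_g(\widehat{C_\Q})\leq \pi_g(\widehat{\overline{Q}})$. Applying the transformation formula once more in the reverse direction then gives $\pi_g(\widehat{\overline{Q}})=\pi_c(\widehat{\overline{Q}}\circ(F_1^1,\ldots,F_n^d))=\pi_c^\mu(\widehat{\overline{Q}})$, which yields \eqref{eqlosb} after taking the supremum over $\Q$.

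For part~(b) I would proceed completely analogously: Lemma~\ref{lem_indicator_constraints}(b) provides the reformulation via $\widetilde{f}_x$ and $\widetilde{g}_x$; again these belong to $L_{\operatorname{lin}}(\R_+^{nd})$ (the sets $\{S>x\}$ are open so $\widetilde{f}_x$ is lower semicontinuous, and $-\widetilde{g}_x$ is likewise lower semicontinuous), and Proposition~\ref{theaddcon} yields the duality and attainment. For the bound \eqref{equosb}, one replaces $\Delta$-antitone by $\Delta$-monotone and uses the second line of \eqref{charortord} in place of the first; the same integration-by-parts and transformation argument then delivers $\E_\Q[c(S)]\leq \pi_c^\mu(\widehat{\overline{Q}})$.

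The main obstacle I anticipate is the bookkeeping required to apply \eqref{charortord} to $g=c\circ(F_1^{1,-1},\ldots,F_n^{d,-1})$: one must argue that $g$ is indeed left-continuous and measure-inducing on $[0,1)^{nd}$ (inheriting these properties from $c$ and from the left-continuity/monotonicity of the generalized inverses of the marginals), and that the integrability hypothesis in Remark~\ref{propconncomdom}(b) is satisfied so that \eqref{reptraf} applies. The linear-growth control on $c$ together with the integrability of $F_i^{k,-1}$ (which follows from $\mu_i^k$ having finite first moment $S_0^k$) should give exactly what is needed.
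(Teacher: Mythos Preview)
Your proposal is correct and follows essentially the same route as the paper: invoke Lemma~\ref{lem_indicator_constraints} to recast the orthant constraints as countably many lower-semicontinuous inequality constraints, apply Proposition~\ref{theaddcon} for duality and attainment, and then use \eqref{qeopexopeq} together with the characterization~\eqref{charortord} to obtain the quasi-expectation bound. The only cosmetic difference is that the paper inserts an intermediate supremum over all quasi-copulas $Q$ with $\underline{Q}\leq_{\operatorname{lo}}Q\leq_{\operatorname{lo}}\overline{Q}$ before identifying it with $\pi_c^\mu(\widehat{\overline{Q}})$, whereas you compare $C_\Q$ directly to $\overline{Q}$; and the paper writes out the integrability check $\int_0^1 |c_I\circ((F_i^{j_1})^{-1},\ldots,(F_i^{j_k})^{-1})(u,\ldots,u)|\,du\leq \alpha\bigl(1+\sum_{j\in I}\E_{\mu_i^j}[|S_{t_i}^j|]\bigr)<\infty$ explicitly, which is precisely the verification you flag as the remaining bookkeeping.
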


%


\begin{rem}\label{remgenan}
\begin{enumerate}
\item[(a)] Similar to Theorem \ref{theloob}, we also obtain dual lower bounds under consideration of the martingale property as well as lower bounds \(\pi_c^{\mu} (\widehat{\underline{Q}})\) and \(\pi_c^{\mu} (\widehat{\underline{Q}})\) for \(\underline{P}_{{\cM}^{\operatorname{lo}}_{\underline{Q},\overline{Q}}}\) and \(\underline{P}_{{\cM}^{\operatorname{uo}}_{\underline{Q},\overline{Q}}}\) in the case of a left-continuous, measure-inducing, and \(\Delta\)-antitone / \(\Delta\)-monotone payoff function.
\item[(b)]  By incorporating the martingale condition as a linear constraint, Theorem \ref{theloob} improves the dual risk bounds considered in \cite[Theorem 3.2]{Rueschendorf-2019} as well as the quasi-copula bounds obtained in \cite{Lux-2017}.
\item[(c)]  Since the martingale property also restricts the dependence structure of \(\cM(\mu)\,,\) the class \({\cM}^{\operatorname{lo}}_{\underline{Q},\overline{Q}}\) might be empty for too restrictive choices of the quasi-copulas \(\underline{Q}\) and \(\overline{Q}\,.\) However, if the market is free of \emph{model-independent arbitrage}, compare \cite[Definition 1.2.]{acciaio2016model}, and if the dependence restrictions are inferred from option prices with continuous payoff functions, then there exists a martingale measure \(\Q\in \mathcal{M}(\mu)\) for which the bounds \(\underline{Q}\) and \(\overline{Q}\) for the copula of \(\Q\) are consistent. This follows by \cite[Theorem 1.3.]{acciaio2016model}, for which we additionally need to assume the existence of a convex superlinear payoff that can be bought.
\item[(d)] \label{remgenan1} 

%

Note that by Definition~\ref{def_orthant_orders}~(b) we have 
$$
 {\cM}^{\operatorname{uo}}_{\underline{Q},\overline{Q}} =\left\{\Q\in \mathcal{M}(\mu)~\middle|~\widehat{\underline{Q}}(u)\leq \widehat{C_\Q}(u)\leq \widehat{\overline{Q}}(u) \text{ for all } u\in [0,1]^{nd} \right\}.
$$
 This allows to extend  \eqref{equosb} to bounded measurable functions $\widehat{\underline{Q}}, \widehat{\overline{Q}}$, defined on $[0,1]^{nd}$ which are not survival functions of quasi-copulas. Indeed, the identity in \eqref{equosb} is still valid due to the positivity of the measures \(\eta_{c_I}\) induced by the \(\Delta\)-monotone function \(c\) applied in \eqref{defquexpop} for the definition of $\pi_c^\mu$ and due to the pointwise upper bound \(\widehat{Q^I}\) for \(\widehat{C_\Q^I}\) for all \(I\subseteq \{1,\ldots,nd\}\,\).
\end{enumerate}
\end{rem}


\item \textbf{Upper bounds in common component dependence models}\\
Liquidly traded options written on each pair of the underlying assets $S^1,\ldots, S^d$ which allow to derive inter-asset dependence information are often not available.
However, prices of derivatives written on a main reference asset \(S^1_{t_i}\) and another asset \(S^{k}_{t_i}\) at the same time \(t_i\) may be available for \(k=2,\ldots,d\,.\) For example, basket or digital options on \(S^1\) and \(S^2\) as well as on \(S^1\) and \(S^3\) may be traded. Therefore, we consider the case where prices of specific derivatives as a function of an asset \(S^1_{t_i}\) and assets \(S^{k}_{t_i}\) at time \(t_i\) are known for \(k=2,\ldots,d\,.\) This price information then implies by \cite[Theorem 3.1]{Lux-2017} an upper quasi-copula bound \(Q^k\in \cQ_2\) (w.r.t.\,the lower orthant ordering) for the copula \(C_{\Q_i^{1,k}}\) of the risk-neutral distribution \(\Q_i^{1,k}\) of \((S_{t_i}^1,S_{t_i}^k)\,,\) \(k=2,\ldots,d\,,\) under some \(\Q \in \cM(\mu)\) because then the value of the copula \(C_{\Q_i^{1,k}}\) is known on a  corresponding compact set.
Here \(\Q_i^{1,k}\) denotes the bivariate \((1,k)\)-marginal distribution of \(\Q\) at time \(t_i\,.\)
Taking the pointwise maximum $Q_2:=\max_{k=2,\dots,d}Q^k$ over these quasi-copula bounds then yields a quasi-copula \(Q_2\in \cQ_2\) as a pointwise upper bound for the associated copulas \(\{C_{\Q_i^{1,k}}\,, k=2,\ldots,d\},\) see Example~\ref{exasmpf}. 

Given marginal distributions $\mu=(\mu_1^1,\dots,\mu_n^d)$ with $\mu_i^k\in \mathcal{P}(\R_+)$ for all $i=1,\dots,n$, $k=1,\dots,d$, a quasi-copula $Q_2\in \cQ_2$, and some time $t_i$, we consider the class
\begin{align}\label{eqdefclinfamo}
\cM_{Q_2}^{\operatorname{CCD}}&:= \left\{\Q\in \cM(\mu)\,\middle|\, C_{\Q_i^{1,k}}\leq_{\operatorname{lo}} Q_2 ~\text{for all } k=2,\ldots,d\right\}
\end{align}
of measures \(\Q\) in \(\cM(\mu)\) such that for all $k=2,\dots,d$ the copula \(C_{\Q_i^{1,k}}\) associated with the bivariate component \((S_{t_i}^1,S_{t_i}^k)\) under \(\Q\) is upper bounded by \(Q_2\) w.r.t.\,the lower orthant ordering. In particular, this class is another example for a specification of the set $\mathcal{M}^{\operatorname{lin}}$. Elements of this class correspond to common component dependence models (CCD model) with bivariate dependence specification sets as described in Section \ref{sec IFM}. Here the market-implied dependence structure of \((S_{t_i}^1,\ldots,S_{t_i}^d)\) is restricted to  the subclass of copulas having the property that the bivariate marginal copula \(C_{\Q_i^{1,k}}\)  associated with \((S_{t_i}^1,S_{t_i}^k)\) belongs to the constrained specification set \(\{C\in \cC_2\,, C\leq_{\operatorname{lo}} Q_2\}\)  for all \(k=2,\ldots,d\,,\) see also \cite{Ansari-Rueschendorf-2020}. 
If \(Q_2\) is the upper Fr\'{e}chet copula \(M^2\), no dependence restrictions are imposed, hence \(\cM_{Q_2}^{\operatorname{CCD}} = \cM(\mu)\,.\) 

As a consequence of Theorem \ref{theqcub}, the dependence structure of \(S_{t_i}=(S_{t_i}^1,\ldots,S_{t_i}^d)\) in \(\cM_{Q_2}^{\operatorname{CCD}}\) has an upper bound w.r.t. \(\leq_{\operatorname{sm}}\) given by the quasi-copula in \eqref{deqcqs}.
This yields even for a supermodular payoff function \(c\in \cF_{\operatorname{sm}}\) a representation of an upper bound for \(\overline{P}_{\cM_{Q_2}^{\operatorname{\operatorname{CCD}}}}\) in form of an analytic expression depending on the quasi-copula \(Q_2\,.\)
To apply the duality result from Proposition~\ref{thm_duality_linear_constraints}, we denote by $\R_+^{nd}\ni x=(x_1^1,\dots,x_n^d)\mapsto \operatorname{proj}_i^k(x)=x_i^k \in \R_+$ the projection of $x$ onto its $(i,k)$-th component.
In this setting, we take inequality constraints of the form
\begin{align*}
\E_\Q [g_{x,y}(S_{t_i}^1,S_{t_i}^k)]\leq Q_2(F_{i}^1(x),F_{i}^k(y))\,,~~~g_{x,y}(\cdot):=\one_{\{\cdot < (x,y)\}}\,, (x,y)\in \mathds{Q}_+^2\,, 2\leq k\leq d\,,
\end{align*}
into account.
Then, we derive the following duality result.
\begin{thm}[Upper price bounds with contraints related to CCD models]\label{corsmifm}~\\
Assume that \(\cM_{Q_2}^{\operatorname{\operatorname{CCD}}}\ne \emptyset\,.\) Then, the following holds.
\begin{itemize}
\item[(a)]
Let \({c}\in U_{\operatorname{lin}}(\R_+^{nd})\), then 
\begin{align}\label{eqcorsmifm1}
\overline{P}_{\cM_{Q_2}^{\operatorname{\operatorname{CCD}}}} &= \max_{\Q\in \cM_{Q_2}^{\operatorname{\operatorname{CCD}}}} \E_\Q [c(S)] = \underline{\mathcal{D}}_{\mathcal{S}}\,.
\end{align}
\item[(b)]
Let \(\widetilde{c}\in \cF_{\operatorname{sm}}\cap C_{\operatorname{lin}}(\R_+^{nd})\) be componentwise increasing/componentwise decreasing. Then, we have that the function $c:=\left(\widetilde{c} \circ \operatorname{proj}_i^1,\cdots, \widetilde{c} \circ \operatorname{proj}_i^d\right) \in \cF_{\operatorname{sm}}\cap C_{\operatorname{lin}}(\R_+^{d})$ is also componentwise increasing/componentwise decreasing and
\begin{align}\label{eqcorsmifm2}
\overline{P}_{\cM_{Q_2}^{\operatorname{\operatorname{CCD}}}} \leq \pi_{\phi_{c\,\circ\left((F_i^1)^{-1},\ldots,(F_i^d)^{-1}\right)}} (\widehat{Q_2})\,\text{ for all } i=1,\dots,n,
\end{align}
with
\(\phi_{c\,\circ\left((F_i^1)^{-1},\ldots,(F_i^d)^{-1}\right)}\) defined by \eqref{defphif}.
\end{itemize}

\end{thm}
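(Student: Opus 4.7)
The plan is to prove part (a) by recasting the copula constraint as countably many linear inequality constraints so that Proposition \ref{theaddcon} directly applies, and to prove part (b) by chaining two supermodular-order inequalities: the maximality of the upper product in the class of CCD models, and Theorem \ref{theqcub}(f).

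For part (a), I would first adapt Lemma \ref{lem_indicator_constraints}(a) to the bivariate $(1,k)$-marginal: by right-continuity and monotonicity of two-dimensional distribution functions, the lower-orthant bound $C_{\Q_i^{1,k}}\leq_{\operatorname{lo}} Q_2$ is equivalent to the countable family of inequalities
\[
\E_\Q\bigl[\one_{\{S_{t_i}^1< x,\, S_{t_i}^k<y\}}\bigr]\leq Q_2\bigl(F_i^1(x),F_i^k(y)\bigr),\quad (x,y)\in \mathds{Q}_+^2,~k=2,\ldots,d.
\]
Each integrand, viewed as a function of $S\in \R_+^{nd}$, is the indicator of an open set and hence lower semicontinuous and bounded, so it lies in $L_{\operatorname{lin}}(\R_+^{nd})$. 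Consequently $\cM_{Q_2}^{\operatorname{CCD}}$ is a set of the form $\mathcal{M}^{\operatorname{lin}}$ with $\mathcal{I}^{\operatorname{eq}}=\emptyset$ and countable $\mathcal{I}^{\operatorname{ineq}}$, so Proposition \ref{theaddcon} yields both the duality $\overline{P}_{\cM_{Q_2}^{\operatorname{CCD}}}=\underline{\cD}_{\cS}$ and the attainment in \eqref{eqcorsmifm1}.

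For part (b), fix an arbitrary $\Q\in \cM_{Q_2}^{\operatorname{CCD}}$, let $C_{\Q_i}$ denote the $d$-variate copula of $(S_{t_i}^1,\ldots,S_{t_i}^d)$ under $\Q$, and set $D^k:=C_{\Q_i^{1,k}}$, so that $D^k\leq_{\operatorname{lo}} Q_2$ by definition of $\cM_{Q_2}^{\operatorname{CCD}}$. The maximality result recalled at the beginning of Section \ref{sec IFM} shows that $(S_{t_i}^1,\ldots,S_{t_i}^d)$, being an element of the PSFM class with common factor $Z=S_{t_i}^1$ and bivariate specifications $D^2,\ldots,D^d$, satisfies $C_{\Q_i}\leq_{\operatorname{sm}} M^2\vee D^2\vee \cdots \vee D^d$. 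Setting $f:=\widetilde{c}\circ((F_i^1)^{-1},\ldots,(F_i^d)^{-1})$, I would observe that $f$ inherits left-continuity, supermodularity, and componentwise monotonicity from $\widetilde{c}$ and the (increasing, left-continuous) generalized inverses. Combining the transformation formula \eqref{defexpop} with the supermodular comparison yields
\[
\E_\Q[c(S)]=\psi_f(C_{\Q_i})\leq \psi_f\bigl(M^2\vee D^2\vee \cdots \vee D^d\bigr),
\]
and Theorem \ref{theqcub}(f) then bounds the right-hand side by $\pi_{\phi_f}(\widehat{Q_2})$. Taking the supremum over $\Q\in \cM_{Q_2}^{\operatorname{CCD}}$ produces \eqref{eqcorsmifm2}.

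The main technical obstacle will be verifying the regularity hypotheses in Theorem \ref{theqcub}(f) for the composition $f=\widetilde{c}\circ((F_i^1)^{-1},\ldots,(F_i^d)^{-1})$: namely that each $(\phi_f)_I$ is Lebesgue-integrable on $[0,1)^{|I|}$ for $\emptyset\neq I\subseteq\{1,2\}$, and that $f$ is lower bounded by an $(M^2\vee D^2\vee \cdots \vee D^d)$-integrable function. Both should reduce, via the linear-growth assumption $\widetilde{c}\in C_{\operatorname{lin}}$, to integrability of the generalized inverses $(F_i^k)^{-1}$ on $[0,1)$, which in turn follows from the finite first moment $\E_{\mu_i^k}[x]=S_0^k<\infty$ built into the definition of $\Pi(\mu)$. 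A minor additional care-point is formulating the bivariate analogue of Lemma \ref{lem_indicator_constraints}(a) used in part (a), but this is routine since it only relies on the usual density of $\mathds{Q}_+$ in $\R_+$ together with right-continuity of distribution functions.
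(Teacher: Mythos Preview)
Your proposal is correct and follows essentially the same route as the paper: part (a) via the bivariate analogue of Lemma~\ref{lem_indicator_constraints} combined with Proposition~\ref{theaddcon}, and part (b) via the chain $\psi_f(C_{\Q_i})\leq \psi_f(M^2\vee D^2\vee\cdots\vee D^d)\leq \pi_{\phi_f}(\widehat{Q_2})$ using upper-product maximality and Theorem~\ref{theqcub}(f), with the integrability checks handled exactly as you outline through $C_{\operatorname{lin}}$ and finite first moments. The only cosmetic difference is that the paper first invokes an auxiliary lemma identifying $\cM_{Q_2}^{\operatorname{CCD}}$ with the class $\{\Q\in\cM(\mu):C_{\Q_i}\leq_{\operatorname{lo}} Q^*\}$ before enlarging to all copulas $C\leq_{\operatorname{lo}} Q^*$, whereas you work directly with each $\Q$; the supermodular-comparison step and the application of Theorem~\ref{theqcub}(f) are identical.
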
 

In particular, Theorem~\ref{corsmifm}~(b) can be applied to payoff functions that depend on multiple assets but only on one specific maturity.

\begin{rem}
\begin{enumerate}[(a)]
\item Theorem \ref{corsmifm} allows a closed-form representation of an upper price bound for a componentwise increasing/componentwise decreasing supermodular payoff function by an integral of the bivariate dependence constraint \(Q_2\,.\) Examples for common supermodular payoff functions which are componentwise increasing/componentwise decreasing are listed in \cite[Table 1]{Lux-2017,Roncalli-2001,Tankov-2011}. In particular, if \(\varphi\colon \R\to \R\) is (increasing and) convex, then \(\varphi\left(\sum_{i=1}^d x_i\right)\) is (increasing and) supermodular.
\item A similar result as the equalitity in \eqref{eqcorsmifm1} holds true for lower bounds. However, an analogue of \eqref{eqcorsmifm2} for lower bounds cannot be obtained for general dimension since upper product ordering results are different from ordering results for lower products which correspond to lower bounds in partially specified factor models, see \cite{Ansari-Rueschendorf-2021}.
\end{enumerate}
\end{rem}

\end{enumerate}


\subsection{Dependence Information Related to Correlations}\label{secadddepinreltocor}
In addition to martingale and marginal constraints, we take into account additional market-implied dependence information related to the inter-asset correlation.

\begin{enumerate}[1.)]\setcounter{enumi}{2}
\item \textbf{Knowledge of risk-neutral correlations} \\
We incorporate additional information on the covariance of the assets.
This approach is motivated by the observation of basket options with payoff structure
\begin{equation}
(a_1S_{t_i}^k +a_2S_{t_j}^l-K)_+, \label{eq_basket_def}
\end{equation}
for some fixed weights $a_1,a_2 \in \R$ with $a_1  a_2 \neq 0$ and $k,l \in \{1,\dots,d\},~i,j\in \{1,\dots,n\}$.
If prices of such options are observable for all strikes $K\in \R$ and $\Q \in \mathcal{M}(\mu)$ is consistent\footnote{We call a measure $\Q$ consistent with a price $p$ for a derivative $c$ if it holds $\E_\Q[c]=p$.} with these prices, then we deduce that\footnote{Here we implicitly assume that the prices are differentiable as a function of the strike. If this is not the case, then we consider instead the right derivative and still obtain a one-to-one relation between the risk-neutral distribution of the sum and the basket option prices. Compare for the case of call options e.g. \cite[Lemma 2.2]{hobson2011skorokhod} and the discussion thereafter.}
$
\frac{\partial}{\partial K}\E_\Q\left[(a_1S_{t_i}^k +a_2S_{t_j}^l-K)_+\right]=\Q\left(a_1S_{t_i}^k +a_2S_{t_j}^l \leq K\right)-1.
$
We hence obtain the distribution\footnote{We remark that in practice one may only observe the prices of basket options for a (small) finite number of strikes. Depending on the number of observed strikes this might impose restrictions on the applicability of the approach. However, if the number of observed strikes is sufficiently high, then it is possible to consider finite differences to compute the inferred distribution and its second moments. Compare also, e.g., \cite{bauerle2021consistent} and \cite{cohen} where this method is applied to obtain the marginal distribution of stocks implied from call option prices.} of $a_1S_{t_i}^k +a_2S_{t_j}^l$. This allows in particular to compute its second moment. In addition, observe that
\begin{align} \label{eq_covariance_info_basket}
\E_\Q\left[S_{t_i}^kS_{t_j}^l\right]=\frac{\E_\Q\left[\left(a_1S_{t_i}^k +a_2S_{t_j}^l\right)^2\right]-a_1^2\E_\Q\left[\left(S_{t_i}^k\right)^2\right]-a_2^2\E_\Q\left[\left(S_{t_j}^l\right)^2\right]}{2a_1a_2},
\end{align}
assuming that the second moments of the marginal distributions exist.
Since $\Q \in \mathcal{M}(\mu)$, all values of the right-hand side of \eqref{eq_covariance_info_basket}
are known and hence so is the left-hand side. Moreover, by the martingale property, the correlation is given by
\begin{align}
\operatorname{Corr}_\Q\left(S_{t_i}^k,S_{t_j}^l\right)=\frac{\E_\Q\left[S_{t_i}^k S_{t_j}^l\right]-S_0^kS_0^l}{\sqrt{\E_{\mu_i^k}\Big[(S_{t_i}^k)^2\Big]-(S_0^k)^2}\sqrt{\E_{\mu_j^l}\Big[(S_{t_j}^l)^2\Big]-(S_0^l)^2}},
\end{align}
which by \eqref{eq_covariance_info_basket} is known, too, since $S_0^kS_0^l$ is some constant value. Therefore, price information on options of \eqref{eq_basket_def}-type for all strikes $K$ is sufficient to obtain information on the correlation between $S^k_{t_i}$ and $S^l_{t_j}$. To model the risk-neutral correlation $\rho_{ij}^{kl}:=\operatorname{Corr}_\Q\left(S_{t_i}^k,S_{t_j}^l\right) \in [-1,1]$ between $S_{t_i}^k$ and $S_{t_j}^l$ with respect to a measures $\Q \in \mathcal{M}(\mu)$, we specify the equality constraints in equation \eqref{eq_restriction} by $f_{(i,j,k,l)}^{\operatorname{eq}}\in C_{\operatorname{lin}}(\R_+^{nd})$ with
 \begin{equation}\label{eq_correlation_constraints_1}
f_{(i,j,k,l)}^{\operatorname{eq}}(x_1^1,\ldots,x_n^d)=\frac{x_i^k x_j^l-S_0^kS_0^l }{\sqrt{\E_{\mu_i^k}\Big[(S_{t_i}^k)^2\Big]-(S_0^k)^2}\sqrt{\E_{\mu_j^l}\Big[(S_{t_j}^l)^2\Big]-(S_0^l)^2}},~~(x_1^1,\ldots,x_n^d) \in \R_+^{nd}.
 \end{equation}
such that $\E_\Q\left[f_{(i,j,k,l)}^{\operatorname{eq}}\right]=K_{(i,j,k,l)}^{\operatorname{eq}}$,
where $K_{(i,j,k,l)}^{\operatorname{eq}}:=\rho_{ij}^{kl}$ for all measures $\Q$ consistent with the correlation structure.
In Example~\ref{exa_corr1}, we investigate two examples in the case $n=2,~d=2$ and include additional information on the correlation between the assets. This information leads to several constraints which restrict the set of possible pricing measures in different degrees and therefore effectively influence robust price bounds.

\item \textbf{Knowledge of the risk-neutral distribution of sums}\\
Next, we consider not just correlation information, but the entire information on the sum of the underlying assets. This corresponds to considering prices of basket options directly. 
Then we specify $ f_{(i,j,k,l,m)}\in C_{\operatorname{lin}}(\R_+^{nd})$ in \eqref{eq_eq_constraints}  by 
$
f_{(i,j,k,l,m)}^{\operatorname{eq}}(x_1^1,\ldots,x_n^d)=(a_1^{(i,j,k,l)}x_i^k +a_2^{(i,j,k,l)} x_j^l-K_m)_+$, $(x_1^1,\ldots,x_n^d) \in \R_+^{nd},
$
for all $i,j \in \{1,\dots,n\},\,\) \(k,l\in \{1,\ldots,d\},\,\) \(m\in \{1,\ldots,N_{(i,j,k,l)}\}\), where $a_1^{(i,j,k,l)},~a_2^{(i,j,k,l)} \in \R$ denote the corresponding weights of the basket options under consideration, $K_m \in \R$ the strike of the option and $N_{(i,j,k,l)}\in \N$ corresponds to the amount of observable options for this asset-maturity combination.
Moreover, we denote by \(K_{(i,j,k,l,m)}^{\operatorname{eq}}\) 
the price of the basket option with payoff function \(f_{(i,j,k,l,m)}^{\operatorname{eq}} \).
If the price information implied by basket options is consistent with risk-neutral correlations as considered in 3.), then respecting the prices instead of the correlations may lead to a further improvement of the price bounds, as not only the second moment of the underlying distribution is taken into account.
\item \textbf{Risk-neutral correlation is constant over time}\\
In Section~\ref{sec_correlation_constant}, we discuss situations for $d=2$  in which it is reasonable to assume for any $\Q \in \mathcal{M}(\mu)$ that
$
\operatorname{Corr}_\Q(S_{t_i}^1,S_{t_i}^2)=\operatorname{Corr}_\Q(S_{t_j}^1,S_{t_j}^2)$ for all $i,j \in \{1,\dots,n\}.$
This leads to equality constraints of the form
\begin{equation}\label{eq_correlation_constant_1}
\begin{aligned}
f^{\operatorname{eq}}_{(i,j)}(x_1^1,\dots,x_n^d)&=\Bigg(\tfrac{x_{i}^1x_i^2-S_{t_0}^1S_{t_0}^2}{\sqrt{\E_{\mu_i^1}\left[\left(S_{t_i}^1\right)^2\right]-\left(S_{t_0}^1\right)^2}\sqrt{\E_{\mu_i^2}\left[\left(S_{t_i}^2\right)^2\right]-\left(S_{t_0}^2\right)^2}}\\
&\hspace{1cm}-\tfrac{x_{j}^1x_{j}^2-S_{t_0}^1S_{t_0}^2}{\sqrt{\E_{\mu_j^1}\left[\left(S_{t_j}^1\right)^2\right]-\left(S_{t_0}^1\right)^2}\sqrt{\E_{\mu_j^2}\left[\left(S_{t_j}^2\right)^2\right]-\left(S_{t_0}^2\right)^2}}\Bigg),~~(x_1^1,\ldots,x_n^d) \in \R_+^{nd},
\end{aligned}
\end{equation}
and $K^{\operatorname{eq}}_{(i,j)}=0$ for all $i,j =1,\dots,n,~ i \leq j$.
\item \textbf{Risk-neutral correlation is bounded from below by the real world correlation}\\
In Section~\ref{sec_correlation_bounded}, we discuss situations in $d =2$ in which it makes sense to assume for every $\Q \in \mathcal{M}(\mu)$ that $\operatorname{Corr}_\Q(S_{t_i}^1,S_{t_i}^2)\geq \operatorname{Corr}_\PP(S_{t_i}^1,S_{t_i}^2)$, where $\PP\in \mathcal{P}(\R_+^{nd})$ denotes some underlying real-world measure.
Thus, we model the inequality constraints in \eqref{eq_ineq_constraints} by setting 
\begin{equation}\label{eq_corr_constraint_below_1}
f^{\operatorname{ineq}}_{i}(x_1^1,\dots,x_n^d)= -\frac{x_i^1x_i^2-S_{t_0}^1S_{t_0}^2}{\sqrt{\E_{\mu_i^1}\left[\left(S_{t_i}^1\right)^2\right]-\left(S_{t_0}^1\right)^2}\sqrt{\E_{\mu_i^2}\left[\left(S_{t_i}^2\right)^2\right]-\left(S_{t_0}^2\right)^2}},~~(x_1^1,\ldots,x_n^d) \in \R_+^{nd},
\end{equation}
and $K^{\operatorname{ineq}}_{i}=-\operatorname{Corr}_\PP(S_{t_i}^1,S_{t_i}^2)$ for $i=1,\dots,n$, where $\operatorname{Corr}_\PP(S_{t_i}^1,S_{t_i}^2)$ can often be estimated empirically with statistical methods.
\end{enumerate}

\section{Examples and Numerics}\label{section_numerics}
In accordance with the scenarios described in Sections~\ref{secadddep} and \ref{secadddepinreltocor}, we provide several examples for the improvement of the upper multi-asset price bound \(\overline{P}_{\cM(\mu)}\). This improvement is due to the consideration of appropriate market-implied dependence information. The following examples cover the case where a restriction on the dependence structure is imposed through (quasi-)~copulas as well as the case where additional information on the correlation is taken into account. The \emph{Python} codes for the numerical examples from this section are provided under \url{https://github.com/juliansester/improved-dependence-pricing}.

\subsection{Improved price bounds through copula bounds}\label{sec41a}
We consider for $K \in \R$ the payoff functions  
\begin{align}
\label{eqdefpof2} c_{1,K}(S_{t_1}^1,S_{t_1}^2,S_{t_1}^3,S_{t_2}^1,S_{t_2}^2,S_{t_2}^3) &:= \left( \min_{\substack{i=1,2 \\ k=1,2,3}}\{S_{t_i}^k\}- K \right)_+\,,\\
\label{eqdefpof3} c_{2,K}(S_{t_1}^1,S_{t_1}^2,S_{t_1}^3)&:=\left(\frac {S_{t_1}^1+S_{t_1}^2+S_{t_1}^3} 3-K\right)_+\,.
\end{align}
For every $K \in \R$, the payoff function \(c_{1,K}\) is \(\Delta\)-monotone and \(c_{2,K}\) is increasing and supermodular, but neither \(\Delta\)-antitone nor \(\Delta\)-monotone. For the sake of readability, we sometimes abbreviate \(c_i:=c_{i,K}\,,\) \(i=1,2\,.\)  In the following, we apply Theorem \ref{theloob} and Theorem \ref{corsmifm} to determine price bounds for these options under consideration of the martingale property and of copula bounds for the risk-neutral distributions inferred from dependence information based on prices of some options.
More specifically, we determine price bounds by considering minimal and maximal expectations w.r.t.\,measures from $\cM^{\operatorname{uo}}_{\underline{Q},\overline{Q}}$, and $\cM_{Q_2}^{\operatorname{CCD}}$, respectively, where the quasi-copulas $\underline{Q}$, $\overline{Q}$, and \(Q_2\) are inferred from option prices \(p_i^{k\ell}(K')\) of digital options $d_i^{k\ell}(K')$ with a payoff function defined by
\begin{align}\label{obsop}
d_i^{k\ell}(K')\big(S_{t_i}^k,S_{t_i}^\ell\big):= \one_{\{\max\{S_{t_i}^k,S_{t_i}^\ell\}\leq K'\}} ,~~k,\ell \in \{1,2,3\}, i \in \{1,2\}.
\end{align}
We assume that prices $p_i^{k\ell}(K')$ are observed in the market for strikes $ K'\in \cK:=\{K_1,\ldots,K_m\}$, where $m \in \N$ describes the number of observed digital options. Knowledge of such option prices restricts the set of consistent pricing measures and therefore, via Sklar's theorem, prescribes the values of the associated (survival) copula on a finite set. This set is implied by the choice of $\mathcal{K}$ and the marginal distributions \(\mu\,.\) In such a case, lower orthant and upper orthant copula bounds for the dependence structure of the underlying assets are given by the quasi-copulas obtained from the following two results, see \cite[Theorem 3.1 and Proposition A.1]{Lux-2017}. Several examples are provided in this section. More generally, prices of options with payoff functions that are increasing w.r.t.\,the lower or upper orthant ordering allow to infer bounds for the partially known copula of the underlying asset, see \cite[Theorem 3.3 and Proposition A.1]{Lux-2017} and  \cite[Table 1]{Lux-2017,Roncalli-2001,Tankov-2011}.

In the following Examples~\ref{exadelmo} and~\ref{exasmpf}, we determine upper price bounds for the options \(c_{1,K}\) and \(c_{2,K}\) for different strikes \(K\) under the assumption that prices of some digital options as specified in \eqref{obsop} are given.  We generate prices $p_i^{k\ell}$, similar to \cite[Example 6.8]{Lux-2017}, by assuming an underlying multivariate Black-Scholes model \(S=(S_t^1,S^2_t,S^3_t)_{t\geq 0}\) with
\begin{align}\label{eqsimpro}
S^k_{t}=S_0^k\exp\left(-\tfrac {(\sigma^k)^2} 2 {t}+\sigma^k X_{t}^k\right),~k=1,2,3,~t\geq 0,
\end{align}
where \(X=(X_t^1,X_t^2,X_t^3)_{t\geq 0}\) is a Brownian motion with dependent components that are distributed \mbox{\((X_1^1,X_1^2,X_1^3)\sim  N(0,\Sigma)\)} with covariance matrix
$
\Sigma=\begin{pmatrix} 1 & \rho_{12} & \rho_{13} \\
\rho_{12} & 1 & \rho_{23} \\
\rho_{13} & \rho_{23} & 1 \end{pmatrix}\,.
$
We specify the parameters \(S_0^k\,,\) \(\sigma^k\,,\) and \(\rho_{k\ell}\,,\) as well as the set of strikes \(\cK\) of the observed digital options in Example~\ref{exasmpf}.
%
\begin{exa}[\(\Delta\)-monotone payoff function]\label{exadelmo}~
We specify \(t_1=1\,,\) \(t_2=2\,,\) \(\sigma^k=0.5\) for all \(k=1,2,3\,,\) \(S_0^1=9\,,\) \(S_0^2=10\,,\) and \(S_0^3=11\) as well as the risk-neutral correlations \(\rho_{12}=\rho_{13}=\rho_{23}=0.8\,.\) Further, we assume that the prices \(p_2^{k\ell}(K')\) of the digital options \(d_2^{k\ell}(K')\,,\) \(1\leq k < \ell\leq d =3\,,\) can be observed for strikes \(K'\in \cK:=\{8,9,10,11,12\}\,.\)
Knowledge of such option prices \(p_2^{k\ell}(K')\) means knowledge of the value of the survival functions \(\widehat{C}^{k\ell}\) associated with the copula \(C^{k\ell}\) of \((S_{t_2}^k,S_{t_2}^\ell)\) given by
\begin{align}
\nonumber
\widehat{C}^{k\ell}(F_2^k(K'),F_2^\ell(K')) &=C^{k\ell}(F_2^k(K'),F_2^\ell(K'))+1-F_2^k(K')-F_2^\ell(K')\\
\label{eqdigoppri2}&= \E_\Q [d_2^{k\ell}(K')] +1-F_2^k(K')-F_2^\ell(K') =  p_2^{k\ell}+1-F_2^k(K')-F_2^\ell(K')
\end{align}
for all \(\Q\in \cM^{\operatorname{lin}}\)
fulfilling the equality constraint \(\E_\Q [d_2^{k\ell}(K')]=p_2^{k\ell}(K')\) for \(K'\in \cK\,.\)
Hence, we obtain from \cite[Proposition A.1]{Lux-2017} pointwise a lower bound \(\underline{\widehat{Q}}\) and an upper bound \(\widehat{\overline{Q}}\) for \(\widehat{C}\).

Now, for $K \in \R$, we compute the upper bound \(\pi_{c_{1}}^\mu(\widehat{\overline{Q}})\) in \eqref{equosb} for the price of the option \(c_1=c_{1,K}\) as specified in \eqref{eqdefpof2} under knowledge of the digital option prices \(p_2^{k\ell}(K')\,,\) \(1\leq k < \ell \leq 3\,,\) \(K'\in \cK\,.\)
To compute \(\pi_{c_{1}}^\mu(\widehat{\overline{Q}})\) we apply Theorem~\ref{theloob}~(b)
and use that \(\widehat{\overline{Q}}\) is a pointwise upper bound for \(\widehat{C}\,\), see also Remark~\ref{remgenan}~(d).

Figure~\ref{exa_42_martingale_improvement} illustrates the price bounds \(\pi_{c_{1}}^{\mu}(\widehat{\overline{Q}})\) and \(\overline{P}_{{\cM}^{\operatorname{uo}}_{\underline{Q},\overline{Q}}}\) for the option \(c_1=c_{1,K}\) obtained from Theorem \ref{theloob}~(b) in dependence of the strike $K$. Moreover, we illustrate the corresponding lower bounds. We observe that the bound which incorporates the martingale property improves the bound $\pi_{c_{2}}^\mu(\widehat{\overline{Q}})\,$ significantly. The price bound \(\overline{P}_{{\cM}^{\operatorname{uo}}_{\underline{Q},\overline{Q}}}\) is computed through an adaption of the algorithm provided in \cite{aquino2019bounds}, which relies on a neural network approximation of the optimal dual hedging strategy. We observe that in this setting including information on prices of digital options improves the price bounds only slightly, whereas in combination with the martingale property the price bounds can be improved significantly.
\begin{figure}[h!]
\begin{center}
    \includegraphics[width=0.7\textwidth]{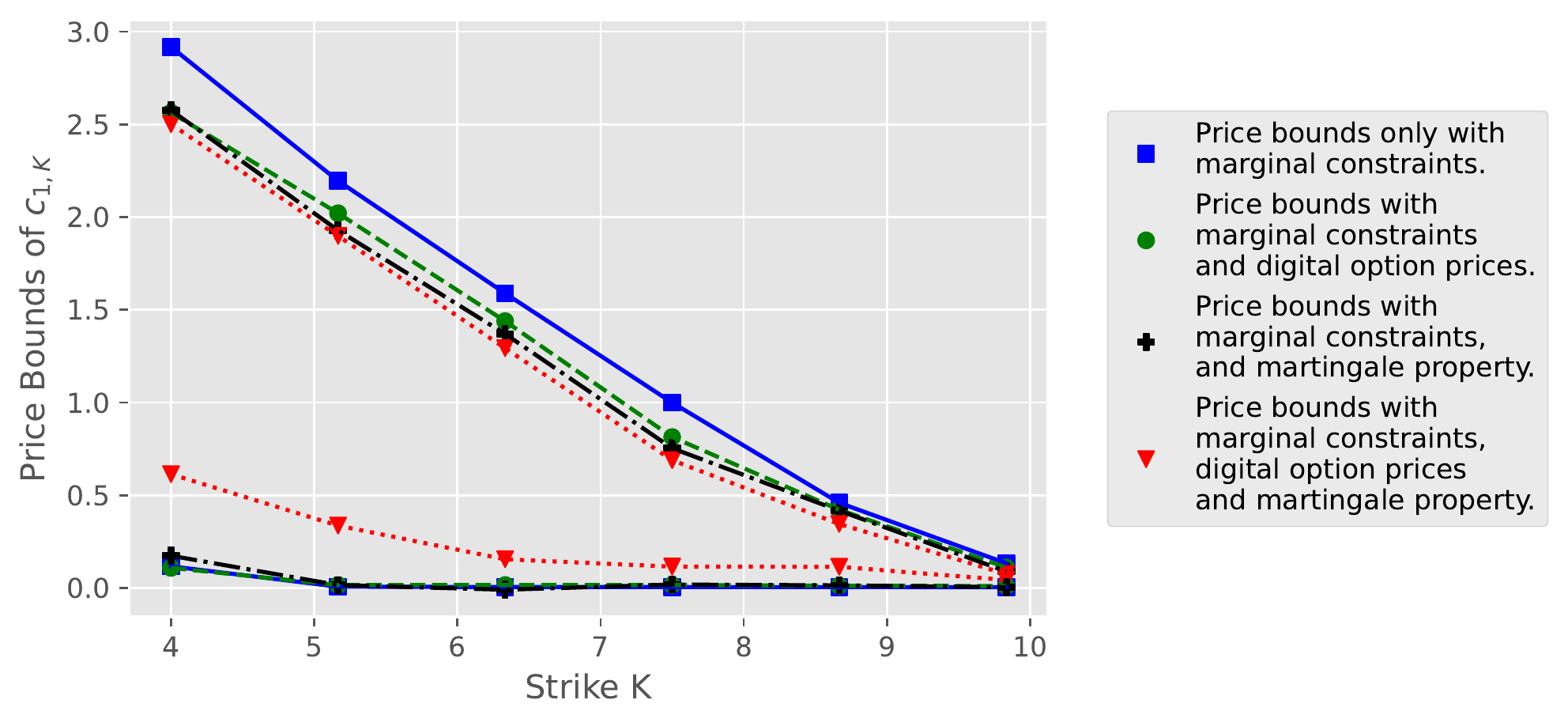}
                 \caption{In the setting of Example~\ref{exadelmo}, the figure depicts different lower and upper price bounds of $c_{1,K}$ in dependence on the strike $K$. We show price bounds without knowledge of prices of digital options, price bounds which additionally respect the martingale property, price bounds which one obtains after the inclusion of price information of digital option prices as well as \(\underline{P}_{{\cM}^{\operatorname{uo}}_{\underline{Q},\overline{Q}}}\), \(\overline{P}_{{\cM}^{\operatorname{uo}}_{\underline{Q},\overline{Q}}}\) which take into account the martingale property and the prices of digital options.}   \label{exa_42_martingale_improvement}
\end{center}
\end{figure}
\end{exa}

For the determination of an improved upper price bound for the basket call option \(c_{2,K}\) when dependence information is related to the setting of a common component dependence model, we make use of the following lemma, whose proof is provided at the end of Section~\ref{section_proofs}.

\begin{lem}[European basket options]\label{lembasopt}~\\ 
Let \(\mathfrak{C}(x_1,\ldots,x_d)=\left(\sum_{i=1}^d \alpha_i x_i-K\right)_+\) be the payoff function of the European basket call option with strike $K \in \R$ and let \(\mathfrak{P}(x_1,\ldots,x_d) = \left(K-\sum_{i=1}^d \alpha_i x_i\right)_+\) be the payoff function of the European basket put option with weights \(\alpha_i>0\,,\) \(1\leq i\leq d\,,\) and strike $K \in \R$. Then, the following statements hold true: 
\begin{itemize}
\item[(a)] \label{lembasopt1} \(\mathfrak{P}\) and \(\mathfrak{C}\) are measure-inducing if and only if \(d\leq 2\,.\)
\item[(b)]  \label{lembasopt2} Let \(F_1,\ldots,F_d\in \cF_+^1\) be continuous with finite first moments. If \(D^2,\ldots,D^d\in \cC_2\) and \(Q_2\in \cQ_2\) with \(D^i\leq_{\operatorname{lo}} Q_2\) for \(2\leq i\leq d\,,\) then
\begin{align}
\label{upppbcaa}\psi_{\mathfrak{C}}^{(F_1,\ldots,F_d)} (M^2\vee D^2\vee \cdots \vee D^d)& \leq \pi_{\phi_{\mathfrak{C}}}^{(G,F_1)}(\widehat{Q_2}),~~~ \text{and}\\
\nonumber \psi_{\mathfrak{P}}^{(F_1,\ldots,F_d)} (M^2\vee D^2\vee \cdots \vee D^d) &\leq \pi_{\phi_{\mathfrak{P}}}^{(G,F_1)}(\widehat{Q_2}),
\end{align}
where \(G\) is the distribution function defined by its generalized inverse 
$
G^{-1}(u):=\frac{\sum_{i=2}^d \alpha_i F^{-1}_i(u)}{\sum_{i=2}^d \alpha_i }\,,\) \(u\in [0,1]\,,
$
and where \(\phi_{\mathfrak{C}}\)
and \(\phi_{\mathfrak{P}}\) are
defined as in \eqref{defphif}.
\end{itemize}
\end{lem}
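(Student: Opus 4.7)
The proof has two parts; I would tackle (a) by reducing to the lower Fréchet bound $W^d$, and (b) by a direct application of Theorem \ref{theqcub}(f) after an explicit change of variables.

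For part (a), I would first invoke put–call parity: $\mathfrak{P}(x) - \mathfrak{C}(x) = K - \sum_{i=1}^d \alpha_i x_i$ is affine on $\R_+^d$ and trivially measure-inducing (the order-$k$ mixed differences vanish for $k \geq 2$), so $\mathfrak{C}$ and $\mathfrak{P}$ share the measure-inducing property. It therefore suffices to handle $\mathfrak{C}$. I would then apply a componentwise strictly positive affine change of coordinates to transform $\mathfrak{C}$ into $W^d(u) = (\sum u_i - d + 1)_+$ up to a multiplicative constant; since affine reparameterizations preserve the measure-inducing property and the same reduction applies to every nontrivial $I$-marginal (which is itself a basket-type function of $|I|$ variables), the membership of $\mathfrak{C}$ in $\cF_{\operatorname{mi}}^{\operatorname{c},\operatorname{l}}$ is governed by the top-dimensional case. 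The conclusion then follows from \cite[Theorem 2.4]{Nelsen-2010}, which states that $W^m$ is measure-inducing exactly when $m\leq 2$.

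For part (b), I would apply Theorem~\ref{theqcub}(f) to $f := \mathfrak{C} \circ (F_1^{-1}, \ldots, F_d^{-1})$ on $[0,1)^d$. The hypotheses are routine: left-continuity and componentwise monotonicity of $f$ follow from the corresponding properties of the generalized inverses and of $\mathfrak{C}$; supermodularity of $\mathfrak{C}$ comes from writing $\mathfrak{C} = \varphi \circ L$ with the convex $\varphi(y) = y_+$ and $L(x) = \sum \alpha_i x_i$ having strictly positive coefficients (so that the mixed second partials are non-negative) and is preserved under componentwise non-decreasing composition; the non-negativity $f \geq 0$ supplies the integrable lower bound; and Lebesgue-integrability of $(\phi_f)_I$ on $[0,1)^{|I|}$ for $I \subseteq \{1,2\}$ follows from the finite first moments via $\int_0^1 F_i^{-1}(u)\,du = \E_{\mu_i}[X] < \infty$.

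The decisive observation is that $\phi_f = \phi_{\mathfrak{C}} \circ (G^{-1}, F_1^{-1})$. Indeed, by \eqref{defphif},
\begin{align*}
\phi_f(x_1, x_2) = f(x_2, x_1, \ldots, x_1) = \Big(\alpha_1 F_1^{-1}(x_2) + \sum_{i=2}^d \alpha_i F_i^{-1}(x_1) - K\Big)_+,
\end{align*}
while $\phi_{\mathfrak{C}}(y_1, y_2) = (\alpha_1 y_2 + (\sum_{i\geq 2} \alpha_i)\, y_1 - K)_+$, and the defining relation $(\sum_{i \geq 2} \alpha_i)\, G^{-1}(u) = \sum_{i \geq 2} \alpha_i F_i^{-1}(u)$ makes the two expressions coincide. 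Theorem~\ref{theqcub}(f) then yields $\psi_f(M^2 \vee D^2 \vee \cdots \vee D^d) \leq \pi_{\phi_f}(\widehat{Q_2})$, and the transformation formula \eqref{margtrafgs} rewrites the right-hand side as $\pi_{\phi_{\mathfrak{C}}}^{(G, F_1)}(\widehat{Q_2})$, while the left-hand side is $\psi_{\mathfrak{C}}^{(F_1,\ldots,F_d)}(M^2 \vee D^2 \vee \cdots \vee D^d)$ by definition, yielding \eqref{upppbcaa}. The basket put case follows by the same route, noting that $\mathfrak{P}$ is supermodular by the same $\varphi \circ L$ argument with $\varphi(y) = (K - y)_+$ and is componentwise decreasing, so the decreasing alternative of Theorem~\ref{theqcub}(f) applies. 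The main obstacle I anticipate lies in part (a): one must check carefully that after the affine reparameterization to $W^d$ the measure-inducing property genuinely transfers, uniformly across all nontrivial $I$-marginals and without being disrupted by boundary effects on $\partial \R_+^d$ or by degenerate choices of $K$.
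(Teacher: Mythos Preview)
Your approach to part (a) matches the paper's: both reduce $\mathfrak{C}$ (and, via put--call parity or directly, $\mathfrak{P}$) to the lower Fr\'echet bound $W^d$ by an affine change of coordinates and invoke \cite[Theorem~2.4]{Nelsen-2010}. Your put--call parity reduction is a clean way to treat both payoffs at once.

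For part (b), your argument coincides with the paper's proof in the special case where all $F_i$ are strictly increasing (so that $F_i^{-1}$ and $G^{-1}$ are continuous). In that case the verification of the hypotheses of Theorem~\ref{theqcub}(f) and the identity $\phi_f = \phi_{\mathfrak{C}}\circ(G^{-1},F_1^{-1})$ go through exactly as you describe, and the transformation formula \eqref{margtrafgs} applies without difficulty. The paper, however, treats general continuous $F_i$ (as stated in the lemma) by an additional approximation step: it approximates each $F_i$ pointwise from above by strictly increasing $F_{i,n}$ with matching first moments, applies the special-case inequality, and passes to the limit via Scheff\'e's lemma on the left and dominated convergence on the right. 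The reason is that when $F_i$ is merely continuous, $F_i^{-1}$ can jump, and then $\phi_{\mathfrak{C}}\circ(G^{-1},F_1^{-1})$ need not satisfy the boundary continuity condition \eqref{contboun1} required for membership in $\cF_{\operatorname{mi}}^{\operatorname{c},\operatorname{l}}([0,1)^2)$, on which the transformation formula \eqref{margtrafgs} rests. Your outline omits this case distinction; it is not a flaw in the strategy, but the direct route needs either a separate check that \eqref{margtrafgs} still holds for left-continuous $\Delta$-monotone integrands without \eqref{contboun1}, or the approximation argument the paper supplies.
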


In the following example, we determine improved upper price bounds for the European basket call option \(c_{2,K}\) in the setting of a common component dependence model.

\begin{exa}[Supermodular payoff function]\label{exasmpf}
We determine the upper price bound \(\pi_{\phi_{c_2}}^{\mu_1}(\widehat{Q_2})\,,\) \(\mu_1=(\mu_1^k)^{k=1,2,3}\,,\) for the option \(c_2=c_{2,K}\) specified in \eqref{eqdefpof3} when the quasi-copula bound \(Q_2\) is inferred from prices \(p_1^{1\ell}(K')\) of the digital\footnote{The methodology can also be applied to any other option written on two assets with \(\Delta\)-monotone or \(\Delta\)-antitone payoff function like basked options. We mainly chose digital options for the sake of exposition since they result in simpler formulas.} options \(d_1^{1\ell}(K')\,,\) \(\ell=2,3\,,\) in \eqref{obsop} for  strikes \(K'\in \cK=\{8.5,9,9.5,10,10.5\}\,.\) Note that \(c_2\) is a continuous supermodular payoff function which is componentwise increasing but neither measure-inducing nor \(\Delta\)-antitone nor \(\Delta\)-monotone, see Lemma \ref{lembasopt} and compare \cite[Example 3.9.4]{Mueller-Stoyan-2002}. However, the transformed function \(\phi_{c_{2}}\) given by \eqref{defphif} is measure-inducing because it is \(\Delta\)-monotone, compare Lemma \ref{lembasopt}~(a).

To generate option prices $p_1^{1\ell}(K')$ according to the underlying model from \eqref{eqsimpro}, we specify \(t_1=1\,,\) the volatility \(\sigma=1\,,\) the initial time asset values \(S_0^1=10\,,\) \(S_0^2=9\,,\) and \(S_0^3=11\). For the correlation, we consider the four different cases 
$
(\rho_{12},\rho_{13})\in \{(-1,-1),(-0.5,-0.5),(0,0),(0.5,0.5)\}.
$

In Figure~\ref{fig_exa43_improvement}, we illustrate the standard upper price bound \(\overline{P}_{\mathcal{M}}\) based on knowledge of the marginals and the improved upper price bounds \(\pi_{\phi_{c_{2}}}^{(G,F_1^1)}(\widehat{Q_2})\)  for the payoff function \(c_2=c_{2,K}\) in dependence on the strike \(K\,.\) The improved bounds are inferred from prices of the digital option \(d_1^{1\ell}(K')\,,\) \(\ell=2,3\,,\) \(K'\in \cK\,,\) which are computed according to the multivariate Black-Scholes model with dependent components explained by \eqref{eqsimpro}. For an illustration, we choose different specifications to model the dependencies between the components of the underlying Brownian motion expressed by the correlations \(\rho_{1\ell}\,,\) \(\ell=2,3\,.\) We find that the more negatively correlated the components, the better the price bounds.

\begin{figure}[h!]
\begin{center}
    \includegraphics[width=0.75\textwidth]{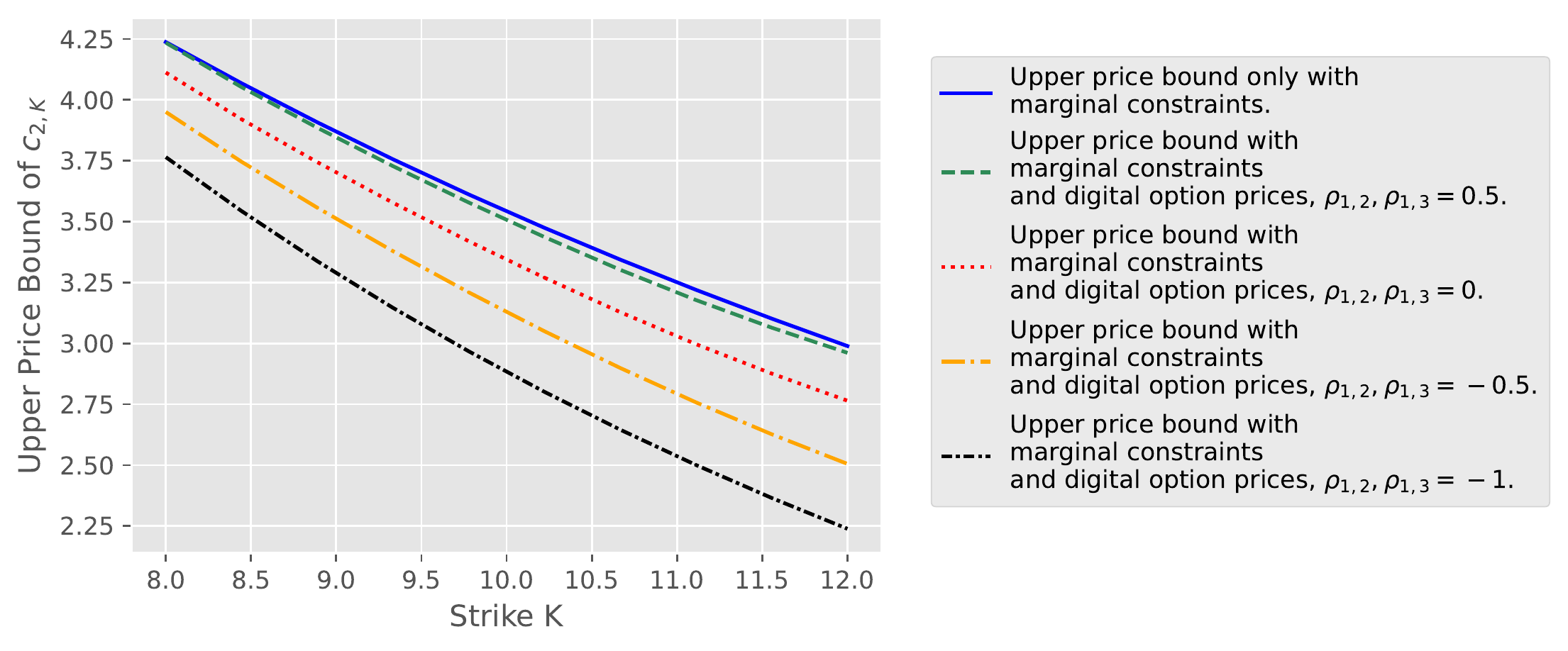}
                 \caption{Regarding Example~\ref{exasmpf}, we illustrate upper price bounds $\pi_{\phi_{c_{2}}}^{(G,F_1^1)}(\widehat{Q_2})$ for the basket put option $c_2=c_{2,K}$ in dependence on the strike $K$ for several correlations \(\rho_{1\ell}\) of the underlying Brownian motion in the Black-Scholes model from which the prices \(p_1^{1\ell}(K')\) of the digital options \(d_1^{1\ell}(K')\,,\) \(K'\in \cK\,,\) \(\ell=2,3\,,\) are calculated.
} 
\label{fig_exa43_improvement}
\end{center}
\end{figure}

%
\end{exa}

%

\subsection{Improved price bounds through correlations}~
In this section, we show within several examples how information on the risk-neutral correlation can improve model-independent price bounds of derivatives. Before discussing the improvement in an explicit setting in Example~\ref{exa_corr1}, we stress the influence of the chosen filtration for the martingale formulation on the set of admissible martingale measures and therefore on the resultant price bounds, as discussed in Remark~\ref{rem_differences_martingale_property}~(a).

Suppose for all examples in this section that $n=d=2$ and that $S$ has the following marginal distributions
\begin{equation}\label{eq_marginals_correlation}
\begin{aligned}
&S_{t_1}^1 \sim \mu_1^1=\mathcal{U}(\{8,10,12\}),&&S_{t_1}^2 \sim \mu_1^2=\mathcal{U}(\{8,10,12\}),\\
&S_{t_2}^1 \sim \mu_2^1=\mathcal{U}(\{7,9,11,13\}),&&S_{t_2}^2 \sim \mu_2^2=\mathcal{U}(\{4,7,10,13,16\}).
\end{aligned}
\end{equation}
We consider, similar to \cite[Example 5.12]{schmithals2018contributions} and \cite[Example 5.34]{schmithals2018contributions}, the following four payoff functions
\begin{align}\label{payofffc1c4}
\begin{split}
&c_3(S_{t_1}^1,S_{t_2}^1,S_{t_1}^2,S_{t_2}^2):=\left(1/4\cdot(S_{t_1}^1+S_{t_2}^1+S_{t_1}^2+S_{t_2}^2)-10\right)_+, \\
&c_4(S_{t_1}^1,S_{t_2}^1,S_{t_1}^2,S_{t_2}^2):=\left(10-\min\left\{S_{t_1}^1,S_{t_2}^1,S_{t_1}^2,S_{t_2}^2\right\}\right)_+, \\
&c_5(S_{t_1}^1,S_{t_2}^1,S_{t_1}^2,S_{t_2}^2):=\frac{1}{4}\left(S_{t_2}^2-S_{t_2}^1\right)_+\cdot\left(S_{t_1}^2-S_{t_1}^1\right)_+, \\
&c_6(S_{t_1}^1,S_{t_2}^1,S_{t_1}^2,S_{t_2}^2):=\left(\frac{S_{t_2}^1-S_{t_1}^1}{S_{t_1}^1}\right)^2\cdot \left(\frac{S_{t_2}^2-S_{t_1}^2}{S_{t_1}^2}\right)^2.
\end{split}
\end{align}
All numerical price bounds in this setting are computed using a linear programming approach, compare for further details e.g. \cite{guo2019computational} and \cite{henry2013automated}, which deliver a fast and accurate solution for a small amount of marginals. Note that our approach does not depend on the numerical method applied as long as it is computationally feasible. In fact, Figure \ref{exa_42_martingale_improvement} was derived based on the neural networks approach of \cite{eckstein2021computation} which avoids a discretization of the continuous marginals and is also applicable in higher dimensions.

\begin{exa}\label{exa_corr1}
In Figure~\ref{fig_3d_correlation}, we combine correlation information at times $t_1$ and $t_2$ and study the impact on the lower and upper price bound of $c_3, c_4,c_5$, and \(c_6\). 
As a result, we obtain a significant improvement of the price bounds for each of the payoff functions.

\begin{figure}[h!]
\begin{center}
	\subfigure[Price Bounds of $c_3$]{
    \includegraphics[width=0.45\textwidth]{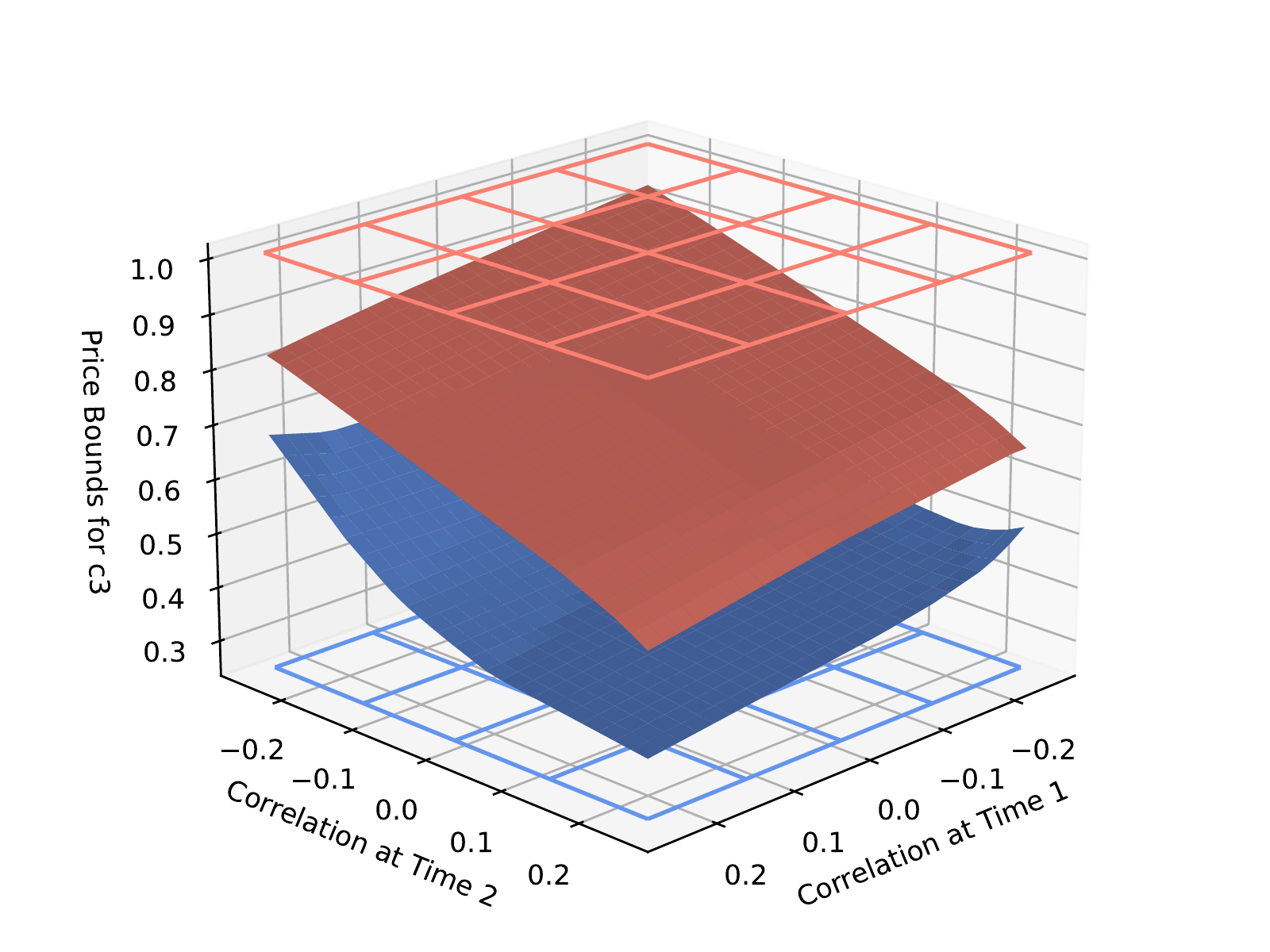}}
    \subfigure[Price Bounds of $c_4$]{
    \includegraphics[width=0.45\textwidth]{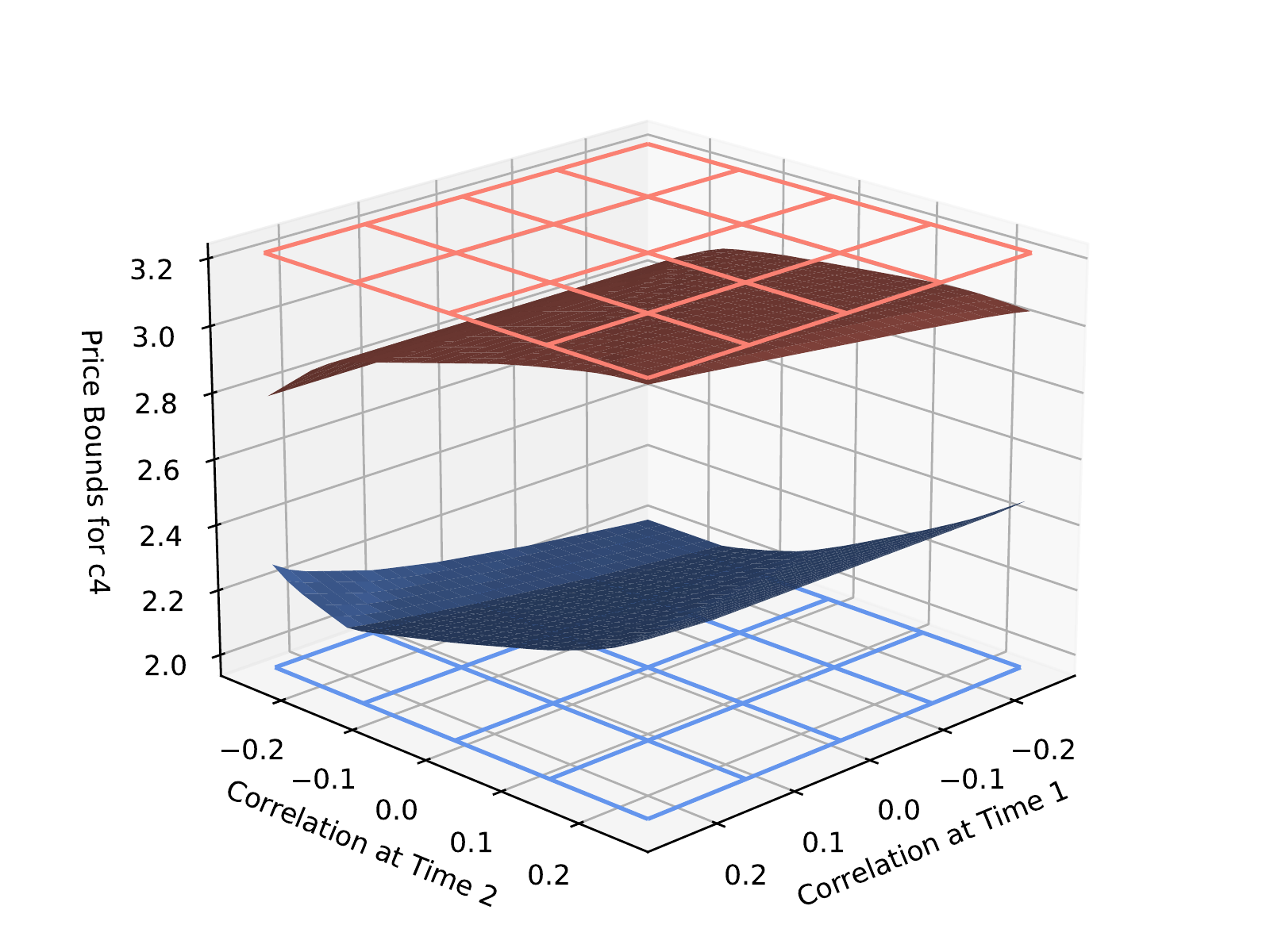}}
    \subfigure[Price Bounds of $c_5$]{
    \includegraphics[width=0.45\textwidth]{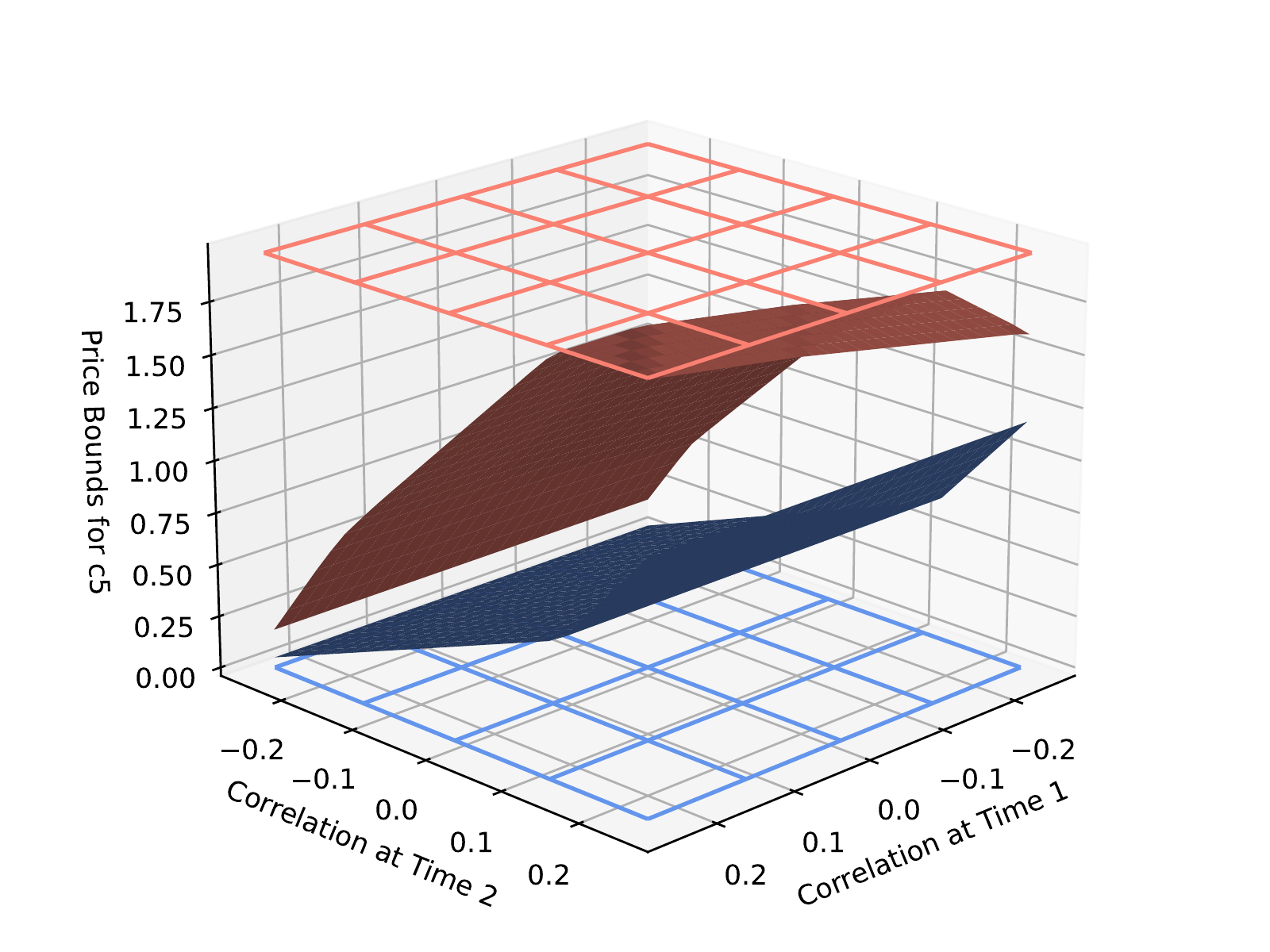}}
    \subfigure[Price Bounds of $c_6$]{
    \includegraphics[width=0.45\textwidth]{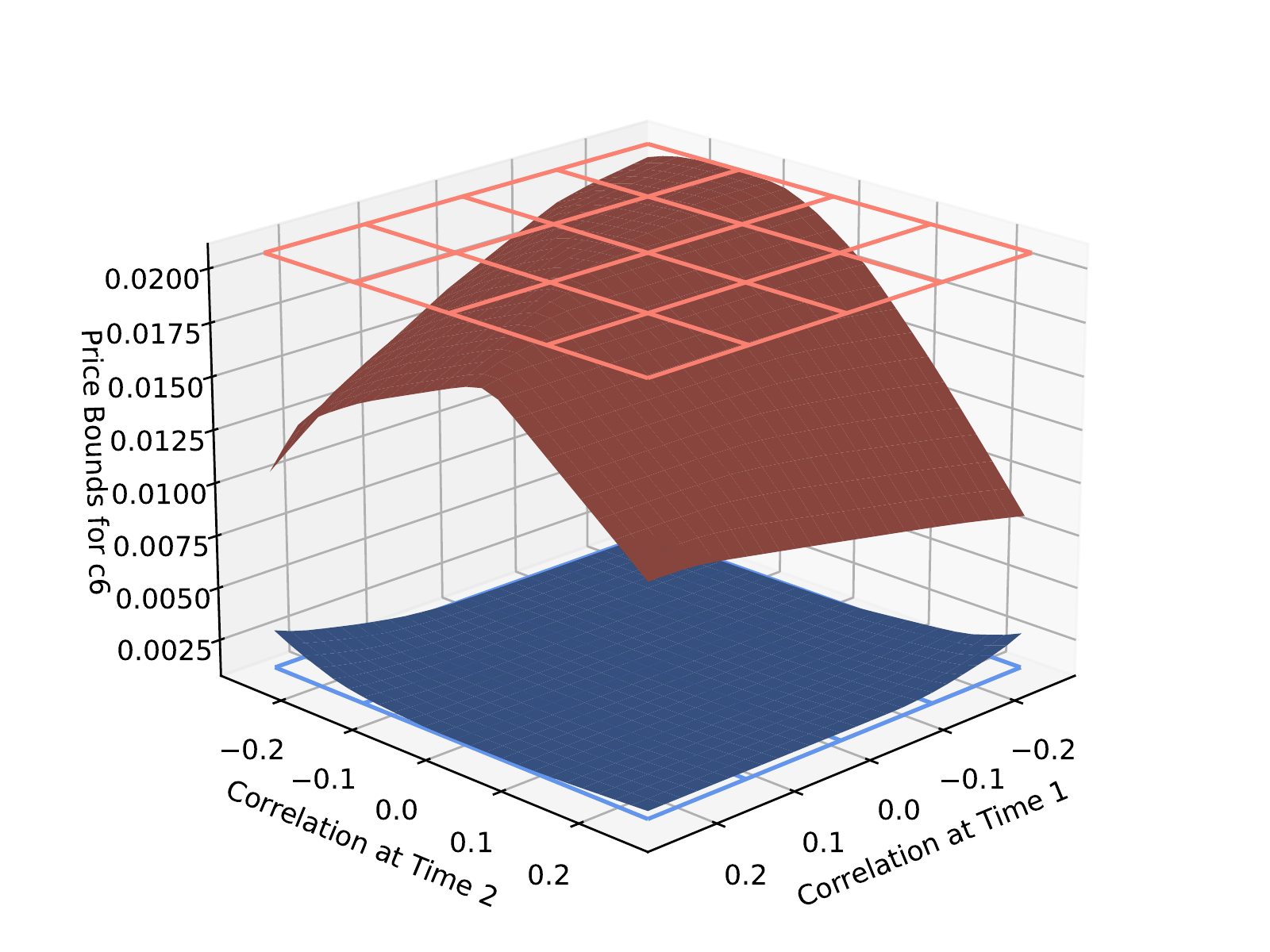}}

              \caption{This figure shows, in the setting of Example~\ref{exa_corr1}, the impact of combined information on the correlations between $S^1$ and $S^2$ at $t_1$ as well as at $t_2$ on the lower (blue) and upper (red) price bounds of derivatives \(c_3,c_4,c_5\), and \(c_6\,.\) The bounds without the consideration of additional information are indicated by colored wireframes.}   \label{fig_3d_correlation}
                  \end{center}
\end{figure}
\end{exa}

\subsection{Additional market-implied assumptions}
In this section, we study how to take into account several additional conditions that reflect observations made on financial markets. In contrast to the inclusion of conditions that are directly linked to the prices of basket options and/or other liquidly traded options these conditions are rather implied by properties that can be observed on financial markets. 
In addition to equality constraints for a pricing measure $\Q$, we will in the sequel also consider inequality constraints. 
For the sake of illustration, we formulate in the following all assumptions only for two underlying assets (i.e. the case $d=2$). It is then straightforward to generalize the implied conditions to a larger number of underlying securities.

\subsubsection{Correlation is constant over time}\label{sec_correlation_constant}
In this section, we assume the risk-neutral correlation between two securities to be constant over time, i.e.,
\begin{equation}\label{eq_correlation_constant}
\operatorname{Corr}_\Q(S_{t_i}^1,S_{t_i}^2)=\operatorname{Corr}_\Q(S_{t_j}^1,S_{t_j}^2)\text{ for all } i,j \in \{1,\dots,n\}.
\end{equation}

The study \cite{adams2017correlations} finds that real-world correlations can reasonably be considered to be constant over time. Based on an empirical analysis of pairs of $40$ stocks, bonds, commodities, and currencies, their findings imply that, for $26\%$ of the pairs, constant real-world correlations for the whole period \(2000\)--\(2014\) can be assumed. For $54\%$ of pairs there appeared exactly one break in the correlation relationship and for $10\%$ there were two breaks. Only for the remaining $10\%$ of the pairs there were three and more breaks in this $14$ year period. The breaks were mostly corresponding to respective crises.

Moreover, \cite{buss2012measuring} assume that $\operatorname{Corr}_\Q(S_{t_i}^1,S_{t_i}^2)-\operatorname{Corr}_\PP(S_{t_i}^1,S_{t_i}^2)=\alpha \left(1-\operatorname{Corr}_\PP(S_{t_i}^1,S_{t_i}^2)\right)$ for some constant $\alpha \in (0,1)$, where $\PP$ denotes the underlying real-world probabilty measure\footnote{Note that real-world correlations can be estimated based on historical observations.}. In combination with \cite{adams2017correlations}, this motivates us to assume a constant risk-neutral correlation as in equation \eqref{eq_correlation_constant}.

Regarding all these empirical findings, equation \eqref{eq_correlation_constant} should not be assumed in all market situations, but can be a reasonable assumption when no break in the correlation relationship,  e.g.  due to a change of market behavior,  is expected or when the time period is short. We refer to \cite{ghosh2021forecasting,krauss2017deep} for a discussion of various time periods over the last $20$ years. We also stress that most options of interest have rather short maturities such that a breakdown in the correlation relationship until maturity is rather unlikely.
The condition \eqref{eq_correlation_constant} can be included  as equality constraints in the dual formulation of the robust pricing problem, as shown in \eqref{eq_correlation_constant_1}.

\subsubsection{Correlation is bounded from below by the real world correlation}\label{sec_correlation_bounded}
Following the argumentation in \cite{buss2012measuring}, i.e., assuming the existence of some $\alpha \in (0,1)$ such that for all $i=1,\dots,n$, it holds $\operatorname{Corr}_\Q(S_{t_i}^1,S_{t_i}^2)-\operatorname{Corr}_\PP(S_{t_i}^1,S_{t_i}^2)=\alpha \left(1-\operatorname{Corr}_\PP(S_{t_i}^1,S_{t_i}^2)\right) \geq 0$, we obtain the condition
\begin{equation}\label{eq_correlation_constraint_below}
\operatorname{Corr}_\Q(S_{t_i}^1,S_{t_i}^2) \geq \operatorname{Corr}_\PP(S_{t_i}^1,S_{t_i}^2) \text{ for all } i =1,\dots,n.
\end{equation}
Since in most situations, the right-hand side ${\operatorname{Corr}_\PP(S_{t_i}^1,S_{t_i}^2)}$ can be well estimated using historical data (see e.g. \cite{eun1984estimating}), we obtain a lower bound for the risk-neutral correlation $\operatorname{Corr}_\Q(S_{t_i}^1,S_{t_i}^2)$.
In combination with the assumption of a time-independent correlation we assume the same lower bound for all risk-neutral correlations. In general, the higher the lower bound for the correlation, the more restrictive is the resulting linear constraint and consequently more significant improvement of robust price bounds can be expected.

\begin{rem}
The estimation of ${\operatorname{Corr}_\PP(S_{t_i}^1,S_{t_i}^2)}$ may be subject to uncertainty, such that ${\operatorname{Corr}_\PP(S_{t_i}^1,S_{t_i}^2)}$ lies within some confidence interval $[\underline{c},\overline{c}]$ with a pre-specified probability. In this case, we can substitute \eqref{eq_correlation_constraint_below} by
\begin{equation}\label{eq_corr_ineq_1}
\operatorname{Corr}_\Q(S_{t_i}^1,S_{t_i}^2) \geq \underline{c}  \text{ for all } i =1,\dots,n.
\end{equation}
\end{rem}

Equation \eqref{eq_corr_constraint_below_1} allows implementing dual strategies of the form described in \eqref{eq_dual_strats1} to incorporate \eqref{eq_corr_ineq_1}, where $\operatorname{Corr}_\PP(S_{t_i}^1,S_{t_i}^2)$ is estimated.

We test in two numerical examples the effect of the additional constraints on the associated robust price bounds.

\begin{exa}\label{exa_improvement_additional_info_table}

We suppose again that $S=(S^1,S^2)$ has the marginal distributions as specified in \eqref{eq_marginals_correlation}, and we consider the four payoff functions in \eqref{payofffc1c4}.
In Table \ref{tbl_improv} we summarize the improvement obtained through incorporating different additional conditions. Price bounds that are improved under additional assumptions are written bold.  The results are computed using a linear programming approach.

\begin{table}[h!]
\resizebox{\textwidth}{!}{
\begin{tabular}{lcccccc} \toprule

 & No additional & Constant & Correlation & Correlation  & Constant corr. & Constant corr.\\
 & assumptions & correlation & lower bounded & lower bounded  & lower bounded & lower bounded\\ 
  &  &  &  by $-0.5$ &  by $0.5$ &  by $-0.5$ &  by $0.5$\\ 
\midrule

\hspace{0.5cm}$\inf_{\Q\in \mathcal{M}^{\mathrm{lin}}} \E_\Q [c_i(S)]$ \\
\midrule
$c_3$&$0.25$ & $\textbf{0.2781}$ &$\textbf{0.3179}$ &$\textbf{0.5375}$  &$\textbf{0.329}$ &$\textbf{0.639}$\\  
$c_4$  &$1.9611$  &$1.9611$  &$1.9611$ &$1.9611$ &$1.9611$  &$1.9611 $\\
$c_5$ & $0.0 $ &$\textbf{0.0795} $ &$0.0$  &$0.0$ &$\textbf{0.0795}$&$\textbf{0.0795}$  \\ 
$c_6$   & $0.0012$ & $0.0012$ & $0.0012$ & $0.0012$ & $0.0012 $ & $\textbf{0.0014}$  \\ 
\midrule

\hspace{0.5cm}$\sup_{\Q\in \mathcal{M}^{\mathrm{lin}}} \E_\Q [c_i(S)]$ \\
\midrule
$c_3$ & $1.0111  $ & $\textbf{0.9781}  $ & $1.0111 $& $1.0111$  & $\textbf{0.9781}$ & $\textbf{0.9781}$\\
$c_4$ & $3.2167 $ & $ \textbf{3.198}$ & $\textbf{3.1615 }$ & $\textbf{2.9714}$ & $\textbf{3.1615 }$  & $\textbf{2.893}$\\
$c_5$ & $1.9778 $ & $1.9778 $ & $1.9778 $ & $\textbf{0.8083}$  & $1.9778 $ & $\textbf{0.6784}$ \\ 
$c_6$ & $0.0207$ & $0.0207$ & $0.0207 $  &$0.0207$ & $0.0207$ & $0.0207$ \\ 
\bottomrule
\end{tabular}}
\caption{Improvement of the price bounds described in Example~\ref{exa_improvement_additional_info_table} under different additional assumptions}\label{tbl_improv}
\end{table}

Although the marginal distributions possess a quite simple discrete structure, the results described in Table~\ref{tbl_improv} allow several important insights concerning the effect of additional constraints on the resultant price bounds. First, we observe that the improvements are highly payoff-dependent. Indeed, while the price bounds for $c_7$ are barely affected through the inclusion of additional constraints, the price bounds for $c_4$ can be improved strongly by any kind of constraint we investigated. Second, the improvements can either concern only the lower bound (e.g. correlation constrained from below by -$0.5$ for $c_4$), only the upper bound (correlation constrained from below by $0.5$ for $c_6$) or affect both bounds (constant correlation for $c_4$). Third, a combination of different constraints can improve the price bounds even more than the sum of the improvements of both constraints when considered separately (upper bound of $c_5$ in the case that the correlation is constant and constrained from below by $0.5$).
\end{exa}

\subsection{Real-world examples}
In this section we study price bounds of multi-asset derivatives with underlying marginal distributions that are implied from real market data. In particular, we study how the price bounds behave under additional constraints on the joint distributions.

\subsubsection*{Deriving the marginals}

On $t_0=$ 17th August 2020, we observe prices of put and call options written on $S^1:=$ the stock of \emph{Apple Inc.} and on $S^2:=$ the stock of \emph{Microsoft Corp.} We take into account options with maturities lying $11$ days and $32$ days ahead respectively. This means we set 
$
t_1 - t_0 = 11/365$ and $t_2 - t_0 = 32/365.
$

We consider mid prices of call and put options, i.e., we take the average of bid and ask prices. These prices are then cleaned in two ways: the mid prices shall not allow for static arbitrage (call prices should decrease w.r.t.\,increasing strikes, put prices should increase w.r.t.\,increasing strikes). Further, we exclude butterfly arbitrage involving these prices, basically meaning prices as a function of the strikes should possess a convex shape.

After having cleaned the prices we apply the Breeden-Litzenberger result\footnote{We refer also to \cite{talponen2014note} for a multidimensional version of \cite{breeden1978prices}, as well as  \cite{neufeld2022numerical} for a non-asymptotic version of \cite{breeden1978prices}, \cite{talponen2014note}.} in \cite{breeden1978prices} to obtain marginal distributions associated to the underlying securities at maturities $t_1,t_2$. The density of the marginals can be computed as the second derivative of the prices w.r.t. the strikes. For this step, to approximate the second derivative, we use the finite differences method, i.e., given strikes $(K_j)_{j=1,\dots,N_{\operatorname{strikes}}}$ with $N_{\operatorname{strikes}} \in \N$ and mid (call or put) prices $\left(\operatorname{P}(K_{j},t_i)\right)_{j=1,\dots,N_{\operatorname{strikes}}}$, the time-$t_i$ density $\operatorname{p}_i(K_j)$ evaluated at $K_j$ for $j =2,\dots,N_{\operatorname{strikes}}$ is approximated by
$
\frac{\partial^2 \operatorname{P}(K,t_i)}{\partial K^2}\big|_{K=K_j} \approx \operatorname{p}_i(K_j):=\frac{\operatorname{P}(K_{j+1},t_i)-2\operatorname{P}(K_{j},t_i)+\operatorname{P}(K_{j-1},t_i)}{(K_{j+1}-K_{j-1})^2}
$
and we further set $\operatorname{p}_i(K_1)=\operatorname{p}_i(K_{N_{\operatorname{strikes}}})=0$.
We then approximate the one-dimensional marginal distribution of the asset through
$
S_{t_i}^k \sim \frac{1}{\sum_{j=1}^{N_{\operatorname{strikes}}} \operatorname{p}_i(K_j)} {\sum_{j=1}^{N_{\operatorname{strikes}}}} \delta_{K_j}\operatorname{p}_i(K_j) ~~~ \text{ for }i,k=1,2,
$
where $\delta_{K_j}$ denotes the Dirac measure at point $K_j$.

To ensure an increasing convex order of the marginals of each stock we equalize the means of $S_{t_1}^j,S_{t_2}^j$ for $j = 1,2$, compare also \cite{alfonsi2019sampling}. Finally we apply $\mathcal{U}$-quantization introduced in \cite[Section 2.4.]{baker2012martingales} in a similar way as in \cite[Section 3]{neufeld2021deep} such that each marginal is supported on $20$ values which can then be implemented into a linear program to compute robust price bounds. Additionally, we remark that we neglect interest rates and dividend yields for these rather short maturities.

\subsubsection*{Computation of price bounds under correlation information}
We study the payoff functions of derivatives \(c_3,c_4,c_5,\) and \(c_6\) given by \eqref{payofffc1c4}, where we modify $c_3$ and $c_4$ by considering a strike of $250$, i.e., we have
$
c_3(S_{t_1}^1,S_{t_2}^1,S_{t_1}^2,S_{t_2}^2):=\left(1/4\cdot(S_{t_1}^1+S_{t_2}^1+S_{t_1}^2+S_{t_2}^2)-250\right)_+ $ and $c_4(S_{t_1}^1,S_{t_2}^1,S_{t_1}^2,S_{t_2}^2):=\left(250-\min\left\{S_{t_1}^1,S_{t_2}^1,S_{t_1}^2,S_{t_2}^2\right\}\right)_+.
$
In Figure~\ref{fig_real_data_correlations_improvement}, we display the influence of information on the time-$t_1$ and time-$t_2$ correlation, respectively, on the price bounds of these derivatives. 
As elaborated, such information can be extracted from prices of basket options, if observable. Since we have no access to price quotes of basket options, we instead show the improvement obtained if certain levels of correlation are given as an input.
As already observed in the examples with artificial marginals, in general, the improvement of the price bounds becomes stronger for information concerning the time $t_2$ correlation.
\vspace{-0.5cm}
\begin{figure}[h!] 
\begin{center}
{\includegraphics[width=0.95\textwidth]{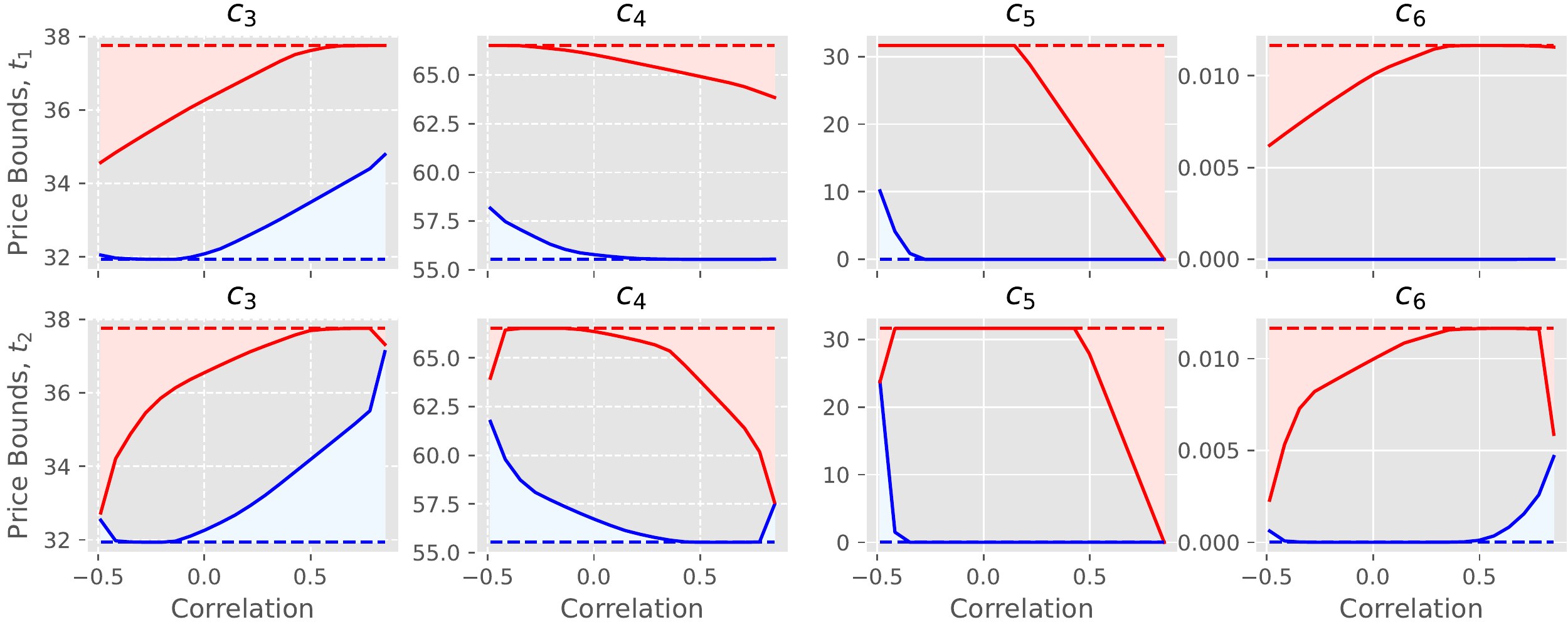}}
\end{center}
\caption{The price bounds of the options $c_3,c_4,c_5,c_6$ in dependence of correlation information (either regarding time $t_1$ or time $t_2$) while the marginal distributions are implied by vanilla option prices written on Apple and Microsoft.}\label{fig_real_data_correlations_improvement}
\end{figure}

\subsubsection{Computation of price bounds under additional assumptions}
Eventually, we investigate the influence of additional assumptions on the price bounds.
We first observe that, under the real-world measure $\PP$, which is here set to be the empirical measure based on historical data from $2$ January $2018$ until
$17$th August $2020$, the stocks of \emph{Apple Inc.}\,and \emph{Microsoft Corp.}\,seem to be highly correlated.


The idea is to make use of this apparently strong relation between the two assets to obtain tighter price bounds for derivatives $c_i$ written on both assets by using only such pricing measures that are consistent with an assumption on the strictly positive correlation.

To obtain an indication for the level of the correlation between the two assets in an $11$ and $32$ day period, we consider the empirical bivariate return distribution of the two assets in an observation period ranging from $2$nd January $2018$ until $22$ July $2020$. From this empirical distribution we simulate in a bootstrapping approach $100,000$ paths of length $11$ and $32$ respectively, and then compute the respective correlation coefficients. We obtain an estimate of $0.7948$ for $\operatorname{Corr}_\PP(S_{t_1}^1,S_{t_1}^2)$ and an estimate of $0.7952$ for $\operatorname{Corr}_\PP(S_{t_2}^1,S_{t_2}^2)$.

 Thus, evidence is provided to include the weaker assumption $\operatorname{Corr}_\Q(S_{t_i}^1,S_{t_i}^2) \geq \operatorname{Corr}_\PP(S_{t_i}^1,S_{t_i}^2)\geq 0.75$ for $i=1,2$ and pricing measures $\Q$. Here, $0.75$ can obviously be substituted by any other number associated to another degree of physical correlation that is believed to be more accurate. The higher this number, the more improvement of the price bounds can be expected. However, to compute price bounds among consistent martingale measures $\Q$, this number must lie within the interval of correlations that are consistent with the marginals.
According to Remark~\ref{rem_differences_martingale_property}, the bounds of this interval can be computed through minimizing and maximizing
$
 \Q \mapsto \E_\Q\left[\frac{S_{t_i}^1S_{t_i}^2-S_{t_0}^1S_{t_0}^2}{\sqrt{\E_{\mu_i^1}[\left(S_{t_i}^1\right)^2]-\left(S_{t_0}^1\right)^2}\sqrt{\E_{\mu_i^2}[\left(S_{t_i}^2\right)^2]-\left(S_{t_0}^2\right)^2}}\right]
$
w.r.t.\,measures $\Q \in \mathcal{M}(\mu)$. We compute the bounds as solutions of linear programming problems using the empirical distributions derived according to the presented discretization scheme and obtain at $t_1$ the interval $[-0.588,0.799]$ and at $t_2$ the interval of possible prices given by $[-0.487,0.8490]$.

In Table~\ref{tbl_improv_real_data}, we display the results revealing that indeed the assumption on a lower bound of the correlation has a strong impact on the quality of the price bounds. The assumption on constant correlations has only in combination with the assumption of the lower bound of the correlation an influence on the lower bound of $c_4$, while for the other price bounds we cannot report any influence of the assumption of constant correlations. For the sake of readability, price bounds showing improvement in comparison with the original bounds are displayed in bold characters.




{\small
\begin{table}[h!]
\begin{adjustbox}{angle=0}
\begin{tabular}{lcccc} \toprule

 & No additional & Constant & Correlation & Constant correlation \\
 & assumptions & correlation & lower bounded by $0.75$ & lower bounded by $0.75$\\
\midrule

\hspace{0.5cm}$\inf_{\Q\in \mathcal{M}^{\mathrm{lin}}} \E_\Q [c_i(S)]$ \\
\midrule
$c_3$&$31.9339$ & $\textbf{31.9344}$ &$\textbf{34.4798}$ &$\textbf{35.3573}$\\  
$c_4$  &$55.5357$  &$55.5357$  &$\textbf{ 55.5359}$ &$\textbf{55.5398}$  \\
$c_5$ & $0.0 $ &$0.0 $ &$0.0$  &$0.0$  \\ 
$c_6$   & $0.0$ & $0.0$ & $0.0$ & $\textbf{0.0023}$ \\ 

\midrule

\hspace{0.5cm}$\sup_{\Q\in \mathcal{M}^{\mathrm{lin}}} \E_\Q [c_i(S)]$ \\
\midrule
$c_3$ & $37.7576  $ & $37.7576  $ & $37.7576 $& $37.7576 $ \\
$c_4$ & $ 66.5152 $ & $ \textbf{66.3507} $ & $\textbf{64.0369 } $ & $\textbf{60.6242 } $\\
$c_5$ & $31.6319 $ & $31.6319 $ & $\textbf{2.1662} $ & $\textbf{2.1662} $  \\ 
$c_6$ & $0.0116$ & $0.0116$ & $0.0116  $  & $0.0116  $\\ 

\bottomrule
\end{tabular}
}
\caption{Improvement of the price bounds under different additional assumptions}\label{tbl_improv_real_data}
\end{adjustbox}
\end{table}
}
\section{Proofs}\label{section_proofs}
In this section, we provide all proofs that were omitted in the previous sections of the paper.

To verify the lower semicontinuity of $\mathcal{M}^{\operatorname{lin}} \ni \Q \mapsto \E_\Q[c]$ and to prove the compactness of the set \(\cM^{\mathrm{lin}}\) we make use of the following continuity result, see \cite[Lemma 2.2]{beiglbock2013model} and \cite[Lemma 4.3]{villani2008optimal}.

\begin{lem}\label{lemcont}
Let \(f\colon \R^{nd} \to \R\) be (lower/upper semi-)continuous and linearly bounded. Then, the mapping
$
\pi \mapsto \int_{\R^{nd}} f(x) \de \pi(x)
$
is (lower/upper semi-)continuous on \(\Pi(\mu)\) w.r.t.\,weak topology.
\end{lem}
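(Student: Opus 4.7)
The plan is to exploit the fact that every $\pi \in \Pi(\mu)$ has the same fixed univariate marginals, so $\int \varphi \,\D\pi$ is a constant on $\Pi(\mu)$ where $\varphi(x):=1+\sum_{i,k}x_i^k$. This immediately gives uniform integrability for any linearly bounded function on $\Pi(\mu)$. By linearity it suffices to handle the lower semicontinuous case (the upper semicontinuous case follows by considering $-f$, and continuity follows from both one-sided statements).

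First I would settle the \emph{continuous} case. Fix a sequence $\pi_n \to \pi$ weakly in $\Pi(\mu)$ and a continuous $f$ with $|f|\leq C\varphi$. For $R>0$ pick a continuous cutoff $\chi_R\colon \R^{nd}\to[0,1]$ with $\chi_R\equiv 1$ on $\{|x|\leq R\}$ and $\chi_R\equiv 0$ off $\{|x|\leq R+1\}$. Then $f\chi_R$ is bounded and continuous, so weak convergence yields $\int f\chi_R\,\D\pi_n\to \int f\chi_R\,\D\pi$. The tail is controlled uniformly in $n$ by
\[
\Bigl|\int f(1-\chi_R)\,\D\pi_n\Bigr| \;\leq\; C\int_{\{|x|>R\}}\varphi\,\D\pi_n \;=\; C\int_{\{|x|>R\}}\varphi\,\D\pi,
\]
where the equality uses that $\int\varphi\,\D\pi_n=\int\varphi\,\D\pi$ for all $n$ because the one-dimensional marginals of $\pi_n$ and $\pi$ agree. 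Since $\varphi\in L^1(\pi)$, letting $R\to\infty$ gives $\int f\,\D\pi_n\to\int f\,\D\pi$, i.e.\ continuity of the integral functional in the continuous case.

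Second, I would pass to the lower semicontinuous case by approximating $f$ from below by continuous linearly bounded functions. The standard inf-convolution
\[
f_k(x):=\inf_{y\in\R^{nd}}\bigl\{f(y)+k\,|x-y|\bigr\}
\]
is $k$-Lipschitz, satisfies $f_k\uparrow f$ pointwise by lower semicontinuity, and inherits the linear bound $|f_k|\leq C\varphi$ (up to a harmless additive constant, which one can absorb into $C$ once $k$ is large enough). Applying the continuous case to each $f_k$ and then using that $f_k\leq f$, we obtain
\[
\liminf_{n\to\infty}\int f\,\D\pi_n \;\geq\; \liminf_{n\to\infty}\int f_k\,\D\pi_n \;=\; \int f_k\,\D\pi.
\]
Since $f_k\uparrow f$ and $|f_k|\leq C\varphi\in L^1(\pi)$, monotone (or dominated) convergence gives $\int f_k\,\D\pi\to\int f\,\D\pi$, which yields lower semicontinuity.

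The main obstacle is the truncation step: weak convergence alone is not enough for linearly bounded integrands, and the whole argument hinges on turning the fixed-marginal property of $\Pi(\mu)$ into uniform integrability, i.e.\ on the identity $\int\varphi\,\D\pi_n=\int\varphi\,\D\pi$ that holds throughout $\Pi(\mu)$. Once that is in place, the Portmanteau-style lsc inequality for bounded lsc functions combined with the inf-convolution approximation finishes the proof.
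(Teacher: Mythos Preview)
The paper does not prove this lemma at all; it simply cites \cite[Lemma 2.2]{beiglbock2013model} and \cite[Lemma 4.3]{villani2008optimal}. Your overall strategy --- fixed marginals give $\int \varphi\,\D\pi$ constant on $\Pi(\mu)$, hence uniform integrability, then truncate and pass to the limit, then inf-convolve for the lsc case --- is exactly the right one and matches the argument in those references.

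There is, however, a genuine gap in your truncation step. You write
\[
C\int_{\{|x|>R\}}\varphi\,\D\pi_n \;=\; C\int_{\{|x|>R\}}\varphi\,\D\pi,
\]
and justify it by ``$\int\varphi\,\D\pi_n=\int\varphi\,\D\pi$''. That identity for the \emph{total} integral is correct, but it does not imply equality of the integrals over the set $\{|x|>R\}$: this set is not a product set, and neither $\pi_n(\{|x|>R\})$ nor $\int_{\{|x|>R\}} x_i^k\,\D\pi_n$ is determined by the marginals. (Take two coordinates with identical two-point marginals; the comonotone and countermonotone couplings give different tail masses for the sum.) The fix is a two-step limit rather than a pointwise equality: since $\varphi\chi_R$ is bounded continuous, weak convergence gives $\int \varphi\chi_R\,\D\pi_n\to\int \varphi\chi_R\,\D\pi$, and subtracting from the constant $\int\varphi\,\D\pi_n=\int\varphi\,\D\pi$ yields
\[
\int \varphi(1-\chi_R)\,\D\pi_n \;\longrightarrow\; \int \varphi(1-\chi_R)\,\D\pi \qquad (n\to\infty),
\]
so that $\limsup_n \int \varphi(1-\chi_R)\,\D\pi_n = \int \varphi(1-\chi_R)\,\D\pi$, which tends to $0$ as $R\to\infty$ by dominated convergence. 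This gives the needed uniform-in-$n$ tail control and the rest of your argument (including the inf-convolution step, where your remark that the linear bound on $f_k$ holds for $k$ sufficiently large is correct) goes through unchanged.
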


%

\begin{proof}[Proof of Proposition~\ref{theaddcon}]\label{protheaddcon}
To prove the duality in \eqref{gendures}, we abbreviate
\begin{align*}
\Phi_{(\delta_i^k),(\alpha_i),(\beta_i)}(x)&:=\Psi_{(u_i^k),(\delta_i^k),(\alpha_i),(\beta_i)}(x)-\sum_{i=1}^n\sum_{k=1}^d u_i^k(x_i^k)\\
\phantom{:}= \sum_{i=1}^{n-1}\sum_{k=1}^d \delta_i^k&(x_1,\ldots,x_i)(x_{i+1}^k-x_i^k) + \sum_{i \in \mathcal{I}^{\operatorname{eq}}}\alpha_i (f_i^{\operatorname{eq }}(x)-K_i^{\operatorname{eq }}) + \sum_{i \in \mathcal{I}^{\operatorname{ineq}}} \beta_i (f_i^{\operatorname{ineq }}(x)-K_i^{\operatorname{ineq }}) 
\end{align*}
for $x=(x_1,\ldots,x_n)=(x_1^1,\ldots,x_n^d)\in \R_+^{nd}$, $\delta_i^k \in C_b(\R_+^{id})$, $\alpha_i \in \R$, $\beta_i \in \R_+$, \(i=1,\dots,n,k=1,\dots,d\,\)  such that $\alpha_i=0, \beta_j=0$ for all but finitely many $i\in \mathcal{I}^{\operatorname{eq}},~ j \in \mathcal{I}^{\operatorname{ineq}}$. Note that $\Phi_{(\delta_i^k),(\alpha_i),(\beta_i)} \in L_{\operatorname{lin}}(\R_+^{nd})$, because all \(\delta_i^k\) and \(f_i^{\operatorname{eq }}\) are linearly bounded and continuous for all $i,k$, and all \(\beta_if_i^{\operatorname{ineq }}\) are linearly bounded and lower semicontinuous. Therefore, we obtain that
\begin{align}
\nonumber \underline{\cD}_{\cS} &= \inf_{(u_i^k),(\delta_i^k),(\alpha_i),(\beta_i)}~\inf_{\Psi_{(u_i^k),(\delta_i^k),(\alpha_i),(\beta_i)}\geq ~c} \left\{\sum_{i=1}^n \sum_{k=1}^d \int_{\R_+} u_i^k(x_i^k) \de \mu_i^k(x_i^k)\right\}\\
\nonumber &= \inf_{(u_i^k),(\delta_i^k),(\alpha_i),(\beta_i)} ~\inf_{u_1^1\oplus \cdots \oplus u_n^d \geq ~c -\Phi_{(\delta_i^k),(\alpha_i),(\beta_i)} }\left\{\sum_{i=1}^n \sum_{k=1}^d \int_{\R_+} u_i^k(x_i^k) \de \mu_i^k(x_i^k)\right\}\\
\label{eqproduth1}&= \inf_{(\delta_i^k),(\alpha_i),(\beta_i)} ~ \sup_{\pi\in \Pi(\mu)}\left\{ \int_{\R_+^{nd}} \left( c(x)-\Phi_{(\delta_i^k),(\alpha_i),(\beta_i)} (x)  \right)\de \pi(x) \right\}
\end{align}
Indeed, Equation \eqref{eqproduth1} is a consequence of the Monge--Kantorovich duality (see \cite[Proposition 2.1]{beiglbock2013model} and \cite[Chapter 5]{villani2008optimal}) using that \(c-\Phi_{(\delta_i^k),(\alpha_i),(\beta_i)}\in U_{\operatorname{lin}}(\R_+^{nd})\) because \(c\) and \(\Phi_{(\delta_i^k),(\alpha_i),(\beta_i)}\) are both linearly bounded and $c$ is upper semicontinuous, whereas $\Phi_{(\delta_i^k),(\alpha_i),(\beta_i)}$ is lower semicontinuous.
Next, we apply the minimax theorem by Ky--Fan, see, e.g., \cite[Lemma 3.1]{Rueschendorf-2019}, to the compact convex set \(\Pi(\mu)\,,\) the convex set 
\begin{align*}\bigg\{ (\delta_i^k),(\alpha_i),(\beta_i) \in \left(C_b(\R_+^{d})\times \cdots \times C_b(\R_+^{(n-1)d})\right) \times \R^{|\cI^{\operatorname{eq }}|}\times \R_+^{|\cI^{\operatorname{ineq }}|} ~&\text{ s.t.}~\\ &\hspace{-8cm}\alpha_i=0,~\beta_j=0\text{ for all but finitely many } i\in \mathcal{I}^{\operatorname{eq}},~j \in \mathcal{I}^{\operatorname{ineq}} \bigg\},
\end{align*}
 and the function \(f\) given by
\begin{align*}
f\big(\pi,((\delta_i^k),(\alpha_i),(\beta_i))\big):=\int_{\R_+^d} \left(c(x)- \Phi_{(\delta_i^k),(\alpha_i),(\beta_i)}(x)\right) \de \pi(x).
\end{align*}
The compactness of \(\Pi(\mu)\) is meant w.r.t.\,the weak topology and can be obtained from \cite[Lemma 4.4]{villani2008optimal} and Prokhorov's Theorem.
Due to Lemma \ref{lemcont}, \(f\) is upper semicontinuous in \(\pi\) because \(c-\Phi_{(\delta_i^k),(\alpha_i),(\beta_i)}\) is upper semicontinuous and linearly bounded.
Further, \(f\) is linear, and thus concave in $\pi$, as well as linear w.r.t.\,\(((\delta_i^k),(\alpha_i),(\beta_i))$ and hence convex.
Thus, the minmax theorem is indeed applicable and we obtain with Equation \eqref{eqproduth1} that
\begin{align}
\label{eqproduth2} \underline{\cD}_{\cS} &= \sup_{\pi \in \Pi(\mu)} ~ \inf_{(\delta_i^k),(\alpha_i),(\beta_i)} \left\{ \int_{\R_+^{nd}} c(x) \de \pi(x) - \int_{\R_+^{nd}} \sum_{i=1}^{n-1}\sum_{k=1}^d \delta_i^k(x_1,\ldots,x_i)(x_{i+1}^k-x_i^k) \de \pi(x) \right. \\
\label{eqproduth3}&\left.~~~~~~~~~~~~~~~~~~~- \int_{\R_+^{nd}} \sum_{i \in \mathcal{I}^{\operatorname{eq}}} \alpha_i (f_i^{\operatorname{eq }}(x)-K_i^{\operatorname{eq }}) \de \pi(x) - \int_{\R_+^{nd}} \sum_{i \in \mathcal{I}^{\operatorname{ineq}}} \beta_i (f_i^{\operatorname{ineq }}(x)-K_i^{\operatorname{ineq }}) \de \pi(x)\right\}\\
\label{eqproduth4}&= \sup_{\Q\in \cM^{\operatorname{lin}}} \int_{\R_+^{nd}} c(x) \de \Q(x) = \overline{P}_{\cM^{\operatorname{lin}}}.
\end{align}

For the equality in \eqref{eqproduth4}, we note that \(\int c \de \pi\) is uniformly bounded in \(\pi\in \Pi(\mu)\) using that all \(\mu_i^k\) have finite first moments and that $c$ is linearly bounded. We observe that the second integral in \eqref{eqproduth2} vanishes whenever \(\pi\in \Pi(\mu)\) is a martingale. If \(\pi\in \Pi(\mu)\) fulfils the equality constraints \(\E_\pi [f_i^{\operatorname{eq }}]=K_i^{\operatorname{eq }}\) for all \(i\,,\) the first integral in \eqref{eqproduth3} is \(0\,,\) and if \(\pi\in \Pi(\mu)\) fulfils the inequality constraints \(\E_\pi [f_i^{\operatorname{ineq }}]\leq K_i^{\operatorname{ineq }}\) for all \(i,\) the second integral in \eqref{eqproduth3} is non-positive using that \(\beta_i\geq 0\,.\) Hence, for \(\pi\in \cM^{\operatorname{lin}}\,,\) the infimum of the expression in the curly brackets is given by $\int c(x)\de \pi(x)>-\infty$. If \(\pi\in \Pi(\mu)\) is not a martingale or does not fulfil one of the equality or inequality constraints, then there exist \(\delta_i^k\,,\) \(\alpha_i\,,\) and \(\beta_i\,,\) respectively, such that at least one of the corresponding integrals is positive. By scaling, we conclude that in this case the infimum over \((\delta_i^k),(\alpha_i),(\beta_i)\) is \(-\infty\,.\)

Next, we prove that the supremum is attained. By \cite[Proposition 2.4]{beiglbock2013model}, the set \(\cM(\mu)\) is compact in the weak topology. We show that \(\cM^{\operatorname{lin}}\) is a closed subset of \(\cM(\mu)\,.\) Let \((\pi_m)_{m\in \N}\subset \cM^{\operatorname{lin}}\) be a sequence that converges weakly to some $\pi \in \mathcal{M}^{\operatorname{lin}}$.
Then, Lemma \ref{lemcont} implies for all $i \in \mathcal{I}^{\operatorname{eq}}$ that \(K_i^{\operatorname{eq }}=\E_{\pi_n} [f_i^{\operatorname{eq }}]\to \E_\pi [f_i^{\operatorname{eq }}]\) as \(n\to \infty\,,\) and, thus, \(\E_\pi [f_i^{\operatorname{eq }}]=K_i^{\operatorname{eq }}\,,\) where we use that \(f_i^{\operatorname{eq }}\) is linearly bounded and continuous. For the inequality constraints, we obtain from Lemma \ref{lemcont} that for all $i \in \mathcal{I}^{\operatorname{ineq}}$ we have
$
\E_\pi [f_i^{\operatorname{ineq }}] \leq \liminf_{n\to \infty} \E_{\pi_n} [f_i^{\operatorname{ineq }}] \leq K_i^{\operatorname{ineq }}
$
using that \(f_i^{\operatorname{ineq }}\) is linearly bounded and lower semicontinuous.
 Hence, also \(\cM^{\operatorname{lin}}\) is compact. Now, let \((\Q_m)_m\subset \cM^{\operatorname{lin}}\) be a sequence such that \(\overline{P}_{\cM^{\operatorname{lin}}} \leq \E_{\Q_m}[c] + \frac 1 m\) for all \(m\in \N\,.\) Since \(\cM^{\operatorname{lin}}\) is compact, there exists a measure \(\Q^*\in \cM^{\operatorname{lin}}\) and a subsequence \((\Q_{m_k})_{k \in \N}\) which converges weakly to \( \Q^*\,.\) Since \(c\) is upper semicontinuous and linearly bounded, we obtain from Lemma \ref{lemcont} that \(\limsup_{k\to \infty} \E_{\Q_{m_k}} [c] \leq \E_{\Q^*}[c]\,.\) This implies that \(\overline{P}_{\cM^{\operatorname{lin}}}=\E_{\Q^*}[c]\,.\)\\

\end{proof}

For the proof of Theorem \ref{theqcub}, we apply the following lemmas.

\begin{lem}\label{lemsufu}
Let $m \in \N$, let $Q_2 \in \mathcal{Q}_2$. Then, it holds that \(Q^*\in \cQ_m\) defined in \eqref{deqcqs} is a quasi-copula with survival function \(\widehat{Q^*}\) given by
$
\widehat{Q^*}(u_1,\ldots,u_m)=\widehat{Q_2}\left(\max_{2\leq i\leq m} \{u_i\},u_1\right)$ for  $(u_1,\ldots,u_m)\in [0,1]^m\,.
$
\end{lem}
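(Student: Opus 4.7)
The proof splits into two parts: first, verifying that $Q^*$ is a quasi-copula, and second, computing its survival function explicitly.

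For part one, I would check the four defining properties of Definition~\ref{def_quasicop}(a) directly using the corresponding properties of $Q_2$. Groundedness: if $u_1 = 0$ then $Q^*(u) = Q_2(\min_{i\geq 2}u_i, 0) = 0$; if $u_j = 0$ for some $j \geq 2$, then $\min_{2\leq i\leq m} u_i = 0$ and again $Q^*(u) = Q_2(0, u_1) = 0$. Uniform marginals: plugging $u_j = 1$ for $j \neq i$ gives $Q^*(\ldots) = Q_2(1, u_1) = u_1$ when $i=1$, and $Q^*(\ldots) = Q_2(u_i, 1) = u_i$ when $i \geq 2$. Monotonicity is clear since $Q_2$ and $\min$ are both componentwise non-decreasing. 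The Lipschitz property follows from that of $Q_2$ together with $|\min_{i\geq 2} v_i - \min_{i\geq 2} u_i| \leq \max_{i\geq 2} |v_i - u_i| \leq \sum_{i\geq 2}|v_i - u_i|$.

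For part two, I would expand $\widehat{Q^*}(u_1,\ldots,u_m)$ using \eqref{defsurvfun}. Reparametrising by the complement $K := \{1,\ldots,m\}\setminus J$, one gets
\begin{align*}
\widehat{Q^*}(u) = \sum_{K\subseteq \{1,\ldots,m\}} (-1)^{|K|} Q^*(y^K),
\end{align*}
where $y^K_i = u_i$ if $i\in K$ and $y^K_i = 1$ otherwise. Splitting into the cases $1 \in K$ and $1 \notin K$ and using that $Q_2$ has uniform marginals (so $Q^*(y^K) = \min_{i\in L} u_i$ when $1\notin K$, $L := K\cap\{2,\ldots,m\}$, and $Q^*(y^K) = Q_2(\min_{i\in L}u_i, u_1)$ when $1\in K$, with the usual conventions when $L = \emptyset$), one arrives at
\begin{align*}
\widehat{Q^*}(u) = 1 - u_1 + \sum_{\emptyset\ne L\subseteq\{2,\ldots,m\}} (-1)^{|L|}\Bigl[\min_{i\in L}u_i - Q_2\bigl(\min_{i\in L}u_i, u_1\bigr)\Bigr].
\end{align*}

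The main technical step is to evaluate the two sums over $L$. For the first, I would use the standard min–max inclusion–exclusion identity $\max_{i \geq 2} u_i = \sum_{\emptyset \neq L \subseteq \{2,\ldots,m\}}(-1)^{|L|+1}\min_{i\in L}u_i$. For the second, which is the most delicate step, I would assume without loss of generality $u_2 \leq u_3 \leq \cdots \leq u_m$, group the sum by $j = \min(L)$, and observe that the inner sum over subsets of $\{j+1,\ldots,m\}$ gives
\begin{align*}
\sum_{L:\min(L)=j}(-1)^{|L|} = -\sum_{k=0}^{m-j}\binom{m-j}{k}(-1)^k = -(1-1)^{m-j},
\end{align*}
which vanishes for $j < m$ and equals $-1$ only for $j = m$. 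Hence only the $L = \{m\}$-type contribution survives and the sum equals $-Q_2(\max_{i\geq 2}u_i, u_1)$. Substituting both identities and simplifying yields
\begin{align*}
\widehat{Q^*}(u) = Q_2\bigl(\max_{i\geq 2}u_i, u_1\bigr) - u_1 - \max_{i\geq 2}u_i + 1,
\end{align*}
which is exactly $\widehat{Q_2}(\max_{i\geq 2}u_i, u_1)$ by the two-dimensional survival-function formula. The main obstacle is recognising and justifying the collapse of the $Q_2$-sum through the $(1-1)^{m-j}$ cancellation; the rest is bookkeeping.
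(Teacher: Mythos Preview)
Your proposal is correct and follows essentially the same approach as the paper: both arguments verify the quasi-copula axioms directly from those of $Q_2$, then expand $\widehat{Q^*}$ via \eqref{defsurvfun}, split the sum according to whether the index $1$ lies in the subset, order $u_2,\ldots,u_m$ without loss of generality, and collapse the resulting sums using the binomial identity $\sum_{k=0}^{N}\binom{N}{k}(-1)^k=\one_{\{N=0\}}$. The only cosmetic difference is that the paper orders $u_2\geq\cdots\geq u_m$ and groups by the maximal index of $J$, whereas you order $u_2\leq\cdots\leq u_m$ and group by $\min(L)$; your explicit appeal to the min--max inclusion--exclusion identity for the first sum is a slight stylistic variation but not a different method.
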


\begin{proof}[Proof of Lemma~\ref{lemsufu}]
The function \(Q^*\) fulfils the defining properties of a quasi-copula because \(Q_2\) is a quasi-copula.\\
Due to the definition of \(\widehat{Q^*}\,,\) we may consider for \((u_1,\ldots,u_m)\in [0,1]^m\) w.l.o.g.\, the case that \(u_2\geq \ldots\geq u_m\,.\)
Then, it holds true that
\begin{equation}
\label{eq_long_equation_1}
\begin{aligned}
\widehat{Q^*}(u_1,\ldots,u_m)&= \sum_{I\subseteq \{1,\ldots,m\}\atop v_i:=1 ~\forall i\in I\,,~v_i:=u_i~\forall i\notin I} (-1)^{m-|I|} Q^*(v_1,\ldots,v_m)\\
&=1-\sum_{J\subseteq \{1,\ldots,m\}, J\ne \emptyset\,,\atop v_i:=u_i ~\forall i\in J\,, ~v_i:=1 ~\forall i\notin J} (-1)^{|J|+1} Q^*(v_1,\ldots,v_m)\\
&=1 - u_1
- \sum_{ J\subseteq \{1,\ldots,m\}, \atop 1\in J, |J|\geq 2} (-1)^{|J|+1} Q_2\left(\min_{j\in J\setminus \{1\}} \{u_j\} , u_1\right) - \sum_{ J \subseteq \{2,\ldots,m\},\atop J \neq \emptyset} (-1)^{|J|+1} \min_{j\in J}\{ u_j\}.
\end{aligned}
\end{equation}
Indeed, to see that \eqref{eq_long_equation_1} holds, note that the first equality follows from the definition of a survival function in \eqref{defsurvfun}. For the second equality, we sum over \(J=\{1,\ldots,m\}\setminus I\) and use that $Q^*(1,\dots,1)=1$. The third equality follows from the definition of \(Q^*\) and the uniform marginal property of Definition \ref{def_quasicop}~(b) for quasi-copulas. With \eqref{eq_long_equation_1} we obtain 
\begin{equation}
\label{eq_long_equation_1_split_2}
\begin{aligned}
\widehat{Q^*}(u_1,\ldots,u_m)&= 1 - u_1 
- \sum_{k=2}^m \sum_{j=0}^{k-2} (-1)^{j+1} \binom{k-2}{j} Q_2(u_k,u_1)- \sum_{k=2}^m \sum_{j=0}^{k-2} (-1)^{j} \binom{k-2}{j} u_k\\
&= 1-u_1+ Q_2(u_2,u_1)-u_2\\
&= 1- u_1-\max_{2\leq i\leq m}\{u_i\} + Q_2(\max_{2\leq i\leq m}\{u_i\},u_1)= \widehat{Q_2}(\max_{2\leq i\leq m} u_i,u_1).
\end{aligned}
\end{equation}
Indeed, to see that \eqref{eq_long_equation_1} note that the first equality holds true for the following reason: in the first sum, we consider for every $k=2,\ldots,m$, the subsets $J\subseteq \{1,\ldots,k\}$ with $1,k\in J$. Then $k$ is the maximal element of $J$ and hence $\min_{j\in J\setminus \{1\}}\{u_j\}=u_k$. There are \(\binom{k-2}{j}\) subsets of \(\{2,\ldots,k-1\}\) with \(j\) elements and we have $|J|=j+2$. In the second sum, we consider subsets $J\subseteq \{2,\ldots,k\}$ with $k\in J$ for every $k=2,\ldots,m$. Here again there are \(\binom{k-2}{j}\) subsets of \(\{2,\ldots,k-1\}\) with \(j\) elements but now $|J|=j+1.$ The second equality follows from the symmetry of the binomial coefficients given by \(\sum_{i=0}^N \binom{N}{i} (-1)^i = \one_{\{N=0\}}\) for $N \in \N$ and the third equality by the definition of the survival function.
\end{proof}


Denote by \(\K_n^m:=\{\tfrac 1 {n+1},\tfrac 2{n+1},\ldots,\tfrac {n}{n+1}\}^m \subset [0,1]^m\) the canonical \(m\)-dimensional \(n\)-grid with edge length \(\tfrac 1 {n+1}\) contained in \([0,1]^m\,.\) Denote by \(\diag(\K_n^m):=\{(\tfrac 1 {n+1},\ldots,\tfrac 1 {n+1}),\ldots,(\frac n {n+1},\ldots,\frac n {n+1})\}\) the diagonal of \(\K_n^m\,.\) For a finite signed measure \(\sigma\) on \([0,1]^m\,,\) we define by
\begin{equation}\label{eq_def_g_sigma}
G_\sigma(u_1,\ldots,u_m):=\sigma([0,u_1],\ldots,[0,u_m])\,, \quad (u_1,\dots,u_m) \in [0,1]^m\,,
\end{equation}
its \emph{measure generating function}.

Conversely, by embedding $\K_n^m \subseteq [0,1]^m$ and identifying $f: \K_n^m \rightarrow \R$ with $\overline{f}:[0,1]^m \rightarrow \R$ defined by $ \overline{f}(x_1,\dots,x_m):= f \left(\tfrac{\lfloor x_1\cdot (n +1)\rfloor}{n+1} \wedge \tfrac{n}{n+1},\dots,\tfrac{\lfloor x_n\cdot (n +1)\rfloor}{n+1} \wedge \tfrac{n}{n+1}\right)$ where $\lfloor x \rfloor := \max \{ n \in \N_0: n \leq x \}$, $x \in \R_+$, we see that 
every function $f:\K_n^m \to \R$ has bounded Hardy--Krause variation as $\K_n^m \subseteq [0,1]^m$ is discrete, and hence induces a finite signed measure on $\K_n^m$. We refer to the discussion after \eqref{deffmarg}, see also, e.g. \cite[Theorem 3]{Aistleitner-2015}  or \cite[Theorem 3.29]{folland1999real}.
\begin{lem}\label{lemmadia}
Let $m,n \in \N$, and let \(\sigma\) be a finite signed measure on \(\K_n^m\,.\) Then the following statements hold true:
\begin{itemize}
\item[(a)] \label{lemmadia1} There exists a finite signed measure \(\mu\) on \(\K_n^1\) such that
\begin{align}\label{eqmadia}
G_\sigma(u_1,\ldots,u_m)=G_\mu(\min_{i=1,\ldots,m}\{u_i\})~~~ \text{for all }(u_1,\ldots,u_m)\in \K_n^m\,,
\end{align}
if and only if the mass of \(\sigma\) is concentrated on the diagonal of \(\K_n^m\,\), denoted by  \(\diag(\K_n^m)\),  i.e., \(\sigma(\{x\})=0\) for all \(x\in \K_n^m\setminus \diag(\K_n^m)\,.\)
In this case $\mu$ is defined by 
\begin{equation}\label{eq_def_mu_sigma}
\mu([0,u]):=\sigma\left([0,u]\times \cdots \times [0,u]\right),~u \in [0,1].
\end{equation}
\item[(b)] \label{lemmadia2} Assume that \(\sigma(\K_n^m)=1\,.\) In the case that \(\sigma\) fulfils \eqref{eqmadia} for some signed measure \(\mu\) on \(\K_n^1\,,\) it follows that
\begin{align}\label{eqmadia2}
\int_{[0,1]^m} f(u_1,\ldots,u_m) \de G_\sigma (u_1,\ldots,u_m) = \int_{[0,1]} f(v,\ldots,v) \de G_\mu(v)
\end{align}
for all \(\sigma\)-integrable functions \(f\colon [0,1]^m \to \R\,.\)
\end{itemize}
\end{lem}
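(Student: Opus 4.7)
The plan is to exploit the discrete, finite-grid structure of $\sigma$ and $\mu$ to reduce the claim to an inclusion-exclusion identity on $\K_n^m$. For the easy direction of (a), I would suppose $\sigma$ is concentrated on $\diag(\K_n^m)$ and define $\mu$ on $\K_n^1$ by $\mu(\{k/(n+1)\}):=\sigma(\{(k/(n+1),\ldots,k/(n+1))\})$ for $k=1,\ldots,n$, which equivalently realises the formula $\mu([0,u])=\sigma([0,u]\times\cdots\times[0,u])$ from the statement. For $(u_1,\ldots,u_m)\in \K_n^m$, a diagonal point $(k/(n+1),\ldots,k/(n+1))$ lies in $[0,u_1]\times\cdots\times[0,u_m]$ if and only if $k/(n+1)\leq \min_i u_i$, so $G_\sigma(u_1,\ldots,u_m)=\mu([0,\min_i u_i])=G_\mu(\min_i u_i)$, which is \eqref{eqmadia}.

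For the converse, I would fix an off-diagonal point $x=(x_1,\ldots,x_m)\in \K_n^m$ with $x_i=k_i/(n+1)$ and, without loss of generality, $k_1=\min_i k_i<k_2$. The standard $m$-fold inclusion-exclusion identity for an atom reads
\begin{align*}
\sigma(\{x\}) = \sum_{J\subseteq \{1,\ldots,m\}} (-1)^{|J|} G_\sigma(z^J),\qquad z^J_i:=x_i - \tfrac{1}{n+1}\one_{\{i\in J\}}.
\end{align*}
Since $\sigma$ and $\mu$ charge no point with a zero coordinate, both $G_\sigma(z^J)$ and $G_\mu(\min_i z^J_i)$ vanish whenever some coordinate of $z^J$ equals $0$, so the assumed identity \eqref{eqmadia} extends trivially from $\K_n^m$ to every $z^J$. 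I then pair each $J\not\ni 2$ with $J\cup\{2\}$: the two points differ only in the second coordinate by $1/(n+1)$, and since $\min_{i\neq 2} z^J_i\leq x_1\leq (k_2-1)/(n+1)$, replacing $k_2$ by $k_2-1$ does not alter the overall minimum. Hence $G_\mu(\min_i z^J_i)=G_\mu(\min_i z^{J\cup\{2\}}_i)$, each pair contributes $0$ to the alternating sum, and $\sigma(\{x\})=0$.

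Part (b) will then be direct: $\sigma$ is a probability measure supported on $\diag(\K_n^m)$ whose atoms coincide with those of $\mu$ transported to the diagonal, so for any $\sigma$-integrable $f$,
\begin{align*}
\int_{[0,1]^m} f(u)\, dG_\sigma(u) = \sum_{k=1}^n f\!\left(\tfrac{k}{n+1},\ldots,\tfrac{k}{n+1}\right)\mu\!\left(\left\{\tfrac{k}{n+1}\right\}\right)=\int_{[0,1]} f(v,\ldots,v)\, dG_\mu(v).
\end{align*}
The main difficulty is really the converse direction of (a); the key combinatorial observation is that adjoining to $J$ an index $i$ whose $k_i$ strictly exceeds $\min_j k_j$ cannot move $\min_j(k_j-\one_{\{j\in J\}})$, which is precisely what makes the alternating sum telescope to zero off the diagonal.
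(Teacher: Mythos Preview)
Your argument for part (a) is correct and essentially coincides with the paper's: the inclusion--exclusion sum $\sum_{J}(-1)^{|J|}G_\sigma(z^J)$ is precisely the iterated difference operator $\triangle^1\cdots\triangle^m G_\sigma(x)$ that the paper uses, and your pairing of $J$ with $J\cup\{2\}$ is the unpacked version of the paper's observation that $\triangle^j G_\mu(\min_i x_i)=0$ for any index $j$ with $x_j>\min_i x_i$. The extension of \eqref{eqmadia} to points with a zero coordinate is handled the same way.

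For part (b) you take a genuinely shorter route than the paper. You argue directly via the finite atom sum $\sum_k f(k/(n+1),\ldots,k/(n+1))\,\sigma(\{(k/(n+1),\ldots,k/(n+1))\})$ and the identification of these atoms with $\mu(\{k/(n+1)\})$. The paper instead first treats the case where $\sigma$ is an actual probability measure (via comonotone random variables), and then reduces the general signed case to this by writing $\sigma=(nM+1)\sigma^M-nM\sigma^u$ for a suitable probability measure $\sigma^M$ and the uniform diagonal measure $\sigma^u$. Your direct computation avoids this detour entirely and is valid because integration against a finite discrete signed measure is nothing but a finite signed sum. One slip to fix: you write ``$\sigma$ is a probability measure supported on $\diag(\K_n^m)$'', but the hypothesis only gives $\sigma(\K_n^m)=1$ for a \emph{signed} measure. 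Fortunately your chain of equalities nowhere uses positivity, so simply drop the word ``probability'' and the argument stands; the normalisation $\sigma(\K_n^m)=1$ is in fact not even needed for your identity.
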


\begin{proof}
To show (a), first assume  that \eqref{eqmadia} holds and let \(x=(x_1,\ldots,x_m)\in \K_n^m\setminus \diag(\K_n^m)\,.\) Then, 
\begin{align*}
\sigma(\{x\})&= \triangle_{1/n}^1 \cdots \triangle_{1/n}^m G_\sigma(x_1,\ldots,x_m)= \triangle_{1/n}^1\cdots \triangle_{1/n}^m G_\mu\left(\min_{i\in \{1,\ldots,m\}}\{x_i\}\right) = 0\,,
\end{align*}
because there exists \(j\in \{1,\ldots,m\}\) such that \(x_j>\min_{i\in \{1,\ldots,m\}}\{x_i\}\) and thus
\begin{align*}
\triangle_{1/n}^j G_\mu\left(\min_{i\in \{1,\ldots,m\}}\{x_i\}\right)= G_\mu\left(\min_{i\ne j}\{x_i\}\right)-G_\mu\left(\min_{i\ne j}\{x_i\}\right) = 0\,.
\end{align*}
For the reverse direction, assume that the mass associated with \(\sigma\) is concentrated on \(\diag(\K_n^m)\). Then \(\mu\) defined by \(\mu([0,u]):=\sigma([0,u]\times \cdots \times [0,u])\,,\) \(u\in \h_n^1\,,\) is a signed measure with the property that
\begin{align*}
G_\sigma(u_1,\ldots,u_m)&=\sigma([0,u_1]\times \cdots \times [0,u_m]) \\&= \sigma([0,\min_{i=1,\ldots,m} \{u_i\}]^m) = \mu([0,\min_{i=1,\ldots,m} \{u_i\}]) = G_\mu(\min_{i=1,\ldots,m} \{u_i\})\,,
\end{align*}
where the second equality holds true because \(\sigma(\{x\})=0\) for all \(x\in \h_n^m\setminus \diag(\h_n^m)\,.\)

To show (b), let us first consider the case where \(\sigma\) is a probability measure, i.e., all mass (which is by (a) distributed on \(\diag(\h_n^m)\)) is non-negative. 

Let $U_1,\dots,U_m$ be random variables on a probability space $(\Omega,\mathcal{A},\sigma)$ such that $(U_1,\dots,U_m) \sim \sigma$. Since $\sigma$ is concentrated on the diagonal, we have that $U_i \eqd U_j$ and that $U_i,U_j$ are comonotone for all $i,j\in \{ 1,\dots,m\}$. Hence, with $U: \eqd U_1$, we obtain $(U_1,\ldots,U_m)\eqd (U,\ldots,U)$.
This implies
\begin{align}\label{eqmadia3}
\begin{split}
\int_{[0,1]^m} f(u_1,\ldots,u_m)\de G_\sigma (u_1,\ldots,u_m) &= \int_{\Omega} f(U_1,\ldots,U_m)\de P\\
&= \int_{\Omega}f(U,\ldots,U)\de P = \int_{[0,1]} f(v,\ldots,v)\de G_\mu(v)\,,
\end{split}
\end{align}
which proves \eqref{eqmadia2} in the case where \(\sigma\) is a probability measure.

Now, consider the general case where \(\sigma\) is a finite signed measure satisfying $\sigma(\h_n^m)=1$. By \eqref{lemmadia1} all mass of \(\sigma\) is concentrated on \(\diag(\h_n^m)\) which is a finite set. So, there exists \(M\in \N\) such that \(\sigma(x)\geq -M\) for all \(x\in \diag(\h_n^m)\,.\) Denote by \(\sigma^u\) the uniform distribution on \(\diag(\h_n^m)\,,\) i.e., \(\sigma^u(\{x\})=\tfrac 1 n\) for all \(x\in \diag(\h_n^m)\) and \(\sigma^u(\{x\})=0\) for all \(x\in \h_n^m\setminus \diag(\h_n^m)\,.\) Then, since $\sigma(\h_n^m)=1$,
\begin{equation}\label{eq_definition_sigma^M}
\sigma^M:=\frac{nM \sigma^u+\sigma}{nM+1}
\end{equation}
defines a probability measure on \(\h_n^m\) with non-negative mass and which is concentrated on \(\diag(\h_n^m)\,.\) Then, by \eqref{eq_def_mu_sigma}, the measure \(\mu^M\) defined by
$
\mu^M([0,v]):= \sigma^M([0,v]\times \cdots\times [0,v])\,,
$
is related to \(\sigma^M\) by \(G_{\sigma^M}(u_1,\ldots,u_m)=G_{\mu^M}(\min_{i=1,\ldots,m}\{u_i\})\,,\) \((u_1,\ldots,u_m)\in [0,1]^m\,.\) Hence, we obtain by \eqref{eqmadia3} that
$
\int_{[0,1]^m} f(u_1,\ldots,u_m)\de \sigma^M(u_1,\ldots,u_m) = \int_{[0,1]} f(u,\ldots,u) \de \mu^M(u)\,.
$
For \(\mu^u\) defined by \(\mu^u([0,v]):=\sigma^u([0,v]\times \cdots \times [0,v])\,,\) \(v\in [0,1]\,,\) we obtain, by using \eqref{eq_definition_sigma^M}, the identity \(\mu=(nM+1) \mu^M-nM \mu^u\,\). This yields
\begin{align*}
\int_{[0,1]^m} f(u) \de \sigma(u) &= (nM+1) \int_{[0,1]^m} f(u)\de \sigma^M(u) - nM \int_{[0,1]^m} f(u)\de \sigma^u(u) \\
&= (nM+1) \int_{[0,1]} f(v,\cdots,v) \de \mu^M(v)-nM \int_{[0,1]} f(v,\cdots,v) \de \mu^u(v) \\
&= \int_{[0,1]} f(v,\cdots,v) \de \mu(v)\,,
\end{align*}
which proves \eqref{eqmadia2}.
\end{proof}

\begin{proof}[Proof of Theorem \ref{theqcub}.]\label{protheamr}
\underline{(f) \(\Longrightarrow\) (c)}: For any fixed \(u=(u_1,\ldots,u_m)\in (0,1)^m\,,\) let $f(x):= \one_{\{u<x\}}$, $\widetilde{f}(x):= \one_{\{u
\leq x\}}$, $x=(x_1,\ldots,x_m)\in [0,1)^m\,,$ and let \((\Phi_n)_{n \in \N}\) be a sequence of \(N(u,I_m/n)\)-distribution functions, i.e., \(\Phi_n\) is the distribution function of the \(m\)-variate normal distribution with mean vector \(u\) and covariance matrix \(I_m/n\,,\) where \(I_m\) denotes the \((m\times m)\)-unit matrix.
Then, \(\Phi_n\) is \(\Delta\)-monotone and, thus, supermodular and measure-inducing for all \(n \in \N\,.\) 
Note that \(\eta_{\Phi_n}\to \eta_{\widetilde{f}}=\delta_{\{u\}}\)  weakly as \(n\to \infty\,,\) where \(\delta_{\{u\}}\) denotes the one-point probability measure in \(u\).
Moreover, note that $\eta_{\widetilde{f}}=\eta_f$.
Further, \(\phi_f\) defined via \eqref{defphif} by
\begin{align}\label{eqonpomeas}
\phi_f(x_1,x_2)=f(x_2,x_1,\ldots,x_1)=\one_{\{u_1<x_2,\max_{2\leq i\leq m}\{u_i\} < x_1\}}
\end{align}
is componentwise left-continuous and induces the one-point probability measure \(\eta_{\phi_f} = \delta_{\{\max_{2\leq i\leq m}\{u_i\},u_1\}} \,.\)
Thus, for the survival function of the upper product, it follows by \eqref{defsurvfun}, and since $M^2\vee D^2\vee \cdots \vee D^m$ is a copula, that
\begin{equation}\label{eq_m_dddd_hat_proof_eq_1}
\begin{aligned}
\reallywidehat{M^2\vee D^2\vee \cdots \vee D^m}\,(u) &= \int_{[0,1]^m} \one_{\{u < v\}} \de (M^2\vee D^2\vee \cdots \vee D^m)(v) \\
&= \lim_{n\to \infty} \int_{[0,1]^m} \Phi_n(v) \de (M^2\vee D^2\vee \cdots \vee D^m)(v) \\
& = \lim_{n\to \infty} \psi_{\Phi_n}(M^2\vee D^2\vee \cdots \vee D^m) \leq \lim_{n\to \infty} \pi_{\phi_{\Phi_n}} (\widehat{Q_2})\end{aligned}
\end{equation}
Indeed, to see that \eqref{eq_m_dddd_hat_proof_eq_1} holds, note that the second equality follows from the dominated convergence theorem and by using that $M^2\vee D^2\vee \cdots \vee D^m$ is continuous. The third equality is due to \eqref{defexpop_0} using that the upper product is a copula and thus continuous and measure-inducing. The inequality holds by assumption using that \(u\to \Phi_n(u,\ldots,u)\) is Lebesgue-integrable and that \(\phi_{\Phi_n}\) is \(\Delta\)-monotone and thus measure-inducing. Now, \eqref{eq_m_dddd_hat_proof_eq_1} implies
\begin{equation}\label{eq_m_dddd_hat_proof_eq_1_split}
\begin{aligned}
\reallywidehat{M^2\vee D^2\vee \cdots \vee D^m}\,(u) &\leq  \lim_{n\to \infty} \sum_{I\subseteq \{1,2\}\atop I\ne \emptyset} \int_{[0,1]^{|I|}} (\widehat{Q_{2}})_I(v) \de \eta_{(\phi_{\Phi_n})_I}(v) + \phi_{\Phi_n}(0,0)\\
&= \int_{[0,1]^2} \widehat{Q_2}(u) \de \eta_{\phi_f}(u)= \widehat{Q_2}(\max_{2\leq i \leq m}\{u_i\},u_1) = \widehat{Q^*}(u_1,\ldots,u_m)\,,
\end{aligned}
\end{equation}

Indeed, the first line follows from \eqref{defquexpop}. For the first equality, we apply that \((\eta_{\phi_{\Phi_n}})_n\) converges weakly to \(\eta_{\phi_f}\,,\) and that the measures \(\eta_{(\phi_{\Phi_n})_{\{1\}}}\) and \(\eta_{(\phi_{\Phi_n})_{\{2\}}}\) induced by the marginals of \(\phi_{\Phi_n}\) converge weakly to the null-measure because, as $u \in (0,1)^m$, \((\phi_{\Phi_n})_{\{1\}}(x)=\phi_{\Phi_n}(x,0)\to 0=\phi_f(x,0)= (\phi_f)_{\{1\}}(x)\) for all \(x\in [0,1]^m\,,\) and similarly for \((\eta_{(\phi_{\Phi_n})_{\{2\}}})_n\,.\) Further, we use that \(\phi_{\Phi_n}(0,0)\to 0=\phi_f(0,0)\,.\) The second equality follows from \eqref{eqonpomeas}, and the last equality holds due to Lemma \ref{lemsufu}.
Since $u \mapsto \reallywidehat{M^2\vee D^2\vee \cdots \vee D^m}\,(u)$ and $u \mapsto \widehat{Q^*}(u)$ are both continuous on $[0,1]^m$, we obtain that \eqref{eq_m_dddd_hat_proof_eq_1} holds also for $u\in [0,1]^m$.\\
\underline{(c) \(\Longrightarrow\)(a)}:
For \(i\in\{2,\ldots,m\}\) let \(u=(u_1,\ldots,u_m)\in [0,1]^m\) with \(u_j=0\) for all \mbox{\(j\in \{2,\ldots,m\}\setminus \{i\}\,.\)} Then, the survival function of \(Q_2\) satisfies that
\begin{equation}
\label{eq_iii_i_proof_equalities_1}
\begin{aligned}
\widehat{Q_2}(u_i,u_1) &=\widehat{Q_2} \left(\max_{2\leq j\leq m}\{u_j\},u_1 \right)= \widehat{Q^*}(u_1,\ldots,u_m)\geq \reallywidehat{M^2 \vee D^2 \vee \cdots \vee D^m}\,(u) \\
&= 1-\int_0^1 \max\{\one_{\{u_1>t\}},\partial_2 D^2(u_2,t),\ldots,\partial_2 D^m(u_m,t)\} \de t\\
&= 1- u_1 - \int_{u_1}^1 \max_{2\leq j\leq m}\{\partial_2 D^j(u_j,t)\} \de t = 1- u_1 - u_i + D^i(u_i,u_1)\,.
\end{aligned}
\end{equation}
Indeed, to see that \eqref{eq_iii_i_proof_equalities_1} holds, note that the second equality follows with Lemma \ref{lemsufu}. The inequality holds by assumption (c). The third equality follows with \cite[Proposition 2.4~(viii)]{Ansari-Rueschendorf-2018} by using that \(\partial_2 M^2(u_1,t)=\one_{\{u_1>t\}}\) for all $t\in [0,1]$ with \(t\ne u_1\,.\) The fourth equality is a consequence of \(0\leq \partial_2 D^j(u_j,t)\leq 1\) for Lebesgue-almost all \(t\in [0,1]\) and for \(j=2,\ldots m\,,\) see \cite[Theorem 2.2.7]{Nelsen-2006}. The last equality holds true by \cite[Theorem 2.2.7]{Nelsen-2006} because \(\partial_2 D^i(u_i,t)\geq \partial_2 D^i(0,t)=\partial_2D^j(u_j,t)\) for Lebesgue-almost all \(t\in [0,1]\) and for all \(j\ne i\,,\) using that \(D^2,\ldots,D^m\) are copulas. Hence, it follows that
\begin{align*}
D^i(u_i,u_1)\leq \widehat{Q_2}(u_i,u_1)-1+u_1+u_i = Q_2(u_i,u_1)\,.
\end{align*}
\underline{(a) \(\Longrightarrow\) (b):}
For \(u=(u_1,\ldots,u_m)\in [0,1]^m\,,\) we have
\begin{equation}\label{eq_i_ii}
\begin{aligned}
M^2\vee D^2\vee \ldots\vee D^m ~(u)&=\int_0^{u_1}\min_{2\leq i \leq m}\{\partial_2 D^i(u_i,t)\} \de t \leq \min_{2\leq i \leq m}\{D^i(u_i,u_1)\} \\
&\leq \min_{2\leq i \leq m}\{Q_2(u_i,u_1)\}= Q_2(\min_{2\leq i \leq m}\{u_i\},u_1) = Q^*(u).
\end{aligned}
\end{equation}
Indeed, to see that \eqref{eq_i_ii} holds, note that the first equality follows from the definition of the upper product, from \(\partial_2 M^2(u_1,t)=\one_{\{u_1>t\}}\)  and by \(0 \leq \partial_2 D^i(u_i,t) \leq 1\) for all \(i=1,\dots,m\) and for Lebesgue-almost all \(t\in [0,1]\,.\) The first inequality is a consequence of Jensen's inequality, the fundamental theorem of calculus, and property \eqref{def_quasicop1} in Definition \ref{def_quasicop} of copulas. The second inequality holds by assumption (a). The second equality follows because \(Q_2\) is a quasi-copula and, thus, non-decreasing in each argument.\\
\underline{(b) \(\Longrightarrow\) (a):}
Let $i\in \{2,\ldots,m\}\,.\)  For \(u=(u_1,\ldots,u_m)\in [0,1]^m\) such that \(u_j=1\) for all \(j\in \{2,\ldots,m\}\setminus \{i\}\) it follows that
\begin{align*}
D^i(u_i,u_1) = M^2\vee D^2\vee \cdots \vee D^m \,(u) \leq Q^*(u) = Q_2(u_i,u_1)\,,
\end{align*}
where the first equality is given by \cite[Proposition 2.4~(iv),(vi)]{Ansari-Rueschendorf-2018} and the inequality holds by assumption~(b).\\
\underline{((b) and (c)) \(\Longleftrightarrow\) (d):} This holds by the definition of the concordance ordering.\\

\underline{(a)\(\Longrightarrow\)(e)}:
We extend the proof of the main result in \cite[Chapter 3]{Ansari-Rueschendorf-2020} to a quasi-copula \(Q_2\in \cQ_2\) instead of a copula \(E\in \cC_2\,,\) cf.\,Remark \ref{remmaithean1}~(a).
Analogously, we first prove the statement in a discretized version using that all discretized copulas and quasi-copulas induce (signed) measures with finite support.
Then, we show the statement by an approximation of the discretized version, which differs from the proof of \cite[Theorem 1]{Ansari-Rueschendorf-2020} because we need to apply the quasi-expectation operator w.r.t.\,a quasi-copula instead of the expectation w.r.t.\,a probability measure.

For the first step, we make use of the same ideas and concepts as in the first part of the proof of \cite[Theorem 1]{Ansari-Rueschendorf-2020}, namely applying mass transfer theory from \cite{Mueller-2013} which requires a discretization of the distributions to a finite grid as follows. For \(n\in \N\) and \(m\geq 1\) denote by
\begin{align*}
\G_{n}^m:&= \left\{(\tfrac {i_1} n,\ldots,\tfrac {i_m} n) ~\middle|~ i_k\in \{1,\ldots,n\} \text{ for all } k\in \{1,\ldots,m\}\right\}\,,\\
\G_{n,0}^m:&= \left\{(\tfrac {i_1} n,\ldots,\tfrac {i_m} n) ~\middle|~ i_k\in \{0,\ldots,n\} \text{ for all } k\in \{1,\ldots,m\}\right\}
\end{align*}
the (extended) uniform unit \emph{\(n\)-grid} of dimension \(m\) with edge length \(\tfrac 1 n\,.\)

For the discretization of copulas and quasi-copulas, we use the concept of a \emph{(signed) $n$-grid $m$-copula} \(D\colon [0,1]^m \to \R\) which is the measure-generating function (see \eqref{eq_def_g_sigma}) of a (signed) measure \(\mu\) on \(\G_{n}^m\) that satisfies for all \(u=(u_1,\dots,u_m)\in [0,1]^m\)
\begin{enumerate}[(i)]
\item \(D(u)=D\left(\frac{\lfloor n u_1 \rfloor}{n},\dots,\frac{\lfloor n u_m \rfloor}{n}\right)=\mu\left([0,\frac{\lfloor n u_1 \rfloor}{n}]\times \cdots \times [0,\frac{\lfloor n u_m \rfloor}{n}]\right)\),  
\item for all \(i=1,\ldots,m\,,\) it holds \(D(u)=\tfrac k n\) for all \(k=0,\ldots,n\,,\) if \(u_i=\tfrac k n\) and \(u_j=1\) for all \(j\ne i\,,\)
\end{enumerate}
where \(\lfloor\cdot\rfloor\) is the componentwise floor function. Denote by \(\cC_{m,n}\) (respectively \(\cC_{m,n}^s\)) the set of all (signed) \(n\)-grid \(m\)-copulas. Note that, as discussed after \eqref{eq_def_g_sigma}, a (signed) \(n\)-grid \(m\)-copula induces a (signed) measure with support on the finite grid \(\G_n^m\,.\)
Hence, for every $m$-variate quasi-copula \(Q\in \cQ_m\,,\) the \emph{canonical \(n\)-grid quasi-copula} \(\G_n(Q')\) defined by
$
\G_n(Q')(u):=Q'(\tfrac{\lfloor nu \rfloor}{n})\,,~~~u\in [0,1]^m\,,
$
induces a signed measure.

For a function \(g\colon [0,1]^m\to \R\,,\) denote the difference operator of length \(\tfrac 1 n\) w.r.t.\,the \(i\)-th variable by
$
\delta_n^i g(u):= g(u)-g(\max\{u-\tfrac 1 n e_i,0\})\,,
$
 where \(e_i\) is the \(i\)-th unit vector. Then, we define the upper product \(\bigvee \colon (\cC_{2,n})^m \to \cC_{m,n}\) for grid copulas \(D_n^1,\ldots,D_n^m\in \cC_{2,n}\) by
\begin{align}\label{defdisuppprod}
\bigvee_{i=1}^m D_n^i (u_1,\ldots,u_m):&=\sum_{k=1}^n \min_{1\leq i \leq m}\left\{\delta_n^2 D_n^i(u_i,\tfrac k n)\right\}=\frac 1 n\sum_{k=1}^n \min_{1\leq i \leq m}\left\{n\delta_n^2 D_n^i(u_i,\tfrac k n)\right\}
\end{align}
for \((u_1,\ldots,u_m)\in [0,1]^m\,.\)

The upper product  for signed grid copulas \(D^1,\ldots,D^m\) is defined analogously where \(\bigvee_{i=1}^m D_n^i\in \cC_{m,n}^s\) whenever \(\delta_n^2 D_n^i(\cdot,t)\leq \tfrac 1 n\) for all \(i\in \{1,\ldots,m\}\) and \(t\in [0,1]\,.\)

Let \(D_n^i:=\G_n(D^i)\), $M_n^2:=\G_n(M^2)$, \(Q_{2,n}:=\G_n(Q_2)\), and \(Q_n^*:=\G_n(Q^*)\) be the canonical \(n\)-grid (quasi-)copulas of \(D^i\,,\) \(i=2,\ldots,m\,,\) $M^2$, \(Q_2\), and \(Q^*\,.\) Then it holds that \(Q_n^*(u)=Q_{2,n}(\min_{2\leq i \leq m}\{u_i\},u_1)\) and \(D_n^i(u_1,u_i)\leq Q_{2,n}(u_1,u_i)\) for all \(u=(u_1,\ldots,u_m)\in [0,1]^m\) and \(i\in \{2,\ldots,m\}\,.\)
Similar to \cite[Proof of Theorem 1]{Ansari-Rueschendorf-2020}, there exists for every \(n\in \N\) a finite sequence \((Q_{n,k}^*)_{0\leq k \leq n}\) of signed \(n\)-grid quasi-copulas such that
\begin{align}\label{eq_appendix_proof1}
Q_{n,0}^*&=M_n^2\vee D_n^2\vee\cdots \vee D_n^m\,,\qquad 
Q_{n,n}^*= Q_n^*\,,\\
\label{eqppp} Q_{n,k-1}^*&\leq_{\operatorname{sm}} Q_{n,k}^*~~~\text{for all }1\leq k \leq n ~\text{and for all } n\in \N\,.
\end{align}
Note that the supermodular ordering can also be defined w.r.t.\,finite signed measures \(\nu_1\) and \(\nu_2\) with finite support because the inequality \(\int f \de \nu_1\leq \int f \de \nu_2\) depends only the difference \(\nu_2-\nu_1\) by  \(\int f \de (\nu_2-\nu_1)\geq 0\) for \(f\in \cF_{\operatorname{sm}}\,.\)
So, the comparison in \eqref{eqppp} is well-defined because the expressions on both sides are signed grid copulas which correspond to signed measures with finite support \(\G_n^m\,.\) Hence, each \(Q_{n,k}^*\) induces a signed measure which we can integrate against.

Due to the transitivity of the supermodular ordering, \eqref{eq_appendix_proof1} and \eqref{eqppp} imply that \(M_n^2\vee D_n^2\vee\cdots \vee D_n^m\leq_{\operatorname{sm}} Q_n^*\) for all \(n\in \N\,,\) i.e.,
\begin{align}\label{mrdisve}
\int_{[0,1]^m} f(u)\de (M_n^2\vee D_n^2\vee\cdots \vee D_n^m)(u) \leq \int_{[0,1]^m} f(u) \de Q_n^*(u)~~~\,,
\end{align}
$\text{for all }f\in \cF_{\operatorname{sm}}$ such that the integrals exist. This proves the statement in the discretized version for grid copulas.

For the second step, let \(f\in \cF_{\operatorname{mi}}^{\operatorname{c},\operatorname{l}}([0,1]^m)\) be a left-continuous and supermodular function.
Note that \(f\) is bounded because it is measure-inducing and defined on a compact domain.
In the first step, we chose for notational conveniences the grid \(\G_n^m=\{\tfrac 1 n,\ldots,\tfrac {n-1}n,1\}^m\) as support of the discretized copulas and quasi-copulas
$
C_n:= M_n^2\vee D_n^2 \vee\cdots \vee D_n^m = M^2\vee D^2\vee\cdots\vee D^m\circ(F_n,\ldots,F_n)$ and $Q_n^*:= Q^*\circ (F_n,\ldots,F_n)\,$,
respectively, where \(F_n:[0,1]\rightarrow [0,1]$ is now defined as
\begin{align}\label{dffn}
F_n(x)&= \begin{cases}
0 & \text{if } x<\tfrac 1 {n+1}\,,\\
\tfrac k n &\text{if } x \in \big[\tfrac k {n+1}, \tfrac {k+1}{n+1}\big)\,, k=1,\ldots,n\,,\\
1& \text{if } x\geq 1\,.
\end{cases}
\end{align}
Note that the range of \(F_n\) is also \(\{0,\tfrac 1 n,\ldots,\tfrac {n-1}n,1\}\,.\)
Then, \(C_n\) and \(Q_n^*\) are distributions with finite support on the grid \(\K_n^m:=\{\tfrac 1 {n+1},\ldots,\tfrac n {n+1}\}^m\,.\) Applying mass transfer theory analogously to the first step, we also obtain \eqref{mrdisve}, now for the discretization w.r.t.\, $F_n$, i.e., we have $ \int_{[0,1]^m} f(u) \de C_n(u)  \leq  \int_{[0,1]^m} f(u) \de Q_n^*(u)$. Then, it follows that
\begin{equation}\label{eq_pi_f}
\begin{aligned} 
\pi_f(M^2\vee D^2\vee \cdots \vee D^m) &= 
\sum_{I\subseteq \{1,\ldots,m\} \atop I\ne \emptyset} \int_{[0,1]^m} \lim_{n\to \infty} (\widehat{C_n})_I(u) \de \eta_{f_I}(u) + f(0,\ldots,0) \\
&= \lim_{n\to \infty} \sum_{I\subseteq \{1,\ldots,m\} \atop I\ne \emptyset} \int_{[0,1]^m} (\widehat{C_n})_I(u) \de \eta_{f_I}(u) + f(0,\ldots,0)\\
&=\lim_{n\to \infty} \int_{[0,1]^m} f(u) \de C_n(u)  \\
& \leq \lim_{n\to \infty} \int_{[0,1]^m} f(u) \de Q_n^*(u) = \lim_{n\to \infty} \psi_{Q^*\circ(F_n,\ldots,F_n)}(f)= \pi_f(\widehat{Q^*})
\end{aligned}
\end{equation}
Indeed, to see that \eqref{eq_pi_f} holds, note that for the first equality, we apply \eqref{qeopexopeq} using that \(f\in \cF_{\operatorname{mi}}^{\operatorname{c},\operatorname{l}}\) and thus \(f_I\) induce a finite signed measures, \(I\subseteq\{1,\ldots,m\}\,.\) Further, we apply that the grid approximation \(C_n\) converges weakly and, thus, pointwise to \eqref{defquexpop} see \cite[Proposition 2.12]{Ansari-Rueschendorf-2018}, using that \(M^2\vee D^2\vee \cdots \vee D^m\) is a copula and thus continuous. Moreover, we use that $\reallywidehat{M^2\vee D^2\vee \cdots \vee D^m}(0,\cdots,0)=1$.
The second equality holds due to the dominated convergence theorem applying again that \(f\) induces a finite signed measure.
The third and fourth equality follow from \eqref{eqqop} using that the discretized copula \(C_n\) and the discretized quasi-copula \(Q_n^* = Q^*\circ(F_n\ldots,F_n)\) are right-continuous, grounded, bounded, measure-inducing, and fulfil the continuity conditions \eqref{contboun1} and \eqref{contboun2}.
The last equality follows from the approximation due to \cite[Theorem 3.7]{Ansari-2021}.
The inequality is a consequence of the discretized supermodular ordering result \eqref{mrdisve} in the modified version discretizing w.r.t. the grid \(\K_n^m\,.\) 

\underline{(e) \(\Longrightarrow\) (f)}:
Let \(f\colon[0,1)^m\to \R\) be lower bounded by some $M^2\vee D^2\vee \cdots \vee D^m$-integrable function, left-continuous, supermodular, and componentwise increasing/componentwise decreasing
such that \((\phi_f)_I\) is Lebesgue integrable on \([0,1)^{|I|}\) for \(I\subseteq\{1,2\}\,,\) \(I\ne \emptyset\,,\)

For \(n\in \N\,,\) let \(F_n\) be the distribution function given by \eqref{dffn}. We first show that
\begin{equation}\label{azw}
\pi_{f\circ(F_n^{-1},\ldots,F_n^{-1})} (\widehat{Q^*}) = \pi_{\phi_f \circ (F_n^{-1},F_n^{-1})}(\widehat{Q_2})\,.
\end{equation}
Define \(Q_{(n)}(u_1,\ldots,u_m):=Q^*(F_n(u_1),\ldots,F_n(u_m))\) and \(Q_{2,(n)}(u_1,u_2):=Q_2(F_n(u_2),F_n(u_1))\) for \(u_1,\ldots,u_m\in [0,1]\,.\) Then \(Q_{(n)}\) and \(Q_{2,(n)}\) induce by \eqref{eqindmeasu} finite signed measures on \(\cB([0,1]^m)\) and \(\cB([0,1]^2)\) with mass concentrated on the \(n\)-grid
\(\K_n^m=\{\tfrac 1 {n+1},\ldots,\tfrac {n} {n+1}\}^m\) and \(\K_n^2=\{\tfrac 1 {n+1},\ldots,\tfrac {n} {n+1}\}^2,\) respectively.
For \(u_1\in \h_n^1=\{\tfrac 1 {n+1},\ldots,\tfrac {n} {n+1}\},\) consider the conditional measure generating functions
$
Q_{(n)}^{u_1}(u_2,\ldots,u_m) := n \cdot \left[Q_{(n)}(u_1+\tfrac 1 {n+1},u_2,\ldots,u_m)- Q_{(n)} (u_1,u_2,\ldots,u_m)\right]$, and $
Q_{2,(n)}^{u_1}(u_2) := n \cdot \left[Q_{2,(n)}(u_1+\tfrac 1 {n+1},u_2)- Q_{2,(n)}(u_1,u_2)\right]$.
Then we obtain that
\begin{equation}\label{eq_pi_f_Finverse}
\begin{aligned}
\pi_{f\circ (F_n^{-1},\ldots,F_n^{-1})}(\widehat{Q^*}) &=\pi_f(\widehat{Q^*}\circ (F_n,\ldots,F_n))= \int_{[0,1]^m} f(u) \de Q_{(n)}(u)\\
&= \int_{[0,1]}\int_{[0,1]^{m-1}} f(u_1,u_2,\ldots,u_m) \de Q_{(n)}^{u_1}(u_2,\ldots,u_m) \de F_n(u_1)\\
&= \int_{[0,1]}\int_{[0,1]^{m-1}} f(u_1,u_2,\ldots,u_m) \de Q_{2,(n)}^{u_1}(\min_{2\leq i \leq m}\{u_i\}) \de F_n(u_1)\\
&= \int_{[0,1]}\int_{[0,1]^{m-1}} f(u_1,v,\ldots,v) \de Q_{2,(n)}^{u_1}(v) \de F_n(u_1)\\
&= \int_{[0,1]^m} \phi_f(v,u_1)\de Q_{2,(n)}(v,u_1)= \pi_{\phi_f}(\widehat{Q_2}\circ (F_n,F_n))=\pi_{\phi_f\circ (F_n^{-1},F_n^{-1})}(\widehat{Q_2})\,.
\end{aligned}
\end{equation}
Indeed, to see that \eqref{eq_pi_f_Finverse} holds, note that the first and last equality follow from the marginal transformation formula \eqref{margtrafgs}. Since \(Q_{(n)}\) and \(Q_{2,(n)}\) are defined on \([0,1)^m\) and \([0,1)^2\,,\) respectively, they are grounded and satisfy \(\widehat{Q^*}\circ (F_n,\ldots,F_n)=\widehat{Q_{(n)}}\) and \(\widehat{Q_2}\circ (F_n,F_n)=\widehat{Q_{2,(n)}}\). Hence, the second and seventh equality follows with \eqref{eqqop} using that \(Q_{(n)}\) and \(Q_{2,(n)}\) are right-continuous and measure-inducing. The third and sixth equality hold true by the disintegration theorem applied on the positive part and the negative part of the Hahn-Jordan decomposition of the signed measures induced by \(Q_{(n)}\) and \(Q_{2,(n)}\,,\) respectively. The fourth equality follows from \(Q^*(u_1,u_2,\ldots,u_m)=Q_2(\min_{2\leq i \leq m}\{u_i\},u_1)\,.\) The fifth equality holds by Lemma \ref{lemmadia} using that \(Q_{(n)}^{u_1}\) and \(Q_{2,(n)}^{u_1}\) are measure generating functions of signed measures with \(Q_{(n)}^{u_1}(u_2,\ldots,u_m)=Q_{2,(n)}^{u_1}(\min_{i=\{2,\ldots,m\}}\{u_i\})\) for all \(u_1\in \h_n^1\) and for all \((u_2,\ldots,u_m)\in \h_n^{m-1}\,.\) As a consequence of \eqref{azw}, we now obtain that
\begin{equation}
\label{appslbc}
\begin{aligned}
\begin{split}
\psi_f(M^2\vee D^2\vee \cdots \vee D^m) &\leq \liminf_{n\to \infty} \psi_{f\circ (F_n^{-1},\ldots,F_n^{-1})} (M^2\vee D^2\vee \cdots \vee D^m) \\
&= \liminf_{n\to \infty} \pi_{f\circ (F_n^{-1},\ldots,F_n^{-1})} (\reallywidehat{M^2\vee D^2\vee \cdots \vee D^m}) \\
& \leq \liminf_{n\to \infty} \pi_{f\circ (F_n^{-1},\ldots,F_n^{-1})}(\widehat{Q^*})= \liminf_{n\to \infty} \pi_{\phi_f\circ (F_n^{-1},F_n^{-1})}(\widehat{Q_2}) \\
&= \liminf_{n\to \infty} \pi_{\phi_f}(\widehat{Q_2}\circ(F_n,F_n)) = \liminf_{n\to \infty} \psi_{\phi_f}({Q_2\circ (F_n,F_n)})= \pi_{\phi_f}(\widehat{Q_2})\,,
\end{split}
\end{aligned}
\end{equation}
Indeed, to see that \eqref{appslbc} holds observe that the first inequality follows by an application of Fatou's Lemma using that \(f\) is lower bounded by some integrable function.
The first equality is a consequence of \eqref{qeopexopeq}. The second inequality holds true by assumption using that \(f\circ (F_n^{-1},\ldots,F_n^{-1})\) is left-continuous, supermodular, and measure-inducing. The third equality follows from the marginal transformation formula \eqref{margtrafgs}. The fourth equality follows from \eqref{eqqop} noting that \(Q_2\circ (F_n,F_n)\) is grounded. The last equality is a consequence of \cite[Corollary 3.12]{Ansari-2021}
using that \(\phi_f\in \cF_{\operatorname{mi}}^{\operatorname{c},\operatorname{l}}([0,1)^2)\) is left-continuous and \((\phi_f)_I\) is Lebesgue integrable for \(I\subseteq \{1,2\}\,,\) \(I\ne \emptyset\,,\)  as well as \(F_n(x)\to x\) for all \(x\in [0,1]\,.\)
This proves (f).

\end{proof}

\begin{proof}[Proof of Lemma~\ref{lem_indicator_constraints}]
We only prove the assertion of Lemma~\ref{lem_indicator_constraints}~(a). The assertion of Lemma~\ref{lem_indicator_constraints}~(b) follows analogously.

We observe that for each sequence $(x^{(N)})_{N \in \N} \subset  \mathds{Q}_+^{nd}$ with $x^{(N)} \downarrow x\in  \mathds{Q}_+^{nd}$ for $N \rightarrow \infty$ we have for all $\Q \in \cM(\mu)$ that 
\begin{equation}\label{eq_ineq_gxfx_1}
\lim_{N \rightarrow \infty} \E_\Q\left[g_{x^{(N)}}(S)\right]=\lim_{N \rightarrow \infty}\Q\left(S<x^{(N)}\right)=\Q \left(S\leq x\right)=\E_\Q[f_x(S)]
\end{equation}
and
\begin{equation}\label{eq_ineq_gxfx_2}
\lim_{N \rightarrow \infty} \overline{Q}\left(F_1^1\left({x_1^1}^{(N)}\right),\ldots,F_n^d\left({x_n^d}^{(N)}\right)\right) =\overline{Q}\left(F_1^1({x_1^1}),\ldots,F_n^d({x_n^d})\right).
\end{equation}
Thus, if we have 
\begin{equation}\label{eq_ineq_1}
\E_{\Q} \left[g_x(S)\right]\leq \phantom{-}\overline{Q}(F_1^1(x_1^1),\ldots,F_n^d(x_n^d)) \text{ for all } x=\left(x_1^1,\ldots,x_n^d\right)\in \mathds{Q}_+^{nd},
\end{equation}
then we may choose for each $x\in \mathds{Q}_+^{nd}$ a sequence $(x^{(N)})_{N \in \N} \subset  \mathds{Q}_+^{nd}$ with $x^{(N)} \downarrow x\in  \mathds{Q}_+^{nd}$ for $N \rightarrow \infty$, and  it follows with \eqref{eq_ineq_gxfx_1} and \eqref{eq_ineq_gxfx_2} that
\begin{equation}\label{eq_ineq_2}
\E_{\Q} \left[f_x(S)\right]\leq \phantom{-}\overline{Q}(F_1^1(x_1^1),\ldots,F_n^d(x_n^d))\text{ for all } x=\left(x_1^1,\ldots,x_n^d\right)\in \mathds{Q}_+^{nd}.
\end{equation}
Moreover, \eqref{eq_ineq_2} implies \eqref{eq_ineq_1} by definition of the respective indicator functions, thus  \eqref{eq_ineq_1} and \eqref{eq_ineq_2} are equivalent.
The assertion follows, since \(\underline{Q} \leq_{\operatorname{lo}} C_{\Q}\leq_{\operatorname{lo}} \overline{Q}\) is, by definition of the lower orthant order, equivalent to
$
\phantom{-}\E_{\Q} \left[f_x(S)\right]\leq \phantom{-}\overline{Q}(F_1^1(x_1^1),\ldots,F_n^d(x_n^d))$ and $
\E_{\Q} \left[-f_x(S)\right]\leq -\underline{Q}(F_1^1(x_1^1),\ldots,F_n^d(x_n^d))
$
for all $x=\left(x_1^1,\ldots,x_n^d\right)\in \mathds{Q}_+^{nd}$.
\end{proof}

\begin{proof}[Proof of Theorem~\ref{theloob}]
We only prove part (a), part~(b) follows analogously.\
Equation \eqref{theloob1a} is a consequence of Proposition \ref{theaddcon} and Lemma~\ref{lem_indicator_constraints}.
Inequality \eqref{eqlosb} follows from
\begin{equation}\label{eq_ineq_thm42_proof}
\overline{P}_{{\cM}^{\operatorname{lo}}_{\underline{Q},\overline{Q}}} = \sup_{\Q\in {\cM}^{\operatorname{lo}}_{\underline{Q},\overline{Q}}} \E_\Q[c(S)] \leq  \sup_{Q\in \cQ_{nd}\atop \underline{Q}\leq_{\operatorname{lo}} Q\leq_{\operatorname{lo}} \overline{Q}} \pi_c^{\mu}(\widehat{Q})=\pi_c^{\mu}(\widehat{\overline{Q}}),
\end{equation}
where we neglect the martingale property and the requirement that $\Q$ needs to be a probability measure for the inequality in \eqref{eq_ineq_thm42_proof}. The last equality is a consequence of the characterization of the lower orthant order for quasi-copulas in \eqref{charortord} noting that, by Remark \ref{propconncomdom}\,\eqref{propconncomdom2}, \(\pi_c^{\mu}(\widehat{\overline{Q}})\) exists because
\begin{align*}
\int_0^1 |c_I((F_{i}^{j_1})^{-1}(u),\ldots,(F_{i}^{j_k})^{-1}(u)) |\de u &\leq \alpha \int_0^1 \bigg(1+ \sum_{j\in I} | (F_i^j)^{-1}(u)|\bigg) \de u 
=\alpha \bigg(1+\sum_{j \in I} \E_{\mu_i^j}[|S_{t_i}^j|]\bigg)
\end{align*}
is finite for all \(I=\{j_1,\ldots,j_k\}\subseteq \{1,\ldots,nd\}\,,\) \(I\ne \emptyset\,,\) and for some \(\alpha>0\) using that \(c\in U_{\operatorname{lin}}(\R_+^{nd})\) and using that the first moments of \(\mu\) exist.
\end{proof}

For the proof of Theorem \ref{corsmifm}, we formulate an auxiliary lemma based, for some fixed time \(t_i\) and a quasi-copula \(Q\in \cQ_d\,,\) on the class
$
\overline{\cM}_{Q,t_i}^{\operatorname{lo}}(\mu):=\{\Q\in \cM(\mu)\, |\, C_{\Q_i}\leq_{\operatorname{lo}} Q\}
$
of probability measures \(\Q \in \cM(\mu)\) such that the copula \(C_{\Q_i}\in \mathcal{C}_d\) at time \(t_i\) defined by
$
C_{\Q_i}(F_i^1(x_1),\ldots,F_i^d(x_d))=\Q(S_{t_i}^1\leq x_1, \dots,S_{t_i}^d\leq x_d)$,~~~$x=(x_1,\ldots,x_d)\in \R^d\,,
$
is upper bounded by \(Q\in \cQ_d\) w.r.t.\,the lower orthant ordering, and where $\Q_i =  \Q \circ S_{t_i}^{-1}$. 
Note that we have \(\overline{\cM}^{\operatorname{lo}}_{M^d,t_i}(\mu)=\cM(\mu)\) in the case that no additional dependence restriction is included.

\begin{lem}\label{lemifm}
For \(Q_2\in \cQ_2\,,\) let \(Q^*\in \cQ_d\) be the \(d\)-variate quasi-copula given by \eqref{deqcqs}. Then it holds for all $i =1,\dots,n$ that
$
\cM_{Q_2,t_i}^{\operatorname{CCD}}(\mu) = \overline{\cM}_{Q^*,t_i}^{\operatorname{lo}}(\mu)\,.
$
\end{lem}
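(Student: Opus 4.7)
The plan is to reduce the set equality to an elementary two-line argument that exploits only the explicit form of $Q^*$ in \eqref{deqcqs} together with the universal fact that setting all-but-two arguments equal to $1$ in a $d$-copula extracts a bivariate marginal copula. Concretely, if I fix $\Q\in\cM(\mu)$ and consider its $d$-variate copula $C_{\Q_i}$ at time $t_i$, then the $(1,k)$-marginal copula satisfies $C_{\Q_i^{1,k}}(u_1,u_k)=C_{\Q_i}(u_1,1,\ldots,u_k,\ldots,1)$, while, directly from \eqref{deqcqs}, $Q^*(u_1,1,\ldots,u_k,\ldots,1)=Q_2(u_k,u_1)$. Almost everything will flow from this single identity.

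For the inclusion $\overline{\cM}^{\operatorname{lo}}_{Q^*,t_i}(\mu)\subseteq \cM^{\operatorname{CCD}}_{Q_2,t_i}(\mu)$: take $\Q$ with $C_{\Q_i}\leq_{\operatorname{lo}} Q^*$, fix $k\in\{2,\ldots,d\}$ and arbitrary $(u_1,u_k)\in[0,1]^2$, and evaluate the pointwise inequality at the vector $u\in[0,1]^d$ whose remaining coordinates are $1$. The identity above yields $C_{\Q_i^{1,k}}(u_1,u_k)\leq Q_2(u_k,u_1)$, which (under the argument convention flagged by the authors' own remark after \eqref{deqcqs}) is exactly $C_{\Q_i^{1,k}}\leq_{\operatorname{lo}} Q_2$, so $\Q\in\cM^{\operatorname{CCD}}_{Q_2,t_i}(\mu)$.

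For the reverse inclusion $\cM^{\operatorname{CCD}}_{Q_2,t_i}(\mu)\subseteq \overline{\cM}^{\operatorname{lo}}_{Q^*,t_i}(\mu)$: given $\Q$ with $C_{\Q_i^{1,k}}(u_1,u_k)\leq Q_2(u_k,u_1)$ for every $k=2,\ldots,d$, pick any $u=(u_1,\ldots,u_d)\in[0,1]^d$ and select $k^*\in\arg\min_{2\leq j\leq d}\{u_j\}$. Since copulas are componentwise non-decreasing, raising the coordinates $j\notin\{1,k^*\}$ up to $1$ only enlarges $C_{\Q_i}$, hence
\begin{align*}
C_{\Q_i}(u) \;\leq\; C_{\Q_i}(u_1,1,\ldots,u_{k^*},\ldots,1) \;=\; C_{\Q_i^{1,k^*}}(u_1,u_{k^*}) \;\leq\; Q_2(u_{k^*},u_1) \;=\; Q^*(u),
\end{align*}
where the last equality uses $u_{k^*}=\min_{2\leq j\leq d}\{u_j\}$ and the definition of $Q^*$. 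This shows $C_{\Q_i}\leq_{\operatorname{lo}} Q^*$, completing the proof.

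The only real care required is bookkeeping of the argument ordering inside $Q_2$, which is permuted between the CCD-style definition \eqref{deqcqs} (where the reference coordinate $u_1$ sits in the second slot of $Q_2$) and the standard reading of the bivariate marginal $C_{\Q_i^{1,k}}$; aside from this cosmetic alignment, there are no analytic obstacles, since neither measure theory, continuity, nor a minimax-type argument enters. The structure is monotone-combinatorial: the copula $Q^*$ is built so that its value at a generic point depends only on $u_1$ and on the minimum of the remaining coordinates, and that is precisely the two-variable comparison that the CCD hypothesis provides.
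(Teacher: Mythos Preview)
Your proof is correct, and for the inclusion $\cM^{\operatorname{CCD}}_{Q_2,t_i}(\mu)\subseteq \overline{\cM}^{\operatorname{lo}}_{Q^*,t_i}(\mu)$ it is genuinely more elementary than the paper's. The paper routes this direction through the upper product: it sets $D^k=C_{\Q_i^{1k}}$, invokes \cite[Proposition~2.4(i)]{Ansari-Rueschendorf-2018} to obtain $C_{\Q_i}\leq_{\operatorname{lo}} M^2\vee D^2\vee\cdots\vee D^d$, and then applies Theorem~\ref{theqcub} ((a)$\Rightarrow$(b)) to conclude $M^2\vee D^2\vee\cdots\vee D^d\leq_{\operatorname{lo}} Q^*$. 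Your argument collapses both steps into a single monotonicity observation---raising all coordinates $j\notin\{1,k^*\}$ to $1$ already gives $C_{\Q_i}(u)\leq C_{\Q_i^{1,k^*}}(u_1,u_{k^*})$, and the CCD hypothesis plus the explicit form of $Q^*$ finish. This avoids any reference to the upper product machinery and to Theorem~\ref{theqcub}, so the lemma becomes logically independent of the paper's main result. The paper's route, on the other hand, makes the structural point that $Q^*$ dominates the \emph{sharp} upper bound $M^2\vee D^2\vee\cdots\vee D^d$ for the class, which is informative but not needed for the bare set equality.

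For the other inclusion, your argument and the paper's coincide: both specialize to $u_j=1$ for $j\notin\{1,k\}$, which the paper phrases as ``closure of the lower orthant ordering under marginalization'' and you carry out explicitly. Your caution about the argument ordering in $Q_2$ is warranted---the convention $Q^*(u)=Q_2(\min_{2\leq j\leq d}\{u_j\},u_1)$ places the reference coordinate $u_1$ in the \emph{second} slot of $Q_2$, so the CCD inequality must be read consistently with that swap (the paper itself is not fully explicit about this in the lemma's proof, though it surfaces the transpose $(D^k)'$ in the proof of Theorem~\ref{corsmifm}).
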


\begin{proof}[Proof of Lemma~\ref{lemifm}]
For $i \in \{1,\dots,n\}$ let \(\Q\in \cM_{Q_2,t_i}^{\operatorname{CCD}}(\mu)\) and \(D^k=C_{\Q_i^{1k}}\in \mathcal{C}_2\) be the copula associated with the bivariate \((1,k)\)-marginal 
\(\Q_i^{1k}\in \mathcal{P}(\R_+^2)\) of \(\Q_i=\Q\circ S_{t_i}^{-1}\,,\) \(2\leq k\leq d\,.\) 
Then, \cite[Proposition 2.4 (i)]{Ansari-Rueschendorf-2018} and Theorem \ref{theqcub} imply \(C_{\Q_i} \leq_{\operatorname{lo}} M^2\vee D^2\vee \cdots \vee D^d\leq_{\operatorname{lo}} Q^*\,,\) 
which means that \(\Q\in \overline{\cM}_{Q^*,t_i}^{\operatorname{lo}}(\mu)\,.\)\\
For the reverse inclusion, let \(\widetilde{\Q}\in \overline{\cM}_{Q^*,t_i}^{\operatorname{lo}}(\mu)\,.\) From the closure of the lower orthant ordering under marginalization\footnote{The lower orthant ordering is closed under marginalization in the sense that \(Q\leq_{\operatorname{lo}}Q'\,,\) \(Q,Q'\in \cQ_d\,,\) implies \(Q^I(u_1,\ldots,u_k)\leq_{\operatorname{lo}} Q'^I(u_1,\ldots,u_k)\) for all \(k=1,\ldots,d\,,\) \(I=\{i_1,\ldots,i_k\}\subseteq \{1,\ldots,d\}\,,\) and \(u_1,\ldots,u_k\in [0,1]^k\,,\) where the \(Q^I\,,\) and analogously \(Q'^I\,,\) is defined by \(Q^I(u_{i_1},\ldots,u_{i_k}):=Q(u_1,\ldots,u_d)\) for all \(u_1,\ldots,u_d)\in [0,1]^d\) with \(u_j=1\) whenever \(j\ne I\,.\)} we obtain for the copula of the bivariate \((1,k)\)-marginal distribution of \(\Q_i\) that \(C_{\Q_{i}^{1k}}\leq_{\operatorname{lo}} Q_2\,,\) \(2\leq k\leq d\,,\) which means that \(\widetilde{\Q}\in \cM_{Q_2}^{\operatorname{CCD}}(\mu)\,.\)
\end{proof}


\begin{proof}[Proof of Theorem~\ref{corsmifm}]
The statement \eqref{eqcorsmifm1} in Theorem~\ref{corsmifm}~(a) follows from Proposition \ref{theaddcon} and Lemma~\ref{lem_indicator_constraints} with the inequality constraints 
\begin{align*}
\E_\Q [g_{x,y}(S_{t_i}^1,S_{t_i}^k)]\leq Q_2(F_{i}^1(x),F_{i}^k(y))\,,~~~g_{x,y}:=\one_{\{\cdot < (x,y)\}}\,, (x,y)\in \mathds{Q}_+^2\,, 2\leq k\leq d\,.
\end{align*}
To prove \eqref{eqcorsmifm2}, first note that the definition of super-modularity \(\widetilde{c}\in \cF_{\operatorname{sm}} \cap C_{\operatorname{lin}}(\R_+^{nd})\) and that $\widetilde{c}$ is componentwise increasing/componentwise decreasing implies that $c=\left(\widetilde{c} \circ \operatorname{proj}_i^1,\cdots, \widetilde{c} \circ \operatorname{proj}_i^d\right) \in \cF_{\operatorname{sm}}\cap C_{\operatorname{lin}}(\R_+^{d})$ and that $c$ is componentwise increasing/componentwise decreasing. Moreover, we obtain by Lemma \ref{lemifm} that
\begin{equation}
\label{eqprcorsmifm}
\begin{aligned}%
\overline{P}_{\cM_{Q_2}^{\operatorname{\operatorname{CCD}}}} &= \sup_{\Q\in \cM_{Q_2}^{\operatorname{\operatorname{CCD}}}(\mu)} \E_\Q \left[c(S_{t_i}^1,\ldots,S_{t_i}^d)\right]
= \sup_{\Q\in \overline{\cM}_{Q^*,t_i}^{\operatorname{lo}}(\mu)} \E_\Q \left[c(S_{t_i}^1,\ldots,S_{t_i}^d)\right]\\
&\leq \sup_{C \leq_{\operatorname{lo}} Q^*} \psi_c^{(F_i^1,\ldots,F_i^d)}(C)
\leq \sup_{C=M^2\vee D^2\vee \cdots \vee D^d\,,\atop D^k\leq_{\operatorname{lo}} Q_2\,, k=2,\ldots,d} \psi_c^{(F_i^1,\ldots,F_i^d)}(C)\leq \pi_{\phi_{c\,\circ\left((F_i^1)^{-1},\ldots,(F_i^d)^{-1}\right)}} (\widehat{Q_2})\,.
\end{aligned}
\end{equation}
Indeed, to see that \eqref{eqprcorsmifm} holds, observe that we neglect the martingale property for the first inequality.
For the second inequality, let \(D^k\in \cC_2\,,\) \(2\leq k\leq d\,,\) such that the transposed copula \((D^k)'\in \mathcal{C}_2\) (defined by \((D^k)'(u,v):=D^k(v,u)\) for $u,v \in [0,1]$) is the bivariate \((1,k)\)-marginal copula of \(C\,.\) Then, \(C\leq_{\operatorname{lo}} Q^*\) implies \(D^k\leq_{\operatorname{lo}} Q_2\,.\) Since the upper product \(M^2\vee D^2\vee \cdots \vee D^d\) is the greatest element w.r.t.\,\(\leq_{\operatorname{sm}}\) in the class of copulas with bivariate \((1,k)\)-marginal specifications \(D^k\,,\) \(2\leq k\leq d\,,\) see \cite[Proposition 2.4]{Ansari-Rueschendorf-2018}, it follows that \(C\leq_{\operatorname{sm}} M^2\vee D^2\vee \cdots \vee D^d\,.\) This implies the second inequality using that \(c\) is supermodular. The last inequality is a consequence of Theorem \ref{theqcub}~(f) using that \(u\mapsto c_I\,\circ\left((F_i^{j_1})^{-1},\ldots,(F_i^{j_k})^{-1}\right) (u,\ldots,u)\) is Lebesgue-integrable for all \(I=\{j_1,\ldots,j_k\}\subseteq\{1,\ldots,d\}\,,\) \(I\ne \emptyset\,,\) since \(c\in C_{\operatorname{lin}}(\R_+^{d})\) and the first moments of \(F_i^1,\ldots,F_i^d\) exist, see \eqref{eq_ineq_thm42_proof}. Moreover, we use that $c\in C_{\operatorname{lin}}(\R_+^{d})$ which implies that $c\circ ((F_i^1)^{-1},\dots,(F_i^d)^{-1})$ is lower bounded by a function of the form $(x_1,\dots,x_d)\mapsto K\bigg(1+\sum_{j=1}^d\big|(F_i^j)^{-1}(x_j)\big|\bigg)$ for $K \in \R$, which is $M^2\vee D^2\vee \cdots \vee D^d$-integrable due to the existing first moments of the marginals.
\end{proof}

\begin{proof}[Proof of Lemma \ref{lembasopt}:]
(a): 
If \(d\leq 2\,,\) then \(\mathfrak{C}\) and \(\mathfrak{P}\) induce signed measures, see \cite[Table 1]{Tankov-2011}. If \(d\geq 3\,,\) then \(\mathfrak{P}\) and \(\mathfrak{C}\) do not induce signed measures because they can be linearly transformed into the lower Fr\'{e}chet bound \(W^d\) which is a quasi-copula that does not induce a signed measure, see \cite[Theorem 2.4]{Nelsen-2010}.

(b): We show the statement for the payoff function \(\mathfrak{C}\,.\) The proof for \(\mathfrak{P}\) follows analogously.
By definition of \(\phi\) and \(G^{-1}\,,\) we have for all $x_1,x_2 \in \R$ that 
\begin{align*}
\phi_{\mathfrak{C}\circ(F_1^{-1},\ldots,F_d^{-1})}(x_1,x_2)&= \left(\alpha_1F_1^{-1}(x_2)+\sum_{i=2}^d \alpha_i F_{i}^{-1}(x_1)-K\right)_+\\
&=\left(\alpha_1F_1^{-1}(x_2)+\sum_{i=2}^d \alpha_i G^{-1}(x_1)-K\right)_+=\phi_{\mathfrak{C}}\circ(G^{-1},F_1^{-1})(x_1,x_2)\,.
\end{align*}
First, consider the special case where for all $i=1,\dots,d$ the generalized inverse distribution functions \(F_i^{-1}\)  are continuous on the range of $F_i$, which is equivalent to \(F_i\) being strictly increasing. Then, we obtain that
\begin{equation}
\label{eq_psi_pi_proof}
\begin{aligned}
\psi_{\mathfrak{C}}^{(F_1,\ldots,F_d)} (M^2\vee D^2\vee \cdots \vee D^d) &= \psi_{\mathfrak{C}\circ(F_1^{-1},\ldots,F_d^{-1})}(M^2\vee D^2\vee \cdots \vee D^d)\\
&\leq
 \pi_{\phi_{\mathfrak{C}}\circ(G^{-1},F_1^{-1})}(\widehat{Q_2})= \pi_{\phi_{\mathfrak{C}}}^{(G,F_1)}(\widehat{Q_2})\,,
\end{aligned}
\end{equation}
where the first equality is given by \eqref{defexpop}. The inequality follows from Theorem \ref{theqcub}~(f) using that \(\mathfrak{C}\circ (F_1^{-1},\ldots,F_d^{-1})\) is continuous and increasing supermodular applying that it is an increasing transformations of the increasing supermodular function \(\mathfrak{C}\,.\) Note that \(\int_0^1 \mathfrak{C}\circ (F_1^{-1},\ldots,F_d^{-1})(u,\ldots,u) \de u\) exists because \(\mathfrak{C}\in C_{\operatorname{lin}}(\R_+^m)\) and the first moments of \(F_1,\ldots,F_d\) exist. Further, we use that \(\phi_{\mathfrak{C}}\) and thus \(\phi_{\mathfrak{C}}\circ (G^{-1},F_1^{-1})\) are continuous and hence, by (a), measure-inducing. The last equality follows from \eqref{defquexpop}.

In the case that for $i \in \N$ \(F_i\) is continuous (but not \(F_i^{-1}\)), approximate \(F_i\) pointwise by a sequence \((F_{i,n})_{n\in \N}\) of strictly increasing distribution functions supported on $\R_+$ with finite first moments such that $F_{i,n} \to F_i$, \(F_{i,n}\geq F_i\) pointwise and \(\int_{\R_+} x \de F_{i,n}(x)\to \int_{\R_+} x \de F_i(x)\) as \(n\to \infty\,.\) Note that for all \(i\in \{1,\ldots,d\}\,,\) the first moment of \(F_i\) exists by assumption.
We approximate $G$ pointwise by a sequence $(G_n)_{n \in \N}$ with \(G_n(x)\geq G(x)\) for all \(n\) and such that $G_n \rightarrow G$ pointwise for $n \rightarrow \infty$. Consider for \((U_1,\ldots,U_d)\sim M^2 \vee D^2 \vee \cdots \vee D^d\,,\) the random variables \(X_{n,i}:=F_{n,i}^{-1}(U_i)\) and \(X_i:=F_i^{-1}(U_i)\,.\) Then, by Scheffé's lemma,  we have that $X_{n,i}$ converge in $L^1$ to $X_i$ for all $i$. This implies also \(\sum_{i=1}^d X_{n,i} \to \sum_{i=1}^d X_i\) in \(L^1\) (see, e.g., \cite[Theorem 6.25]{Klenke-2020}) and thus for \(X_n:=(X_{n,1},\ldots,X_{n,d})\) and \(X:=(X_1,\ldots,X_d)\,,\) we obtain \(\mathfrak{C}(X_n)\to \mathfrak{C}(X)\) in \(L^1\,.\)

Since by Sklar's Theorem \(X_n\sim M^2\vee D^2\vee \cdots \vee D^d(F_{1,n},\ldots,F_{d,n})\) and \(X\sim M^2\vee D^2\vee \cdots \vee D^d(F_{1},\ldots,F_{d})\,,\) we then obtain that\begin{equation}
\label{eq_C_P_proof_1}
\begin{aligned}
\psi_{\mathfrak{C}}^{(F_1,\ldots,F_d)} &(M^2\vee D^2\vee \cdots \vee D^d)\\
&=\int \mathfrak{C}(x) \de (M^2\vee D^2\vee \cdots \vee D^d)(F_1(x_1),\ldots,F_d(x_d)) \\
&= \lim_{n\to \infty} \int \mathfrak{C}(x) \de (M^2\vee D^2\vee \cdots \vee D^d)(F_{1,n}(x_1),\ldots,F_{n,d}(x_d))
\end{aligned}
\end{equation}
Indeed, to see that \eqref{eq_C_P_proof_1} holds, note that the third equality is given by \eqref{defexpop}. Now, \eqref{eq_C_P_proof_1} implies
\begin{equation}
\label{eq_C_P_proof_1_split}
\begin{aligned}
\psi_{\mathfrak{C}}^{(F_1,\ldots,F_d)} (M^2\vee D^2\vee \cdots \vee D^d)&= \lim_{n\to \infty} \psi_{\mathfrak{C}}^{F_{1,n},\ldots,F_{d,n}} (M^2\vee D^2\vee \cdots \vee D^d) \leq \lim_{n\to \infty} \pi_{\phi_{\mathfrak{C}}}^{(G_n,F_{1,n})} (\widehat{Q_2})\\
&= \lim_{n\to \infty} \pi_{\phi_{\mathfrak{C}}} (\widehat{Q_2}\circ (G_n,F_{1,n}))= \pi_{\phi_{\mathfrak{C}}} (\widehat{Q_2}\circ (G,F_{1}))= \pi_{\phi_{\mathfrak{C}}}^{(G,F_{1})} (\widehat{Q_2}).
\end{aligned}
\end{equation}
Indeed, the inequality in \eqref{eq_C_P_proof_1_split} follows from the special case \eqref{eq_psi_pi_proof} where \(F_{i,n}^{-1}\) is continuous and integrable. The fourth and the last equality hold due to the notation in \eqref{defquexpop}, and the fifth equality is a consequence of the dominated convergence theorem using that \(\phi_{\mathfrak{C}}\) induces a positive measure and that \(F_{i,n}(x)\geq F_i(x)\) and \(G_n(x)\geq G(x)\) for all \(n\) and \(x\in \R\) as well as \(F_{i,n}(x)\to F_i(x),~G_n(x) \to G(x)\) for all \(x\in \R\) using the continuity of \(F_i\,,\) \(i\in \{1,\ldots,d\}\,\) and of $G$.
\end{proof}

\section*{Acknowledgments}
\noindent

We thank the editor, the associate editor and two anonymous reviewers for various constructive remarks that helped to significantly improve our paper. \\
Ariel Neufeld gratefully acknowledges  the financial support by the Nanyang Assistant Professorship Grant (NAP Grant) \emph{Machine Learning based Algorithms in Finance and Insurance}. 

\bibliography{literature}
\bibliographystyle{plain}

\appendix

\end{document}